\newmdenv[
linecolor=gray,
linewidth=3pt,
topline=false,
bottomline=false,
rightline=false,
rightmargin=-10pt,
leftmargin=0pt,
skipabove=\topsep,
skipbelow=\topsep
]{siderules}
\newtheorem{theorem}{Theorem}
\newtheorem{lemma}[theorem]{Lemma}
\newtheorem{claim}[theorem]{Claim}
\def\Tr{\text{Tr}}
\def\RR{\mathbb R}
\def\ZZ{\mathbb Z}
\def\SS{\mathbb S}
\def\cA{\mathcal A}
\def\cD{\mathcal D}
\def\cK{\mathcal K}
\def\cS{\mathcal S}
\def\cT{\mathcal T}
\def\cQ{\mathcal Q}
\def\cL{\mathcal L}
\def\cX{\mathcal X}
\def\b1{\mathbf 1}
\def\eps{\varepsilon}
\def\G {G=(V,E)}
\def\bareps{{\eps_s}}
\def \P {\textsc{Packing}}
\def \C {\textsc{Covering}}
\def \nP {\textsc{Norm-Packing}}
\def \nC {\textsc{Norm-Covering}}
\def \rP {\textsc{Rbst-Packing}}
\def \rC {\textsc{Rbst-Covering}}
\def \mnmx {\textsc{MinMax}}
\def \mxmn {\textsc{MaxMin}}
\newcommand{\poly}{\operatorname{poly}}
\newcommand{\polylog}{\operatorname{polylog}}
\newcommand{\argmin}{\operatorname{argmin}}
\newcommand{\argmax}{\operatorname{argmax}}
\newcommand{\diag}{\operatorname{diag}}
\newcommand{\supp}{\operatorname{supp}}
\newcommand{\im}{\operatorname{image}}
\newcommand{\nll}{\operatorname{null}}
\newtheorem{fact}{Fact}
\newtheorem{remark}{Remark}
\newtheorem{definition}{Definition}
\newcommand{\raf}[1]{(\ref{#1})}
\title{Finding Sparse Solutions for Packing and Covering Semidefinite Programs} 
\author{
	Khaled Elbassioni\thanks{Masdar Institute, Khalifa University of Science and Technology, P.O. Box 54224, Abu Dhabi, UAE;
		(kelbassioni@masdar.ac.ae)}
	\and
	Kazuhisa Makino\thanks{Research Institute for Mathematical Sciences (RIMS)
		Kyoto University, Kyoto 606-8502, Japan;
		(makino@kurims.kyoto-u.ac.jp)}
}
\date{}
\begin{document}
\maketitle

\begin{abstract}
Packing and covering semidefinite programs (SDPs) appear in natural relaxations of many combinatorial optimization problems as well as a number of other applications. Recently, several techniques were proposed, that utilize the particular structure of this class of problems, to obtain more efficient algorithms than those offered by general SDP solvers. For certain applications, such as those described in this paper, it may be desirable to obtain {\it sparse} dual solutions, i.e., those with support size (almost) independent of the number of primal constraints. In this paper, we give an algorithm that finds such solutions, which is an extension of a {\it logarithmic-potential} based algorithm of  Grigoriadis, Khachiyan, Porkolab and Villavicencio (SIAM Journal of Optimization 41 (2001)) for packing/covering linear programs.
\end{abstract}
\section{Introduction}\label{Intro}

\subsection{Packing and Covering SDPs}

We denote by $\SS^n$ the set of all $n\times n$ real symmetric matrices and by  $\SS^n_+\subseteq\SS^n$ the set of all $n\times n$ positive semidefinite matrices.
Consider the following pairs of {\it packing-covering} semidefinite programs (SDPs):

{\centering \hspace*{-18pt}
	\begin{minipage}[t]{.47\textwidth}
		\begin{alignat}{3}
		\label{packI}
		\tag{\P-I} \quad& \displaystyle z_I^* = \max\quad C\bullet X\\
		\text{s.t.}\quad & \displaystyle A_i\bullet X\leq b_i, \forall i\in [m]\nonumber\\
		\qquad &X\in\RR^{n\times n},~X\succeq  0\nonumber
		\end{alignat}
	\end{minipage}
	\,\,\, \rule[-14ex]{1pt}{14ex}
	\begin{minipage}[t]{0.47\textwidth}
		\begin{alignat}{3}
		\label{coverI}
		\tag{\C-I} \quad& \displaystyle z_I^* = \min\quad b^Ty\\
		\text{s.t.}\quad & \displaystyle \sum_{i=1}^my_iA_i\succeq C\nonumber\\
		\qquad &y\in\RR^m,~y\geq 0\nonumber
		\end{alignat}
\end{minipage}}

{\centering \hspace*{-18pt}
	\begin{minipage}[t]{.47\textwidth}
		\begin{alignat}{3}
		\label{coverII}
		\tag{\C-II} \quad& \displaystyle z_{II}^* = \min\quad C\bullet X\\
		\text{s.t.}\quad & \displaystyle A_i\bullet X\geq b_i, \forall i\in [m]\nonumber\\
		\qquad &X\in\RR^{n\times n},~X\succeq  0\nonumber
		\end{alignat}
	\end{minipage}
	\,\,\, \rule[-14ex]{1pt}{14ex}
	\begin{minipage}[t]{0.47\textwidth}
		\begin{alignat}{3}
		\label{packII}
		\tag{\P-II} \quad& \displaystyle z_{II}^* = \max\quad b^Ty\\
		\text{s.t.}\quad & \displaystyle \sum_{i=1}^my_iA_i\preceq C\nonumber\\
		\qquad &y\in\RR^m,~y\geq 0\nonumber
		\end{alignat}
\end{minipage}}

\noindent where $C,A_1,\ldots,A_m \in \SS_+^n$ are (non-zero) positive semidefinite matrices, and $b=(b_1,\ldots,b_n)^T\in\RR^m_+$ is a nonnegative vector.  In the above, $C\bullet X:=\Tr(CX)=\sum_{i=1}^n\sum_{j=1}^n c_{ij}x_{ij}$, and
"$\succeq$`` is the {\it L\"owner order} on matrices:  $A\succeq B$ if and only if $A-B$ is positive semidefinite. This type of SDPs arise in many applications, see, e.g. \cite{IPS11,IPS10} and the references therein. 

We will make the following assumption throughout the paper: 
\begin{itemize}
	\item[{\bf (A)}] $b_i>0$ and hence $b_i=1$ for all $i\in[m]$.
\end{itemize} 
It is known that, under assumption (A), {\it strong duality} holds for  problems \raf{packI}-\raf{coverI} (resp., \raf{packII}-\raf{coverII}) (see Appendix~\ref{normal} for details).

Let $\epsilon\in(0,1]$ be a given constant. We say that $(X,y)$ is an \textit{$\epsilon$-optimal} primal-dual solution for \raf{packI}-\raf{coverI} if $(X,y)$ is a primal-dual feasible pair such that 
\begin{align}\label{pdf-I}
C\bullet X\geq(1-\epsilon)b^Ty\ge (1-\epsilon)z^*_I.
\end{align}
Similarly, we  say that $(X,y)$ is an $\epsilon$-optimal primal-dual solution for \raf{packII}-\raf{coverII} if $(X,y)$ is a primal-dual feasible pair such that 
\begin{align}\label{pdf-II}
C\bullet X\leq(1+\epsilon)b^Ty\le (1+\epsilon)z^*_{II}.
\end{align}
  
Since in this paper we allow the number of constraints $m$ in \raf{packI} (resp.,  \raf{coverII}) to be {\it exponentially} (or even infinitely) large, we will assume the availability of the following {\it oracle}: 

\begin{description}
	\item[Max$(Y)$ (resp., Min$(Y)$)]: Given $Y\in\SS_+^n$, find $i\in\argmax_{i\in[m]}A_i\bullet Y$ (resp., $i\in\argmin_{i\in[m]}A_i\bullet Y$).
\end{description}	
Note that an {\it approximation} oracle computing the maximum (resp., minimum) above within a factor of $(1-\epsilon)$ (resp., $(1+\epsilon)$) is also sufficient for our purposes. 

A primal-dual solution $(X,y)$ to \raf{coverI} (resp., \raf{packII})  is said to be $\eta$-{\it sparse}, if the size of $\supp(y):=\{i\in[m]:y_i>0\}$ is at most $\eta$. Our objective in this paper is to develop {\it primal-dual} algorithms that find sparse $\epsilon$-optimal solutions for~\raf{packI}-\raf{coverI} and~\raf{packII}-\raf{coverII}.

\subsection{Reduction to Normalized Form}\label{normal-}
When $C=I=I_n$, the identity matrix in $\RR^{n\times n}$ and $b=\b1$, the vector of all ones in $\RR^m$, we say that the packing-covering SDPs are in {\it normalized} form:

{\centering \hspace*{-18pt}
	\begin{minipage}[t]{.47\textwidth}
		\begin{alignat}{3}
		\label{npackI}
		\tag{\nP-I} \quad& \displaystyle z_I^* = \max\quad I\bullet X\\
		\text{s.t.}\quad & \displaystyle A_i\bullet X\leq 1, \forall i\in [m]\nonumber\\
		\qquad &X\in\RR^{n\times n},~X\succeq  0\nonumber
		\end{alignat}
	\end{minipage}
	\,\,\, \rule[-14ex]{1pt}{14ex}
	\begin{minipage}[t]{0.47\textwidth}
		\begin{alignat}{3}
		\label{ncoverI}
		\tag{\nC-I} \quad& \displaystyle z_I^* = \min\quad \b1^Ty\\
		\text{s.t.}\quad & \displaystyle \sum_{i=1}^my_iA_i\succeq I\nonumber\\
		\qquad &y\in\RR^m,~y\geq 0.\nonumber
		\end{alignat}
\end{minipage}}

{\centering \hspace*{-18pt}
	\begin{minipage}[t]{.47\textwidth}
		\begin{alignat}{3}
		\label{ncoverII}
		\tag{\nC-II} \quad& \displaystyle z_{II}^* = \min\quad I\bullet X\\
		\text{s.t.}\quad & \displaystyle A_i\bullet X\geq 1, \forall i\in [m]\nonumber\\
		\qquad &X\in\RR^{n\times n},~X\succeq  0\nonumber
		\end{alignat}
	\end{minipage}
	\,\,\, \rule[-14ex]{1pt}{14ex}
	\begin{minipage}[t]{0.47\textwidth}
		\begin{alignat}{3}
		\label{npackII}
		\tag{\nP-II} \quad& \displaystyle z_{II}^* = \max\quad \b1^Ty\\
		\text{s.t.}\quad & \displaystyle \sum_{i=1}^my_iA_i\preceq I\nonumber\\
		\qquad &y\in\RR^m,~y\geq 0.\nonumber
		\end{alignat}
\end{minipage}}

\medskip

\noindent In the appendix, we show that, at the loss of a factor of $(1+\epsilon)$ in the objective, any pair of packing-covering SDPs of the form~\raf{packI}-\raf{coverI} can be brought in $O(n^3)$, increasing the oracle time only by $O(n^\omega)$, where $\omega$ is the exponent of matrix multiplication, to the normalized form~\raf{npackI}-\raf{ncoverI}, under the following assumption:

\begin{itemize}
	\item[\bf (B-I)] There exist $r$ matrices, say $A_{1},\ldots,A_{r}$, such that $\bar A:=\sum_{i=1}^rA_{i}\succ 0$. In particular, $\Tr(X)\le \tau:=\frac{r}{\lambda_{\min}(\bar A)}$ for any optimal solution $X$ for \raf{packI}.
\end{itemize}


Similarly, we show in the appendix (some of the results are reproduced with simplifications from \cite{JY11}) that, at the loss of a factor of $(1+\epsilon)$ in the objective, any pair of packing-covering SDPs of the form~\raf{packII}-\raf{coverII} can be brought in $O(n^3)$ time, increasing the oracle time only by $O(n^\omega)$, to the normalized form~\raf{npackII}-\raf{ncoverII}. Moreover, we may assume in this normalized form that 
\begin{itemize}
	\item[({\bf B-II})] $\lambda_{\min}(A_i)=\Omega\big
	(\frac{\epsilon}{n}\cdot\min_{i'}\lambda_{\max}(A_{i'})\big)$ for all $i\in[m]$,
\end{itemize} 
where, for a positive semidefinite matrix $B\in\SS_+^{n\times n}$, we denote by $\{\lambda_j(B):~j=1,\ldots,n\}$ the eigenvalues of  $B$, and by $\lambda_{\min}(B)$ and $\lambda_{\max}(B)$ the minimum and maximum eigenvalues of $B$, respectively. 
With an additional $O(mn^2)$ time, we may also assume that:
\begin{itemize}
	\item[({\bf B-II$'$})] $\frac{\lambda_{\max}(A_i)}{\lambda_{\min}(A_i)}= O\big
	(\frac{n^2}{\epsilon^2}\big)$ for all $i\in[m]$.
\end{itemize}

Thus, from now on we focus on the normalized problems.

\subsection{Main Result and Related Work}

Problems~\raf{packI}-\raf{coverI} and \raf{packII}-\raf{coverII} can be solved using general SDP solvers, such as interior-point methods: for instance, the barrier method (see, e.g., \cite{NN94}) can compute a solution, within an {\it additive} error of $\epsilon$ from the optimal, in time $O(\sqrt{n}m(n^3+mn^2+m^2)\log\frac{1}{\epsilon})$ (see also \cite{A95,VB96}). 
However, due to the special nature of  \raf{packI}-\raf{coverI} and \raf{packII}-\raf{coverII}, better algorithms can be obtained. Most of the improvements are obtained by using {\it first order methods} \cite{AHK05,AK07,AK16,ALO16,GH16,IPS11,JY11,JY12,KL96,N07,PT12,PTZ16}, or second order methods \cite{IPS05,IPS10}. In general, we can classify these algorithms according to whether they are:  
\begin{itemize}
\item[(I)] {\it width-independent}: the running time of the algorithm depends {\it polynomially} on the bit length of the input; for example, in the of case of ~\raf{packI}-\raf{coverI}, the running time is $\poly(n,m,\cL,\log \tau,\frac{1}{\epsilon})$, where $\cL$ is the maximum bit length needed to represent any number in the input; on the other hand, the running time of a width-dependent algorithm will depend polynomially on a``width parameter'' $\rho$, which is polynomial in $\cL$ and $\tau$;

\item[(II)] {\it parallel}: the algorithm takes $\polylog(n,m,\cL,\log \tau)\cdot\poly(\frac{1}{\epsilon})$ time, on a $\poly(n,m,\cL,\log \tau,\frac{1}{\epsilon})$ number of processors;

\item[(III)] {\it output sparse solutions}: the algorithm outputs an $\eta$-sparse solution to \raf{coverI} (resp., \raf{packII}), for $\eta=\poly(n,\log m,\cL,\log\tau,\frac{1}{\epsilon})$ (resp., $\eta=\poly(n,\log m,\cL,\frac{1}{\epsilon})$), where $\tau$ is a parameter that bounds the trace of any optimal solution $X$ (see Section~\ref{normal-} for details);

\item[(IV)] {\it oracle-based}: the only access of the algorithm to the matrices $A_1,\ldots,A_m$ is via the maximization/minimization oracle, and hence the running time is independent of $m$.
\end{itemize}
Table~\ref{t1} below gives a summary\footnote{We provide rough estimates of the bounds, as some of them are not stated explicitly in the corresponding paper in terms of the parameters we consider here.} of the most relevant results together with their classifications, according to the four criteria described above. We note that almost all these algorithms for packing/covering SDP's are generalizations of similar algorithms for packing/covering linear programs (LPs), and most of them are essentially based on an {\it exponential potential function} in the form of {\it scalar exponentiation}, e.g., \cite{AHK05,KL96}, or {\it matrix exponentiation} \cite{AK07,AK16,ALO16,JY12,IPS11}. For instance, several of these results use the scalar or matrix versions of the {\it multiplicative weights updates} (MWU) method (see, e.g., \cite{AHK06}), which are extensions of similar methods for packing/covering LPs \cite{GK07,GK95,Y01,PST91}.

In \cite{GKPV01}, a different type of algorithm was given for covering LPs (indeed, more generally, for a class of concave covering inequalities) based on a {\it logarithmic} potential function. In this paper, we show that this approach can be extended to provide sparse solutions for both versions of packing and covering SDPs.

\begin{table}
	\centering
	\begin{threeparttable}
		\label{t1}
		\caption{Different Algorithms for Packing/covering SDPs}
		{\scriptsize	\begin{tabular}{|l|l|l|l|l|c|c|c|c|}
				\hline
				Paper & Problem & Technique &Most Expensive & \# Iterations & Width- &Parallel &Sparse &Oracle-\\
				&  &  & Operation &  & indep. & & &based\\\hline\hline
				\cite{AHK05,KL96}& \raf{packI} & MWU & $\max/\min$ eigenvalue& $O(\frac{\rho\log m}{\epsilon^2})$ & No & No & No\tnote{$*$}& No
				\\
				& \raf{coverII}&  &  of a PSD matrix $\tilde O(\frac{n^2}{\epsilon})$&  &  &  & &\\ \hline
				\cite{AK16}& \raf{packI} & Matrix MWU & Matrix exponentiation  & $O(\frac{\rho^2\tau^2\log n}{\epsilon^2 (z^*_I)^2})$ &  No & No & No\tnote{$*$}& Yes\\
				& \raf{coverII}&  &  $O(n^3)$&  &  &  & &\\ \hline
				\cite{IPS05,IPS11}& \raf{packI} & Nesterov's smoothing & Matrix exponentiation  & $O(\frac{\tau\log m}{\epsilon})$ & No & No & No& No\\
				& &   technique \cite{N05,N07}&  $O(n^3)$&  & & &  & \\ \hline
				\cite{IPS10}& \raf{coverII} & Nesterov's smoothing & $\min$ eigenvalue   of & $ O(\frac{\rho^2\log (nm)}{\epsilon})$ & No & No & No& No\\
				& &   technique \cite{N05,N07}& a non PSD matrix $O(n^3)$&&  &  &  & \\ \hline
				\cite{JY11}& \raf{packI}\&& MWU  & eigenvalue & $ O(\frac{\log^{13}n\log m}{\epsilon^{13}})$ & Yes & Yes & No &No\\
				& \raf{coverII}  &   technique \cite{N05,N07}& decomposition $O(n^3)$&  &  &  & &\\ \hline
				\cite{PT12,PTZ16}& \raf{packII}\&& Matrix MWU & Matrix exponentiation  & $O(\frac{\log^{3} m}{\epsilon^{3}})$ & Yes & Yes & No & No\\
				& \raf{coverII}&  &  $O(n^3)$&  &  &  & & \\ \hline
				\cite{ALO16}& \raf{packI}\&& Gradient Descent $+$ & Matrix exponentiation  & $O(\frac{\log^{2} (mn)\log\frac1\epsilon}{\epsilon^{2}})$ &  Yes & Yes & No & No\\
				& \raf{coverII}&  Mirror Descent  & $O(n^3)$ & & &  & &\\ \hline
				This paper& \raf{packII}\&& Matrix MWU & Matrix exponentiation  & $O(\frac{n\log n}{\epsilon^{2}})$ & Yes & No & Yes &Yes\\
				Appendix~\ref{MMWU-covering}& \raf{coverII}&  &  $O(n^3)$&  &  &  & & \\ \hline
				This paper & \raf{packII} \&&Logarithmic & Matrix inversion & $O(n\log (n\cL\tau)+\frac{n}{\epsilon^2})$ & Yes & No &Yes& Yes\\
				& \raf{coverII}&potential \cite{GKPV01} & $O(n^{\omega})$ & & & & &\\
				&\raf{packII} \&& & &$O(n\log (n/\epsilon)+\frac{n}{\epsilon^2})$&&&&\\
				&  \raf{coverII}& & & &&&&\\ \hline
		\end{tabular}}
		\begin{tablenotes}
			\item[$*$] In fact, these algorithms find sparse solutions, in the sense that the dependence of the size of the support of the dual solution on $m$ is at most logarithmic; however,  the dependence of the size of the support on the bit length $\cL$ is not polynomial. 
		\end{tablenotes}
	\end{threeparttable}
\end{table}

As we can see from the table, among all the algorithms, the logarithmic-potential algorithm, presented in this paper, is the only one that produces sparse solutions, in the sense described above. We also show in Appendix~\ref{MMWU-covering} that a modified version of the matrix exponential MWU algorithm \cite{AK07} can yield sparse solutions for ~\raf{packII}-\raf{coverII}. However, the overall running time of this matrix MWU algorithm is larger by a factor of (roughly) $\Omega(n^{3-\omega})$ than that of the logarithmic-potential algorithm, where $\omega$ is the exponent of matrix multiplication. Moreover, we were not able to extend the matrix MWU algorithm to solve \raf{packI}-\raf{coverI} (in particular, it seems tricky to bound the number of iterations). 

A work that is also related to ours is the sparsification of graph Laplacians \cite{BSS14} and positive semidefinite sums \cite{SHS16}. Given matrices $A_1,\ldots,A_m\in\SS_+^n$ and $\epsilon>0$, it was shown in \cite{SHS16} that one can find, in $O\big(\frac{n}{\epsilon^2}(n^{\omega}+\cT)\big)$ time, a vector $y\in\RR^m_+$ with support size $O(\frac{n}\epsilon^2)$, such that $B\preceq\sum_iy_iA_i\preceq (1+\epsilon)B$, where $B:=\sum_iA_i$ and $\cT$ is the time taken by a single call to the minimization oracle Min$(Y)$ (for a not necessarily positive semidefinite matrix $Y$). An immediate corollary is that, given an $\epsilon$-optimal solution $y$ for \raf{coverI} (resp., \raf{packI}), one can find in $O\big(\frac{n}{\epsilon^2}(n^{\omega}+\cT)\big)$ time an $O(\epsilon)$-optimal solution $y'$ with support size  $O(\frac{n}{\epsilon^2})$. Interestingly, the algorithm in \cite{SHS16} (which is an extension for the rank-one version in \cite{BSS14}) uses the {\it barrier potential function} $\Phi'(x,F):=\Tr\big((H-xI)^{-1}\big)$ (resp., $\Phi'(x,H):=\Tr\big((xI-H)^{-1}\big)$), while in our algorithms (generalizing the potential function in \cite{GKPV01}) we use the logarithmic potential function $\Phi(x,H)=\ln x+\frac{\epsilon}{n}\ln\det\big(H-x I\big)=\ln x-\frac{\epsilon}{n}\int_x\Phi'(x,H)dx$  (resp., $\Phi(x,H)=\ln x-\frac{\epsilon}{n}\ln\det\big(x I-H\big)=\ln x-\frac{\epsilon}{n}\int_x\Phi'(x,H)dx$). 
Sparsification algorithms with better running times were recently obtained in \cite{ALO15,LS17}. Since the sparse solutions produced by our algorithms may have support size slightly more (by polylogarithmic factors) than $O(\frac{n}{\epsilon^2})$, we may use, in a post-processing step, the sparsfication algorithms, mentioned above, to convert our  solutions to ones with  support size $O(\frac{n^2}\epsilon)$, without increasing the overall asymptotic running time.

To motivate our algorithms, in Section~\ref{app}, we give two applications, mainly in robust optimization, that require finding sparse solutions for a packing/covering SDP.   

\section{A Logarithmic Potential Algorithm}\label{log-alg}

\subsection{Algorithm for~\raf{packI}-\raf{coverI}}\label{sec:log-pack-alg}

In this section we give an algorithm for finding a sparse $O(\epsilon)$-optimal primal-dual solution for~\raf{packI}-\raf{coverI}. 
\paragraph{High-level Idea of the Algorithm.}~
The idea of the algorithm is quite intuitive. It can be easily seen that problem~\raf{ncoverI} is equivalent to finding a convex combination of the $A_i$'s that maximizes the minimum eigenvalue, that is,  $\max_{y\in\RR_+^m:\b1^Ty=1}\lambda_{\min}(F(y))$, where $F(y):=\sum_{i=1}^m y_iA_i$, and $\b1$ is the $m$-dimensional vector of all ones. Since $\lambda_{\min}(F(y))$ is not a {\it smooth} function in $y$, it is more convenient to work with a smooth approximation of it, which is obtained by maximizing (over $x$) a {\it logarithmic potential function} $\Phi(x,F(y))$ that captures the constraints that each eigenvalue of $F(y)$ is at least $x$. The unique maximizer $x=\theta^*$ of $\Phi(x,F(y))$ defines a set  of ``weights" (these are the eigenvalues of the primal matrix $X$ computed in line~\ref{s3.-mnmxI} of the algorithm) such that the weighted average of the $\lambda_j(F(y))$'s is a very close approximation of $\lambda_{\min}(F(y))$. Thus, to maximize  this average (which is exactly $X\bullet F(y)$), we obtain a direction (line~\ref{s3-mnmxI}) along which $y$ is modified with an appropriate step size (line~\ref{s6-mnmxI}).   

For numbers $x\in\RR_+$ and $\delta\in(0,1)$,  a $\delta$-(lower) approximation $x_\delta$ of $x$ is a number such that  $(1-\delta)x\le x_\delta<x$. For $i\in[m]$, $\b1_i$ denotes the $i$th unit vector of dimension $m$. 

The algorithm is shown as Algorithm~\ref{log-pack-alg}. The main while-loop (step~\ref{s1.-mnmxI}) is embedded within a sequence of scaling phases, in which each phase starts from the vector $y(t)$ computed in the previous phase and uses double the accuracy. The algorithm stops when the scaled accuracy $\eps_s$ drops below the desired accuracy $\epsilon\in(0,1/2)$.


\begin{algorithm}[H]
	\SetAlgoLined
	$s \gets 0$; $\eps_0\gets \frac12$; $t\gets 0$; 	 $\nu(0) \gets 1$; $y(0) \gets\frac1r\sum_{i=1}^r\b1_i$\label{s1-mnmxI}\\
	\While{$\bareps>\epsilon$}{\label{s0.-mnmxI}
		$\delta_s\gets\frac{\eps_s^3}{32n}$\\
		\While{$\nu(t) >\bareps$}{ \label{s1.-mnmxI}
			$\theta(t)\gets\theta^*(t)_{\delta_s}$, where $\theta^*(t)$ is the smallest positive number root of the equation $\displaystyle\frac{\bareps\theta}{n}\Tr(F(y(t))-\theta I)^{-1}=1$ \label{s2-mnmxI}\\
			$X(t) \gets \displaystyle\frac{\bareps\theta(t)}{n}(F(y(t))-\theta(t) I)^{-1}$ 	/* Set the primal solution */ \label{s3.-mnmxI}\\
			$i(t) \gets\argmax_{i} A_i\bullet X(t)$ /* Call the maximization oracle */\label{s3-mnmxI}\\ 
			$\nu(t+1) \gets\displaystyle\frac{X(t)\bullet A_{i(t)} - X(t)\bullet  F(y(t))}{X(t)\bullet A_{i(t)}+X(t)\bullet F(y(t))}$ \label{s4-mnmxI}/*  Compute the error */\\
			$\tau(t+1) \gets \displaystyle\frac{\bareps \theta(t) \nu(t+1)}{4n(X(t)\bullet A_{i(t)}+X(t)\bullet F(y(t)))}$ \label{s5-mnmxI}/*  Compute the step size */\\
			$y(t+1) \gets (1-\tau(t+1))y(t)+\tau(t+1) \b1_{i(t)}$ 	/* Update the dual solution */ \label{s6-mnmxI}\\
			$t \gets t+1$}
		$\eps_{s+1}\gets\frac{\eps_{s}}2$\\
		$s\gets s+1$}
	$\hat X\gets \frac{(1-\eps_{s-1})X(t-1)}{(1+\eps_{s-1})^2\theta(t-1))}$; $\hat y\gets \frac{y(t-1)}{\theta(t-1)}$\label{so-mnmxI}\\
	\Return $(\hat X,\hat y,t)$
	\caption{Logarithmic-potential Algorithm for~\raf{packI}-\raf{coverI}}\label{log-pack-alg}
\end{algorithm}
\subsection{Analysis}\label{pack-analysis}
\paragraph{High-level Idea of the Analysis.}~
The proof of $\epsilon$-optimality follows easily from the stopping condition in line~\ref{s1.-mnmxI} of the algorithm, the definition of the ``approximation error" $\nu$ in line~\ref{s4-mnmxI}, and the fact that $X\bullet F(y)$ is a very close approximation of $\lambda_{\min}(F(y(t)))$. The main part of the proof is to bound the number of iterations in the inner while-loop (line~\ref{s1.-mnmxI}). This is done by using a {\it potential function argument}: we define the potential function $\Phi(t):=\Phi(\theta^*(t),F(y(t)))$ and show in Claim~\ref{cl7-mnmxI} that, in each iteration,  the choice of the step size in line~\ref{s5-mnmxI} guarantees that $\Phi(t)$ is increased substantially; on the other hand, by Claim~\ref{cl8-mnmxI}, the potential difference cannot be very large, and the two claims 
together imply that we cannot have many iterations.

\subsubsection{Some Preliminaries}
Up to Claim~\ref{cl10-mnmxI}, we fix a particular iteration $s$ of the outer while-loop in the algorithm.
For simplicity in the following, we will sometimes write $F:=F(y(t))$, $\theta:=\theta(t)$, $\theta^*:=\theta^*(t)$, $X:=X(t)$, $\hat F:=A_{i(t)}$, $\tau:=\tau(t+1)$, $\nu:=\nu(t+1)$, $F':=F(y(t+1))$, and $\theta':=\theta(t+1)$, when the meaning is clear from the context. 
For $H\succ 0$ and $x\in(0,\lambda_{\min}(H))$, define the {\it logarithmic potential function} \cite{GKPV01,NN94}:
\begin{align}\label{log-pot}
\Phi(x,H)=\ln x+\frac{\bareps}{n}\ln\det\big(H-x I\big).
\end{align}
Note that the term $\ln\det\big(H-x I\big)$ forces the value of $x$ to stay away from the ``boundary'' $\lambda_{\min}(H)$, while the term $\ln x$ pushes $x$ towards that boundary; hence, one would expect the maximizer of $\Phi(x,H)$ to be a good approximation of $\lambda_{\min}(H)$ (see Claim~\ref{cl2-mnmxI}).
\begin{claim}\label{cl0-mnmxI}
	If $F(y(t))\succ 0$, then $\theta^*(t)=\argmax_{0<x<\lambda_{\min}(F)}\Phi(x,F(y(t)))$ and $X(t)\succ0$.
\end{claim}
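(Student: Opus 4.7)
My plan is to exploit strict concavity of $\Phi(\cdot, F(y(t)))$ on $(0, \lambda_{\min}(F))$ together with blow-up of $\Phi$ at both endpoints, which yields a unique interior maximizer characterised by a first-order condition; I will then verify that this first-order condition is precisely the equation defining $\theta^*(t)$ in line~\ref{s2-mnmxI}.

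Writing $F := F(y(t))$ for brevity, Jacobi's formula gives $\frac{d}{dx}\ln\det(F - xI) = -\Tr((F - xI)^{-1})$, whence
\[
\frac{\partial \Phi}{\partial x}(x, F) = \frac{1}{x} - \frac{\bareps}{n}\Tr\big((F - xI)^{-1}\big), \qquad \frac{\partial^2 \Phi}{\partial x^2}(x, F) = -\frac{1}{x^2} - \frac{\bareps}{n}\Tr\big((F - xI)^{-2}\big),
\]
valid on $(0, \lambda_{\min}(F))$. The second expression is strictly negative, establishing strict concavity. Combined with $\Phi(x, F) \to -\infty$ as $x \to 0^+$ (via $\ln x$) and as $x \to \lambda_{\min}(F)^-$ (since $\det(F - xI) \to 0^+$), this yields a unique interior maximizer, which must satisfy the first-order condition $\frac{\bareps x}{n}\Tr((F - xI)^{-1}) = 1$ --- exactly the equation defining $\theta^*(t)$.

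To identify this critical point with the \emph{smallest} positive root, I would write $g(x) := \frac{\bareps x}{n}\Tr((F - xI)^{-1}) = \frac{\bareps}{n}\sum_{j=1}^n \frac{x}{\lambda_j(F) - x}$ and observe that each summand has derivative $\lambda_j(F)/(\lambda_j(F) - x)^2 > 0$ on $(0, \lambda_{\min}(F))$. Hence $g$ rises strictly monotonically from $0$ to $+\infty$ there and equals $1$ at exactly one point; any further positive root of $g(x) = 1$ must therefore lie above $\lambda_{\min}(F)$, so the smallest positive root coincides with the interior maximizer.

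For $X(t) \succ 0$, note that $\theta(t) = \theta^*(t)_{\delta_s}$ satisfies $0 < (1 - \delta_s)\theta^*(t) \le \theta(t) < \theta^*(t) < \lambda_{\min}(F)$ by definition of a $\delta_s$-approximation; thus $F - \theta(t) I \succ 0$, its inverse is positive definite, and multiplying by the positive scalar $\frac{\bareps \theta(t)}{n}$ preserves positive definiteness. I do not foresee a serious obstacle: the only delicate step is ruling out extraneous roots of $g(x) = 1$ outside $(0, \lambda_{\min}(F))$, which the monotonicity argument above handles cleanly.
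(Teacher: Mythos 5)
Your proof is correct and follows essentially the same route as the paper: both arguments compute the second derivative of $\Phi(\cdot,F)$ to establish strict concavity and then identify $\theta^*(t)$ via the first-order condition $\frac{\bareps x}{n}\Tr\big((F-xI)^{-1}\big)=1$, with $X(t)\succ 0$ following from $0<\theta(t)<\theta^*(t)<\lambda_{\min}(F)$. Your additional remarks on the boundary blow-up of $\Phi$ and on why the critical point in $(0,\lambda_{\min}(F))$ is indeed the \emph{smallest} positive root of $g(x)=1$ are welcome extra justifications that the paper leaves implicit.
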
	
\begin{proof}
	Note that 
	\begin{align*}
	\frac{d\Phi(x,F)}{dx}&=\frac{1}{x}-\frac{\bareps}{n}\Tr\big((F-x I)^{-1}\big)\quad\text{ and }\quad
	\frac{d^2\Phi(x,F)}{dx^2}=-\frac{1}{x^2}-\frac{\bareps}{n}\Tr\big((F-x I)^{-2}\big).
	\end{align*}
	Thus, if $F\succ 0$, then $\frac{d^2\Phi(x,F)}{dx^2}=-\frac{1}{x^2}-\frac{\bareps}{n}\sum_{j}\frac{1}{(\lambda_j(F)-x)^2}<0$ for all $x\in(0,\lambda_{\min}(F))$. Thus $ \Phi(x,F)$ is {\it strictly concave} in  $x\in(0,\lambda_{\min}(F))$ and hence has a unique maximizer defined by setting $\frac{d\Phi(x,F)}{dx}=0$, giving the definition $\theta^*(t)$ in step~\ref{s2-mnmxI}. Also, by definition of $X$ in
	step~\ref{s3.-mnmxI}, $\lambda_{\min}(X)=\frac{\bareps\theta}{n}(\lambda_{\min}(F)-\theta)^{-1}>0$ (as $\theta<\theta^*<\lambda_{\min}(F)$), implying that $X\succ 0$.
\end{proof}

For $x\in(0,\lambda_{\min}(F))$, let $g(x):=\displaystyle\frac{\bareps x}{n}\Tr(F-x I)^{-1}$.
The following claim shows that our choice of  ${\delta_s}$ guarantees that  $g(\theta)$ is a good approximation of $g(\theta^*)=1$.
\begin{claim}\label{cl0--mnmxI}
	$g(\theta(t))\in(1-\bareps,1)$.
\end{claim}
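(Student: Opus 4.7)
The claim has two parts: an upper bound $g(\theta(t))<1$ and a lower bound $g(\theta(t))>1-\bareps$.

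The upper bound is immediate from strict monotonicity of $g$. Writing $g(x)=\frac{\bareps}{n}\sum_{j=1}^n \frac{x}{\lambda_j(F)-x}$, each summand is strictly increasing in $x$ on $(0,\lambda_{\min}(F))$ because its derivative equals $\lambda_j(F)/(\lambda_j(F)-x)^2>0$ (using $F\succ 0$). Since $\theta(t)<\theta^*(t)$ by the definition of a $\delta_s$-approximation and $g(\theta^*(t))=1$ by the defining equation in line~\ref{s2-mnmxI}, we get $g(\theta(t))<1$.

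For the lower bound, I would factor $g(\theta)/g(\theta^*)$ into $(\theta/\theta^*)\cdot\bigl(\Tr(F-\theta I)^{-1}/\Tr(F-\theta^* I)^{-1}\bigr)$, bounding the two factors separately. The first factor satisfies $\theta/\theta^*\ge 1-\delta_s$ directly. For the trace ratio, the key estimate is a lower bound on $\lambda_{\min}(F)-\theta^*$: from $g(\theta^*)=1$ we have $\sum_j\frac{1}{\lambda_j(F)-\theta^*}=\frac{n}{\bareps\theta^*}$, and since every term is positive this implies
\[
\frac{1}{\lambda_{\min}(F)-\theta^*}\le\frac{n}{\bareps\theta^*},\qquad\text{i.e.,}\qquad \lambda_{\min}(F)-\theta^*\ge\frac{\bareps\theta^*}{n}.
\]

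With this in hand, for each $j$,
\[
\frac{\lambda_j(F)-\theta^*}{\lambda_j(F)-\theta}=1-\frac{\theta^*-\theta}{\lambda_j(F)-\theta}\ge 1-\frac{\delta_s\theta^*}{\lambda_{\min}(F)-\theta}\ge 1-\frac{\delta_s\theta^*}{\bareps\theta^*/n}=1-\frac{n\delta_s}{\bareps}=1-\frac{\eps_s^2}{32},
\]
using $\delta_s=\eps_s^3/(32n)$. Summing the resulting inequality $\frac{1}{\lambda_j(F)-\theta}\ge(1-\eps_s^2/32)\frac{1}{\lambda_j(F)-\theta^*}$ over $j$ gives $\Tr(F-\theta I)^{-1}\ge(1-\eps_s^2/32)\Tr(F-\theta^* I)^{-1}$. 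Combining,
\[
g(\theta)\ge (1-\delta_s)\bigl(1-\tfrac{\eps_s^2}{32}\bigr)g(\theta^*)\ge 1-\delta_s-\tfrac{\eps_s^2}{32}\ge 1-\tfrac{\eps_s}{32}-\tfrac{\eps_s}{32}>1-\bareps,
\]
where I used $\eps_s\le 1/2$ so $\delta_s\le\eps_s/32$.

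The only non-routine step is the lower bound $\lambda_{\min}(F)-\theta^*\ge\bareps\theta^*/n$; everything else is arithmetic. I expect this to be the main obstacle only in the sense of noticing that the defining equation $g(\theta^*)=1$ already encodes the needed gap between $\theta^*$ and the boundary $\lambda_{\min}(F)$.
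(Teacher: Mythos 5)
Your proof is correct, and it takes a genuinely different route from the paper's. Both proofs get the upper bound the same way (strict monotonicity of $g$ plus $\theta<\theta^*$). For the lower bound, the paper exploits the strict convexity of $g$: it uses the tangent-line inequality $g(\theta)\ge g(\theta^*)+(\theta-\theta^*)\,g'(\theta^*)$, evaluates $g'(\theta^*)$ explicitly, and then bounds the second-order term via the normalizing identity $\frac{\bareps}{n}\sum_j\frac{\theta^*}{\lambda_j(F)-\theta^*}=1$ together with $\sum_j x_j^2\le\bigl(\sum_j x_j\bigr)^2$ for nonnegative $x_j$. You instead factor $g(\theta)/g(\theta^*)$ and bound the trace ratio term by term, using the observation that the defining equation $g(\theta^*)=1$ already forces $\lambda_{\min}(F)-\theta^*\ge\bareps\theta^*/n$ (a fact the paper records separately, in the second half of Claim~\ref{cl2-mnmxI}, as a consequence of $g(\theta^*)=1$). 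The two routes isolate the same underlying mechanism --- the defining equation controls the gap to the spectral boundary --- but yours avoids the convexity/tangent step entirely and needs only a per-eigenvalue estimate, which arguably makes it a bit more transparent; the paper's version packages the same information as a single Cauchy--Schwarz-type step. The constants come out comparable, and both comfortably clear the $1-\bareps$ threshold given the choice $\delta_s=\eps_s^3/(32n)$.
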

\begin{proof}
	For $x\in(0,\lambda_{\min}(F))$, we have
	\begin{align}\label{g-}
	\frac{dg(x)}{dx}&=\frac{\bareps}{n}\sum_{j=1}^n\frac{1}{\lambda_j(F)-x}+\frac{\bareps x}{n}\sum_{j=1}^n\frac{1}{(\lambda_j(F)-x)^2}>0,\\
	\frac{d^2g(x)}{dx^2}&=\frac{2\bareps}{n}\sum_{j=1}^n\frac{1}{(\lambda_j(F)-x)^2}+\frac{2\bareps x}{n}\sum_{j=1}^n\frac{1}{(\lambda_j(F)-x)^3}>0.
	\end{align}	 
	Thus, $g(x)$ is monotone increasing and strictly convex in $x$.  As $\theta<\theta^*$, we have $g(\theta)<g(\theta^*)=1$. Moreover, by convexity,
	\begin{align}\label{g---}
	g(\theta)&\ge g(\theta^*)+(\theta-\theta^*)\left.\frac{dg(x)}{dx}\right|_{x=\theta^*}\nonumber\\
	&\ge 1-{\delta_s}\frac{\bareps\theta^*}{n}\sum_{j=1}^n\frac{1}{\lambda_j(F)-\theta^*}-{\delta_s}\frac{\bareps}{n}\sum_{j=1}^n\left(\frac{\theta^*}{\lambda_j(F)-\theta^*}\right)^2\nonumber\tag{$\because (1-\delta_s)\theta^*\le\theta$}\\
	&\ge 1-{\delta_s}-{\delta_s}\frac{\bareps}{n}\left(\sum_{j=1}^n\frac{\theta^*}{\lambda_j(F)-\theta^*}\right)^2\nonumber\tag{by defintition of $\theta^*$ and $\sum_jx_j^2\le\left(\sum_jx_j\right)^2$ for nonnengative $x_j$'s}\\
	&=1-{\delta_s}\left(1+\frac{n}{\bareps}\right)>1-\bareps.\tag{by defintition of $\theta^*$ and $\delta_s$}
	\end{align}	 
\end{proof}
The following two claim show that $\theta(t)\approx X(t) \bullet F(y(t))$ provides a good approximation for $\lambda_{\min}(F(y(t)))$.
\begin{claim}\label{cl2-mnmxI}
	$(1-\bareps)\lambda_{\min}(F(y(t)))<\theta(t)< \frac{\lambda_{\min}(F(y(t)))}{1+\bareps/n}$ and $\frac{\lambda_{\min}(F(y(t)))}{1+\bareps}\le\theta^*(t)\le \frac{\lambda_{\min}(F(y(t)))}{1+\bareps/n}$.
\end{claim}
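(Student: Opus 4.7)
The plan is to bound $\theta^*(t)$ first, using the monotonicity and explicit form of $g(x)=\frac{\bareps x}{n}\Tr(F-xI)^{-1}$ established in Claim~\ref{cl0--mnmxI}, and then transfer the bounds to $\theta(t)$ via the approximation guarantee $(1-\delta_s)\theta^*(t)\le\theta(t)<\theta^*(t)$ together with the choice $\delta_s=\bareps^3/(32n)$.

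For the upper bound $\theta^*(t)\le\lambda_{\min}(F)/(1+\bareps/n)$, I would plug $x=\lambda_{\min}(F)/(1+\bareps/n)$ into $g$ and keep only the term corresponding to the smallest eigenvalue: since $\lambda_{\min}(F)-x=\frac{\bareps}{n+\bareps}\lambda_{\min}(F)$, that single term already evaluates to $\frac{\bareps x}{n}\cdot\frac{n+\bareps}{\bareps\lambda_{\min}(F)}=1$. Since $g$ is monotone increasing on $(0,\lambda_{\min}(F))$ by \eqref{g-} and $g(\theta^*(t))=1$, this forces $\theta^*(t)\le\lambda_{\min}(F)/(1+\bareps/n)$. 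For the lower bound $\theta^*(t)\ge\lambda_{\min}(F)/(1+\bareps)$, I would plug $x=\lambda_{\min}(F)/(1+\bareps)$ into $g$ and use the crude bound $\lambda_j(F)-x\ge\lambda_{\min}(F)-x=\frac{\bareps}{1+\bareps}\lambda_{\min}(F)$ on every term, getting $g(x)\le\frac{\bareps x}{n}\cdot n\cdot\frac{1+\bareps}{\bareps\lambda_{\min}(F)}=\frac{x(1+\bareps)}{\lambda_{\min}(F)}=1$. Again by monotonicity this forces $\theta^*(t)\ge\lambda_{\min}(F)/(1+\bareps)$.

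The upper bound on $\theta(t)$ is then immediate from $\theta(t)<\theta^*(t)\le\lambda_{\min}(F)/(1+\bareps/n)$. For the lower bound $\theta(t)>(1-\bareps)\lambda_{\min}(F)$, combine $\theta(t)\ge(1-\delta_s)\theta^*(t)\ge\frac{1-\delta_s}{1+\bareps}\lambda_{\min}(F)$ with the observation that, since $\delta_s=\bareps^3/(32n)<\bareps^2$ for $\bareps\in(0,1/2)$, we have $1-\delta_s>1-\bareps^2=(1-\bareps)(1+\bareps)$, so $\frac{1-\delta_s}{1+\bareps}>1-\bareps$.

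I expect no real obstacle: the only subtle point is that the lower bound on $\theta(t)$ only holds because the chosen accuracy $\delta_s$ is strictly smaller than $\bareps^2$; this is precisely where the specific form $\delta_s=\bareps^3/(32n)$ in the algorithm is used (much weaker choices would also work here, but this sharper one is presumably needed in the later potential-function estimates such as Claim~\ref{cl7-mnmxI}). Note also that the argument implicitly uses $F(y(t))\succ 0$ so that $\lambda_{\min}(F)>0$, which is guaranteed by the initialization $y(0)=\frac{1}{r}\sum_{i=1}^r\mathbf{1}_i$ together with assumption \textbf{(B-I)} (and preserved by the convex update in line~\ref{s6-mnmxI}).
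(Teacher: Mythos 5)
Your proof is correct, but it takes a slightly different route from the paper's. The paper applies Claim~\ref{cl0--mnmxI} (that $g(\theta(t))\in(1-\bareps,1)$) directly to $\theta(t)$, sandwiches the middle quantity $\frac{\bareps\theta}{n}\sum_j\frac{1}{\lambda_j(F)-\theta}$ between $\frac{\bareps\theta}{n}\cdot\frac{1}{\lambda_{\min}(F)-\theta}$ and $\frac{\bareps\theta}{n}\cdot\frac{n}{\lambda_{\min}(F)-\theta}$, and solves the resulting algebraic inequalities for $\theta(t)$; the bounds on $\theta^*(t)$ are then stated to "follow similarly" using $g(\theta^*)=1$. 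You instead bound $\theta^*(t)$ first by a test-point/monotonicity argument (never invoking the conclusion of Claim~\ref{cl0--mnmxI}, only the monotonicity of $g$ established within its proof), and then transfer to $\theta(t)$ via the definition $(1-\delta_s)\theta^*\le\theta<\theta^*$ together with the check $\delta_s<\bareps^2$. Both arguments are sound and rest on the same facts; the paper's is a hair shorter given that Claim~\ref{cl0--mnmxI} is already available, while yours makes explicit that the $\theta^*$ bounds are the clean ones and the $\theta$ bounds degrade only because of the $\delta_s$-approximation, a structural point the paper leaves implicit.
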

\begin{proof}
	By Claim~\ref{cl0--mnmxI}, we have
	\begin{align}\label{eq3-mnmxI}
	1-\bareps< \frac{\bareps \theta(t)}{n}\sum_{j=1}^n \frac{1}{\lambda_j(F)-\theta(t)} < 1.
	\end{align}
	The middle term in \raf{eq3-mnmxI} is at least $\frac{\bareps \theta(t)}{n}\frac{1}{\lambda_{\min}(F)-\theta(t)}$ and at most $\frac{\bareps \theta(t)}{n}\frac{n}{\lambda_{\min}(F)-\theta(t)}$, which implies the claim for $\theta(t)$. The claim for $\theta^*(t)$ follows similarly. 
\end{proof}

%
\begin{claim}\label{cl3-mnmxI}
	$\theta(t)< X(t) \bullet F(y(t)) < (1+\bareps)\theta(t)$.
\end{claim}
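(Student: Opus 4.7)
The plan is a direct algebraic computation using the definition of $X(t)$ from line~\ref{s3.-mnmxI} together with the estimate $g(\theta(t))\in(1-\bareps,1)$ proved in Claim~\ref{cl0--mnmxI}.

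First I would plug the definition of $X(t)$ into $X(t)\bullet F(y(t))$ and use the algebraic identity $F(F-\theta I)^{-1}=I+\theta(F-\theta I)^{-1}$, which follows from writing $F=(F-\theta I)+\theta I$. Taking traces and recalling that $g(\theta)=\frac{\bareps\theta}{n}\Tr(F-\theta I)^{-1}$, this gives
\begin{align*}
X(t)\bullet F(y(t))
&=\frac{\bareps\theta}{n}\Tr\bigl(F(F-\theta I)^{-1}\bigr)
=\frac{\bareps\theta}{n}\bigl(n+\theta\,\Tr(F-\theta I)^{-1}\bigr)\\
&=\bareps\,\theta+\theta\cdot g(\theta)
=\theta\bigl(\bareps+g(\theta)\bigr).
\end{align*}

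Then I would invoke Claim~\ref{cl0--mnmxI}, which gives $g(\theta(t))\in(1-\bareps,1)$, so that $\bareps+g(\theta(t))\in(1,\,1+\bareps)$. Substituting this back yields $\theta(t)<X(t)\bullet F(y(t))<(1+\bareps)\,\theta(t)$, as claimed. There is no real obstacle here: the entire proof is a one-line identity followed by the invocation of the already established bound on $g(\theta(t))$.
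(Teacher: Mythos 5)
Your proof is correct and matches the paper's argument: the paper uses $(F-\theta I)X=\frac{\bareps\theta}{n}I$ to write $X\bullet F=\bareps\theta+\theta\Tr(X)$ with $\Tr(X)=g(\theta)$, which is exactly your identity $F(F-\theta I)^{-1}=I+\theta(F-\theta I)^{-1}$ after multiplying through by $\frac{\bareps\theta}{n}$ and taking traces. Both then conclude via Claim~\ref{cl0--mnmxI}.
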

\begin{proof}
	By the definition of $X$, we have $(F-\theta I)X=\frac{\bareps \theta}{n}I$. It follows from Claim~\ref{cl0--mnmxI} that
	\begin{align*}
	X\bullet F &= \frac{\bareps \theta}{n}\Tr(I)+\theta\Tr(X)\in\big(\bareps+(1-\bareps,1)\big)\theta=\big(\theta,(1+\bareps)\theta\big).
	\end{align*}
\end{proof}

\begin{claim}\label{cl5.-mnmxI}
	$\b1^Ty(t)=1$.
\end{claim}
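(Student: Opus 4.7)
The plan is a straightforward induction on $t$, exploiting the fact that the update rule in line~\ref{s6-mnmxI} is a convex combination of $y(t)$ and a unit vector, with weights $(1-\tau(t+1))$ and $\tau(t+1)$ that sum to $1$.

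For the base case $t=0$, I would appeal directly to the initialization in line~\ref{s1-mnmxI}: $y(0) = \frac{1}{r}\sum_{i=1}^{r}\mathbf{1}_i$, so $\mathbf{1}^T y(0) = \frac{1}{r}\cdot r = 1$. (Here I am using assumption~(B-I) to guarantee that $r\geq 1$ matrices have been selected so that the initialization is well-defined.)

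For the inductive step, suppose $\mathbf{1}^T y(t) = 1$. Then, using the update rule in line~\ref{s6-mnmxI},
\[
\mathbf{1}^T y(t+1) = (1-\tau(t+1))\,\mathbf{1}^T y(t) + \tau(t+1)\,\mathbf{1}^T \mathbf{1}_{i(t)} = (1-\tau(t+1)) + \tau(t+1) = 1,
\]
which completes the induction.

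There is no real obstacle here; the only implicit point worth noting (which would be used later in the analysis) is that $\tau(t+1)\in[0,1]$, so that $y(t+1)$ remains in the probability simplex and not merely on the hyperplane $\{y:\mathbf{1}^T y =1\}$. This positivity is immediate from line~\ref{s5-mnmxI} once one checks that $\nu(t+1)\geq 0$ (a consequence of $i(t)$ being the maximizer in line~\ref{s3-mnmxI}, so $X(t)\bullet A_{i(t)} \geq X(t)\bullet F(y(t))$), but strictly speaking this need not be invoked to prove the stated equality $\mathbf{1}^T y(t)=1$.
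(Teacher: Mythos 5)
Your induction is exactly the argument the paper has in mind (the paper just states it as "immediate from the initialization and the update"), and your observation that $\tau\in[0,1]$ is not needed for the affine identity is correct. This matches the paper's approach; nothing further to add.
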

\begin{proof} 
	This is immediate from the initialization of $y(0)$ in step~\ref{s1-mnmxI} and the update of $y(t+1)$ in step~\ref{s6-mnmxI} of the algorithm.
\end{proof}	

\begin{claim}\label{cl4-mnmxI}
	For all  iterations $t$, except possibly the last, $\nu(t+1),\tau(t+1) \in(0,1)$.
\end{claim}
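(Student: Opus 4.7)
The plan is to establish $\nu(t+1), \tau(t+1) \in (0,1)$ by maintaining, as a side induction on $t$, the invariant $F(y(t)) \succ 0$. The base case $t=0$ holds since $y(0) = \frac{1}{r}\sum_{i=1}^r \b1_i$ gives $F(y(0)) = \frac{1}{r}\bar A \succ 0$ by assumption (B-I). Assuming $F(y(t)) \succ 0$, Claim~\ref{cl0-mnmxI} guarantees $X(t) \succ 0$ and that $\theta(t)>0$ is well-defined, so all quantities appearing in the update are meaningful.

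Writing $a := X(t) \bullet A_{i(t)}$ and $b := X(t) \bullet F(y(t))$, I would first show $\nu(t+1) = (a-b)/(a+b) \in [0,1)$. By Claim~\ref{cl5.-mnmxI}, $b = \sum_i y_i(t)(X(t) \bullet A_i)$ is a convex combination of the values $X(t) \bullet A_i$, so $b \le \max_i X(t) \bullet A_i = a$, yielding $\nu(t+1) \ge 0$. Moreover $b = \Tr\bigl(X(t)^{1/2} F(y(t)) X(t)^{1/2}\bigr) > 0$ because $X(t) \succ 0$ and $F(y(t)) \succ 0$, hence $a+b > a-b$, i.e., $\nu(t+1) < 1$. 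A ``non-last iteration'' is by definition one after which the while-loop condition $\nu(t+1) > \bareps$ still holds, so for every such iteration $\nu(t+1) > \bareps > 0$, giving $\nu(t+1) \in (0,1)$.

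For $\tau(t+1)$, positivity is immediate from $\bareps, \theta(t), \nu(t+1) > 0$ and the positive denominator. For the upper bound, Claim~\ref{cl3-mnmxI} gives $b > \theta(t)$, and combining $\nu(t+1) < 1$, $a+b > b$, and $\bareps \le \eps_0 = 1/2$:
\[
\tau(t+1) \;=\; \frac{\bareps\,\theta(t)\,\nu(t+1)}{4n(a+b)} \;<\; \frac{\bareps\,\theta(t)}{4n\, b} \;<\; \frac{\bareps}{4n} \;\le\; \frac{1}{8n} \;<\; 1.
\]
Since $0 < \tau(t+1) < 1$, the update $F(y(t+1)) = (1-\tau(t+1))\,F(y(t)) + \tau(t+1)\,A_{i(t)}$ is a strictly positive combination of the positive definite $F(y(t))$ and the positive semidefinite $A_{i(t)}$, so $F(y(t+1)) \succ 0$, closing the side induction.

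The main obstacle is the entangled nature of this induction: the bound $\tau(t+1) < 1$ needed to preserve $F(y(t+1)) \succ 0$ itself relies on Claim~\ref{cl3-mnmxI}, which presumes $F(y(t)) \succ 0$. One therefore has to perform the implications in the order \emph{PD-invariant at $t$} $\Rightarrow$ \emph{bounds on $\nu(t+1),\tau(t+1)$} $\Rightarrow$ \emph{PD-invariant at $t+1$} at each step, rather than attempting to establish $F(y(t)) \succ 0$ for all $t$ independently of the claim.
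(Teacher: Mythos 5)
Your proof is correct and follows essentially the same argument as the paper: $\b1^T y(t)=1$ (Claim~\ref{cl5.-mnmxI}) gives $\nu(t+1)\ge 0$, strict positivity of $X\bullet F$ gives $\nu(t+1)<1$, and Claim~\ref{cl3-mnmxI} yields $\tau(t+1)<\bareps/(4n)<1$. The one thing you do differently — and more carefully than the paper — is making explicit the joint induction with the positive-definiteness invariant $F(y(t))\succ 0$: the paper handles that invariant in the separate Claim~\ref{cl5-mnmxI}, whose proof in turn cites the present claim, so the two must in fact be proved together exactly as you do.
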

\begin{proof}
	$\nu(t+1)\ge 0$ as $X\bullet A_{i(t)}\ge X\bullet F$ by Claim~\ref{cl5.-mnmxI}, and  except  possibly for the last iteration, we have $\nu(t+1)>0$.
	Also, $\nu(t+1)\le 1$ by the non-negativity of $X\bullet A_{i(t)}$ and $X\bullet F$, while $\nu(t+1)=1$ implies that $X\bullet F=0$, in contradiction to Claim~\ref{cl3-mnmxI}. 
	
	Note that the definition of $\nu(t+1)$ implies that 
	\begin{align*}
	\tau(t+1)=\frac{\bareps\theta\nu(t+1)(1-\nu(t+1))}{8n X(t)\bullet F(y(t))},
	\end{align*}  and hence,  $\tau(t+1)>0$. 	Moreover, by Claim~\ref{cl3-mnmxI}, $\tau(t+1)<\frac{\bareps}{8n}<1$.   
\end{proof}

\begin{claim}\label{cl5-mnmxI}
	$F(y(t))\succ 0$.
\end{claim}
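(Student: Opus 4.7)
The plan is a straightforward induction on the iteration index $t$, using the multiplicative-update form of line~\ref{s6-mnmxI} together with the positive-semidefiniteness of the $A_i$'s.

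For the base case $t=0$, the initialization in line~\ref{s1-mnmxI} gives $y(0)=\frac{1}{r}\sum_{i=1}^{r}\b1_i$, hence
\[
F(y(0))=\frac{1}{r}\sum_{i=1}^{r}A_i=\frac{1}{r}\bar A.
\]
By assumption (B-I), $\bar A\succ 0$, so $F(y(0))\succ 0$.

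For the inductive step, assume $F(y(t))\succ 0$. Then Claim~\ref{cl0-mnmxI} applies and yields $X(t)\succ 0$, so that Claim~\ref{cl3-mnmxI} and Claim~\ref{cl5.-mnmxI} hold, which in turn feed Claim~\ref{cl4-mnmxI} to give $\tau(t+1)\in(0,1)$ (assuming $t$ is not the last iteration; otherwise there is nothing to show). Now the update in line~\ref{s6-mnmxI} reads
\[
F(y(t+1)) \;=\; (1-\tau(t+1))\,F(y(t)) \;+\; \tau(t+1)\,A_{i(t)}.
\]
Since $1-\tau(t+1)>0$ and $F(y(t))\succ 0$ by the inductive hypothesis, the first term is positive definite; since $\tau(t+1)>0$ and $A_{i(t)}\succeq 0$ by assumption, the second term is positive semidefinite. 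A sum of a positive definite matrix and a positive semidefinite matrix is positive definite, which closes the induction.

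I do not anticipate a real obstacle here: the only subtle point is that Claims \ref{cl0-mnmxI}--\ref{cl4-mnmxI} are themselves conditional on $F(y(t))\succ 0$, so one has to be careful to carry out the induction in the correct order, i.e.\ first use $F(y(t))\succ 0$ to justify the hypotheses needed in line~\ref{s6-mnmxI}, and only then invoke the update formula to conclude $F(y(t+1))\succ 0$. Once this order is respected, the proof is immediate.
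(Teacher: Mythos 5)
Your proof is correct and follows essentially the same inductive argument as the paper: base case from assumption (B-I), inductive step from the convex-combination update $F(y(t+1))=(1-\tau)F(y(t))+\tau A_{i(t)}$. Your added remark about ordering the dependencies carefully (so that $\tau\in(0,1)$ is justified before invoking the update) is a correct and worthwhile clarification that the paper leaves implicit.
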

\begin{proof} 
	This follows by induction on $t'=0,1,\ldots,t$. For $t'=0$, the claim follows from assumption (B-I), which implies that $F(y(0))=\frac1r\bar A\succ0$. Assume now  that $F=F(y(t))\succ 0$. Then for $F'=F(y(t+1))$, we have by step~\ref{s6-mnmxI} of the algorithm that $F'=(1-\tau)F+\tau A_{i(t)}\succ 0$.
\end{proof}	

\begin{claim}\label{cl5-----mnmxI}
	$(F-\theta^*I)^{-1}=\left(\frac{\bareps \theta}{n}I-(\theta^*-\theta)X\right)^{-1}X$.
\end{claim}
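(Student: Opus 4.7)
The plan is to exploit the fact that $X(t)$ is defined as a scalar multiple of the resolvent $(F-\theta I)^{-1}$, so that $X$ and $F$ commute (they are simultaneously diagonalized in the eigenbasis of $F$). Once commutativity is in hand, the claimed identity reduces to a one-line algebraic rearrangement of the defining relation $(F-\theta I)X=\frac{\bareps\theta}{n}I$ (which is just step~\ref{s3.-mnmxI} of the algorithm rewritten).

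Concretely, I would first observe that from $X=\frac{\bareps\theta}{n}(F-\theta I)^{-1}$ we get the identity $(F-\theta I)X=\frac{\bareps\theta}{n}I$, and since $F$ and $(F-\theta I)^{-1}$ are simultaneously diagonalizable (Claim~\ref{cl5-mnmxI} ensures $F\succ 0$, so $(F-\theta I)^{-1}$ is well-defined on the same eigenbasis), we also have $XF=FX=\theta X+\frac{\bareps\theta}{n}I$. Subtracting $\theta^* X$ from both sides rearranges to
\[
X(F-\theta^* I)=\frac{\bareps\theta}{n}I-(\theta^*-\theta)X.
\]

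Finally, I would note that both factors on the left are invertible: $(F-\theta^*I)$ is invertible because $\theta^*<\lambda_{\min}(F)$ by Claim~\ref{cl0-mnmxI}, and $X\succ 0$ by the same claim, so $X$ is invertible. Hence their product, which by the above equation equals $\frac{\bareps\theta}{n}I-(\theta^*-\theta)X$, is also invertible. Inverting both sides of the displayed identity gives $(F-\theta^* I)^{-1}X^{-1}=\bigl(\frac{\bareps\theta}{n}I-(\theta^*-\theta)X\bigr)^{-1}$, and multiplying on the right by $X$ yields the claim.

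There is no real obstacle here; the only subtlety worth highlighting is the commutativity of $X$ and $F$, which is what allows one to swap $FX$ and $XF$ freely and avoid any noncommutative complications in rearranging the resolvent identity.
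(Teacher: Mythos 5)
Your proof is correct and follows essentially the same route as the paper: both start from the defining relation $(F-\theta I)X=\frac{\bareps\theta}{n}I$, subtract $(\theta^*-\theta)X$ to obtain $\bigl(\frac{\bareps\theta}{n}I-(\theta^*-\theta)X\bigr)$ as a product involving $F-\theta^*I$ and $X$, and then invert. The only difference is cosmetic: the paper writes $(F-\theta^*I)X$ rather than $X(F-\theta^*I)$ and stops at $X=(F-\theta^*I)^{-1}\bigl(\frac{\bareps\theta}{n}I-(\theta^*-\theta)X\bigr)$, leaving the commutativity of $X$ with $\frac{\bareps\theta}{n}I-(\theta^*-\theta)X$ (needed to pass to the stated form) and the invertibility of that factor implicit, while you spell them out explicitly.
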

\begin{proof}			
	By definition of $X$, we have 	
	\begin{align*}
	(F-\theta^*I)X&=(F-\theta I)X-(\theta^*-\theta)X=\frac{\bareps\theta}{n}I-(\theta^*-\theta)X\\
	\therefore X&=(F-\theta^*I)^{-1}\left(\frac{\bareps\theta}{n}I-(\theta^*-\theta)X\right).
	\end{align*}
\end{proof}

\subsubsection{Number of Iterations}
Define $B=B(t):=\frac{n}{\bareps\theta}\left(\tau X^{1/2}(\hat F-F)X^{1/2}-(\theta^*-\theta)X\right)$.
\begin{claim}\label{cl5---mnmxI}
	$F' - \theta^* I =(F-\theta I)^{1/2}(I+B)(F-\theta I)^{1/2}$.
\end{claim}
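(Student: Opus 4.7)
}~
The plan is a direct algebraic manipulation: expand $F'$ using the update rule, and then factor $(F-\theta I)^{1/2}$ out of both sides, verifying that the remaining inner matrix coincides with $I+B$.

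First, from step~\ref{s6-mnmxI} of the algorithm, $F' = (1-\tau)F + \tau \hat F = F + \tau(\hat F - F)$. Hence
\[
F' - \theta^* I = (F-\theta I) + \tau(\hat F - F) - (\theta^*-\theta) I.
\]
Since $F\succ 0$ (Claim~\ref{cl5-mnmxI}) and $\theta<\theta^*<\lambda_{\min}(F)$ (Claim~\ref{cl2-mnmxI}), the matrix $F-\theta I$ is positive definite, so $(F-\theta I)^{1/2}$ is well defined. Factoring it out from the left and right of the above display, the identity we want to prove reduces to the assertion
\[
(F-\theta I)^{1/2} B (F-\theta I)^{1/2} = \tau(\hat F - F) - (\theta^*-\theta) I.
\]

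The key observation is that, by the definition of $X$ in step~\ref{s3.-mnmxI}, we have $X = \tfrac{\bareps\theta}{n}(F-\theta I)^{-1}$; in particular, $X$ and $F-\theta I$ commute and share the same eigenvectors. Consequently $X^{1/2} = \sqrt{\tfrac{\bareps\theta}{n}}\,(F-\theta I)^{-1/2}$, and therefore
\[
(F-\theta I)^{1/2} X^{1/2} = \sqrt{\tfrac{\bareps\theta}{n}}\, I, \qquad (F-\theta I)^{1/2} X (F-\theta I)^{1/2} = \tfrac{\bareps\theta}{n}\, I.
\]
Substituting the definition of $B$ and using these two identities yields
\[
(F-\theta I)^{1/2} B (F-\theta I)^{1/2} = \tfrac{n}{\bareps\theta}\Bigl[\tau \cdot \tfrac{\bareps\theta}{n}(\hat F - F) - (\theta^*-\theta)\cdot \tfrac{\bareps\theta}{n} I\Bigr] = \tau(\hat F - F) - (\theta^*-\theta) I,
\]
which is exactly what was needed. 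Adding $(F-\theta I)^{1/2} I (F-\theta I)^{1/2} = F - \theta I$ to both sides and then refactoring yields the claimed identity $F' - \theta^* I = (F-\theta I)^{1/2}(I+B)(F-\theta I)^{1/2}$.

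There is no real obstacle here beyond being careful with operator square roots; the one point worth emphasizing is the commutativity of $X$ with $F-\theta I$ (and hence with $F$), which is what justifies manipulating $X^{1/2}$ and $(F-\theta I)^{1/2}$ as if they were scalars. All remaining simplifications are routine.
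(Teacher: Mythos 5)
Your proof is correct and follows essentially the same route as the paper: expand $F'$ via the update rule, rewrite $F'-\theta^* I = (F-\theta I)+\tau(\hat F-F)-(\theta^*-\theta)I$, then factor out $(F-\theta I)^{1/2}$ on both sides using $X=\frac{\bareps\theta}{n}(F-\theta I)^{-1}$. The only difference is expository --- you spell out the commutativity of $X$ with $F-\theta I$ and verify $(F-\theta I)^{1/2}B(F-\theta I)^{1/2}=\tau(\hat F-F)-(\theta^*-\theta)I$ separately, where the paper performs the substitution in a single step.
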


\begin{proof}			
	By (the update) step~\ref{s6-mnmxI}, we have
	\begin{align*}
	F' - \theta^* I	&=  (1-\tau)F +\tau \hat F -\theta^* I\nonumber\\
	&= F-\theta I + \tau(\hat F -F)-(\theta^*-\theta)I\nonumber\\
	&= (F-\theta I)^{1/2}\left(I+\frac{n\tau}{\bareps \theta}X^{1/2}(\hat F-F)X^{1/2}-\frac{n}{\bareps \theta}(\theta^*-\theta)X\right)(F-\theta I)^{1/2} \tag{$\because X=\frac{\bareps \theta}{n}(F-\theta I)^{-1}$}.\nonumber
	\label{eq1-mnmxI}
	\end{align*}
\end{proof}

\begin{claim}\label{cl5----mnmxI}
	$\max_j\left|\lambda_j(B)\right|\le\frac12$.
\end{claim}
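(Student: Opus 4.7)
Since $B$ is symmetric, $\max_j|\lambda_j(B)|=\|B\|_{\mathrm{op}}$, so the plan is to split $B$ into two pieces and bound each in operator norm by the triangle inequality. Write
\begin{align*}
B \;=\; B_1 - B_2, \qquad B_1 \defeq \frac{n\tau}{\bareps\theta}\,X^{1/2}(\hat F-F)X^{1/2}, \qquad B_2 \defeq \frac{n(\theta^*-\theta)}{\bareps\theta}\,X.
\end{align*}

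\textbf{Bounding $\|B_1\|$.} I would exploit the fact that for any PSD matrix $M$ one has $\|M\|_{\mathrm{op}}\le \Tr(M)$. Since $\hat F=A_{i(t)}\succeq 0$, the matrix $X^{1/2}\hat F X^{1/2}$ is PSD, so its operator norm is at most $\Tr(X^{1/2}\hat F X^{1/2})=X\bullet \hat F$; similarly for $X^{1/2}FX^{1/2}$. Since $X^{1/2}(\hat F-F)X^{1/2}$ is sandwiched between $X^{1/2}\hat F X^{1/2}\succeq 0$ and $-X^{1/2}FX^{1/2}\preceq 0$, we get
\begin{align*}
\bigl\|X^{1/2}(\hat F-F)X^{1/2}\bigr\|_{\mathrm{op}} \;\le\; \max\{X\bullet\hat F,\,X\bullet F\} \;=\; X\bullet\hat F,
\end{align*}
where the last equality uses Claim~\ref{cl5.-mnmxI} (so $F=F(y(t))$ is a convex combination of the $A_i$'s) together with the choice $i(t)=\argmax_i A_i\bullet X$. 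Substituting the definition of $\tau=\tau(t+1)$ from step~\ref{s5-mnmxI}, the $\bareps\theta/n$ factors cancel and I obtain
\begin{align*}
\|B_1\|_{\mathrm{op}} \;\le\; \frac{\nu(t+1)\,X\bullet\hat F}{4(X\bullet\hat F+X\bullet F)} \;\le\; \frac{\nu(t+1)}{4} \;\le\; \frac14,
\end{align*}
using $\nu(t+1)\in(0,1)$ from Claim~\ref{cl4-mnmxI}.

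\textbf{Bounding $\|B_2\|$.} Here I need $\|X\|_{\mathrm{op}}$ and $(\theta^*-\theta)/\theta$. By Claim~\ref{cl2-mnmxI}, $\theta(1+\bareps/n)<\lambda_{\min}(F)$, hence $\lambda_{\min}(F)-\theta > \bareps\theta/n$, which together with $X=\frac{\bareps\theta}{n}(F-\theta I)^{-1}$ gives $\|X\|_{\mathrm{op}}<1$. Since $\theta=\theta^*_{\delta_s}$ is a $\delta_s$-lower approximation of $\theta^*$, $(1-\delta_s)\theta^*\le\theta$ yields $(\theta^*-\theta)/\theta\le \delta_s/(1-\delta_s)$. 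Plugging in $\delta_s=\bareps^3/(32n)$,
\begin{align*}
\|B_2\|_{\mathrm{op}} \;\le\; \frac{n}{\bareps}\cdot\frac{\delta_s}{1-\delta_s} \;=\; \frac{\bareps^2}{32(1-\delta_s)} \;<\; \frac{1}{32},
\end{align*}
using $\bareps\le 1/2$ and $\delta_s$ tiny.

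\textbf{Combining.} The triangle inequality gives $\|B\|_{\mathrm{op}}\le \|B_1\|_{\mathrm{op}}+\|B_2\|_{\mathrm{op}} < \tfrac14+\tfrac1{32} < \tfrac12$, as required.

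The main obstacle I anticipate is handling the $B_1$ term: one has to resist the temptation to bound $\|X^{1/2}(\hat F-F)X^{1/2}\|$ crudely by $\|X\|\,\|\hat F-F\|$, which loses too much because $\|\hat F\|$ is uncontrolled. The trick is to trade operator norm for trace via PSD-ness, which is exactly what pairs with the definition of the step size $\tau$ to absorb the $\bareps\theta/n$ factor.
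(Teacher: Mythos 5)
Your proof is correct and follows essentially the same approach as the paper's: split $B$ into the "update" piece and the "$\theta^*-\theta$" piece, bound the update piece using the PSD trace trick so that the $\bareps\theta/n$ factor in $\tau$ cancels, and bound the remainder via $\delta_s$. The only (minor) refinement you make is bounding $\|X^{1/2}(\hat F - F)X^{1/2}\|_{\mathrm{op}}$ by $\max\{X\bullet\hat F, X\bullet F\}$ via the L\"owner sandwich, whereas the paper simply sums the two traces $X\bullet\hat F + X\bullet F$; both land at $\nu/4$. Likewise you invoke $\|X\|_{\mathrm{op}}<1$ from Claim~\ref{cl2-mnmxI} rather than $\Tr(X)\le 1$ from Claim~\ref{cl0--mnmxI}, but these are interchangeable here.
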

\begin{proof} 
	By the definition of $B$, we have
	\begin{align*}
	\max_j\left|\lambda_j(B)\right|&=\frac{n}{\bareps\theta}\max_j\left|\lambda_j\left(\tau X^{1/2}(\hat F - F)X^{1/2}-(\theta^*-\theta)X\right)\right| \\
	&= \frac{n}{\bareps\theta}\max_{v:||v||=1} \left|v^T\left(\tau X^{1/2}(\hat F - F)X^{1/2}-(\theta^*-\theta)X\right)v\right|\\
	&\leq  \frac{n}{\bareps\theta}\left(\max_{v:||v||=1}\tau v^TX^{1/2}\hat FX^{1/2}v+\max_{v:||v||=1}\tau v^TX^{1/2}FX^{1/2}v+(\theta^*-\theta)\max_{v:||v||=1}v^TXv\right)\\
	&\le \frac{n\tau}{\bareps\theta}\left(\Tr(X^{1/2}\hat FX^{1/2})+\Tr(X^{1/2}FX^{1/2})\right)+\frac{n{\delta_s}}{(1-{\delta_s})\bareps}\tag{$\because F,\hat F\succeq0,~ |\Tr(X)\le 1$ and $\theta\ge(1-{\delta_s})\theta^*$}\\
	&= \frac{n\tau}{\bareps\theta}\left(X\bullet \hat F + X \bullet F\right)+\frac{n{\delta_s}}{(1-{\delta_s})\bareps}\\
	&=\frac{\nu}{4}+	\frac{\bareps^2 }{32(1- \bareps^3/(32n))} \tag{substituting $\tau$ and ${\delta_s}$}\\
	&< \frac{1}{2}\tag{using $\nu,\bareps\le 1$}.
	\end{align*}
\end{proof}
\begin{claim}\label{cl5--mnmxI}
	$\theta^*(t) < \lambda_{\min}(F(y(t+1)))$.
\end{claim}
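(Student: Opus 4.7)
The statement follows essentially immediately from the two preceding claims, so the plan is short.

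First I would invoke Claim~\ref{cl5---mnmxI} to obtain the congruence decomposition
\[
F(y(t+1)) - \theta^*(t) I \;=\; (F-\theta I)^{1/2}\,(I+B)\,(F-\theta I)^{1/2}.
\]
By Claim~\ref{cl5-mnmxI}, $F=F(y(t))\succ 0$, and by Claim~\ref{cl2-mnmxI} together with the definition $\theta = \theta^*_{\delta_s}$, we have $0 < \theta < \theta^* < \lambda_{\min}(F)$. Hence $F-\theta I \succ 0$, so its symmetric square root is well-defined and positive definite (in particular invertible).

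Next, I would use Claim~\ref{cl5----mnmxI}, which gives $\max_j|\lambda_j(B)|\le \tfrac12$. Since $B$ is symmetric, this is equivalent to $-\tfrac12 I \preceq B \preceq \tfrac12 I$, and therefore
\[
I + B \;\succeq\; \tfrac12 I \;\succ\; 0.
\]
The congruence transformation $M \mapsto (F-\theta I)^{1/2} M (F-\theta I)^{1/2}$ by a positive definite matrix preserves positive definiteness, so $F(y(t+1)) - \theta^*(t) I \succ 0$, which is exactly $\lambda_{\min}(F(y(t+1))) > \theta^*(t)$.

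There is no real obstacle here; all the nontrivial work is encapsulated in Claims~\ref{cl5---mnmxI} and \ref{cl5----mnmxI}. The only thing to be mildly careful about is checking the strictness of the inequality $\theta^* < \lambda_{\min}(F)$, which is needed to form $(F-\theta^*I)$'s context but also to ensure $F-\theta I \succ 0$: this strictness is guaranteed by the definition of $\theta^*$ as the root of $g(x)=1$, since $g(x)\to\infty$ as $x\uparrow \lambda_{\min}(F)$.
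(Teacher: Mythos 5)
Your proof is correct and follows essentially the same route as the paper: both arguments combine Claim~\ref{cl5---mnmxI} and Claim~\ref{cl5----mnmxI} to write $F(y(t+1))-\theta^*I$ as a congruence of $I+B\succeq\tfrac12 I$ by the positive definite matrix $(F-\theta I)^{1/2}$, yielding positive definiteness. The paper phrases the final step as $F'-\theta^*I\succeq\tfrac12(F-\theta I)\succ 0$, which is just a quantitative version of your observation that congruence by a positive definite matrix preserves positive definiteness.
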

\begin{proof} 
	By Claim~\ref{cl5----mnmxI}, $I+B \succeq I	-\frac{1}{2}I=\frac12I, 
	$ and by thus, we get by Claim~\ref{cl5---mnmxI},
	\begin{align*}
	F' - \theta^* I 	&\succeq \frac{1}{2}(F-\theta I)\succ 0. \tag{$\because BZB\succeq 0$ for $B\in\SS^{n}$ and $Z\in\SS^{n}_+$}\nonumber\\
	\end{align*}
\end{proof}

\begin{claim}\label{cl6--mnmxI}
	if $\nu>\bareps$, then $\Tr(B)\ge \frac{\nu^2}{8}$.
\end{claim}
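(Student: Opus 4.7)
The plan is to compute $\Tr(B)$ directly from its definition and then split it into a ``gain'' term and a ``slack'' term coming from the approximation $\theta \approx \theta^*$.

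First I would use linearity of trace, together with the cyclic property $\Tr(X^{1/2}MX^{1/2}) = \Tr(MX) = M\bullet X$, to write
\[
\Tr(B) \;=\; \frac{n}{\bareps\theta}\Bigl(\tau(X\bullet \hat F - X\bullet F)\;-\;(\theta^*-\theta)\Tr(X)\Bigr).
\]
The first term simplifies very nicely: by the definition of $\nu$ in line~\ref{s4-mnmxI}, $X\bullet\hat F - X\bullet F = \nu(X\bullet\hat F + X\bullet F)$, and by the definition of $\tau$ in line~\ref{s5-mnmxI}, $\tau(X\bullet\hat F + X\bullet F) = \frac{\bareps\theta\nu}{4n}$. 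Multiplying these gives
\[
\tau(X\bullet\hat F - X\bullet F) \;=\; \frac{\bareps\theta\nu^2}{4n},
\]
so the first part of $\Tr(B)$ contributes exactly $\nu^2/4$.

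For the second (error) term, I would use $\theta\ge(1-\delta_s)\theta^*$ to get $\theta^*-\theta\le \delta_s\theta/(1-\delta_s)$, and then bound $\Tr(X)$ using the definition $\Tr(X)=\frac{\bareps\theta}{n}\Tr(F-\theta I)^{-1}=g(\theta)<1$, which is exactly what Claim~\ref{cl0--mnmxI} supplies. Substituting these into the second term gives
\[
\frac{n(\theta^*-\theta)\Tr(X)}{\bareps\theta} \;\le\; \frac{n\delta_s}{(1-\delta_s)\bareps} \;=\; \frac{\bareps^2}{32\bigl(1-\bareps^3/(32n)\bigr)},
\]
after plugging in $\delta_s=\bareps^3/(32n)$. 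Since $n\ge1$ and $\bareps\le 1/2$, this is comfortably smaller than $\bareps^2/8$.

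Finally, I would combine these to conclude
\[
\Tr(B) \;\ge\; \frac{\nu^2}{4}\;-\;\frac{\bareps^2}{32(1-\bareps^3/(32n))}.
\]
Under the hypothesis $\nu>\bareps$, the subtracted error is at most $\nu^2/8$, so $\Tr(B)\ge \nu^2/4-\nu^2/8=\nu^2/8$, as required. There is no real obstacle here; the only point that requires care is making sure the approximation slack introduced by $\delta_s$ (which is precisely why $\delta_s$ was set proportional to $\bareps^3/n$ back in the main loop) is indeed dominated by $\nu^2/8$ whenever $\nu>\bareps$, and this is exactly the reason $\delta_s$ was chosen this small in the algorithm.
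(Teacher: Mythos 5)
Your proposal is correct and follows essentially the same route as the paper: expand $\Tr(B)$, observe that the choice of $\tau$ makes the first term exactly $\nu^2/4$, bound the error term via $\Tr(X)\le 1$ and $\theta^*-\theta\le\delta_s\theta/(1-\delta_s)$, and then use $\nu>\bareps$ to absorb the slack. The only cosmetic difference is that the paper uses the sharper bound $\bareps^2/16$ for the error term (yielding $3\nu^2/16>\nu^2/8$) whereas you use the looser but sufficient $\nu^2/8$; both are fine.
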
	
\begin{proof}
	By the definition of $B$,
	\begin{align*}   
	\Tr(B)&=\frac{n}{\bareps\theta}\left(\tau \Tr(X^{1/2}(\hat F-F)X^{1/2})-(\theta^*-\theta)\Tr(X)\right)\\
	&\ge\frac{n}{\bareps\theta}\left(\tau (X\bullet\hat F- X\bullet F)-(\theta^*-\theta)\right)\tag{$\because \Tr(X)\le 1$ by Claim~\ref{cl0--mnmxI}}\\
	&\ge\frac{n}{\bareps\theta}\left(\tau (X\bullet\hat F- X\bullet F)-\frac{{\delta_s}}{1-{\delta_s}}\theta\right)\tag{$\because (1-{\delta_s})\theta^*\le \theta$}\\
	&=\frac{\nu^2}{4}-\frac{\bareps^2}{32(1-\bareps^3/32n))}\tag{by definition of $\tau$ and ${\delta_s}$}\\
	&>\frac{\nu^2}{4}-\frac{\nu^2}{16}>\frac{\nu^2}{8}.\tag{$\because \bareps<\nu\le 1$}
	\end{align*}
\end{proof}

\begin{claim}\label{cl6-mnmxI}
	if $\nu>\bareps$, then $\Tr(B^2)<\frac{\nu^2}{10}$.
\end{claim}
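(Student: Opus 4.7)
The plan is to split $B = B_1 + B_2$, where
\[
B_1 := \frac{n\tau}{\bareps\theta}X^{1/2}(\hat F - F)X^{1/2} \quad \text{and} \quad B_2 := -\frac{n(\theta^*-\theta)}{\bareps\theta}X,
\]
and apply the Frobenius-norm triangle inequality $\sqrt{\Tr(B^2)} \le \sqrt{\Tr(B_1^2)} + \sqrt{\Tr(B_2^2)}$, which follows from expanding $\Tr(B^2) = \Tr(B_1^2) + 2\Tr(B_1B_2) + \Tr(B_2^2)$ and applying Cauchy--Schwarz $|\Tr(B_1B_2)| \le \sqrt{\Tr(B_1^2)\Tr(B_2^2)}$. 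Each Frobenius norm is then bounded separately.

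For the main contribution, I would write $B_1 = P_1 - Q_1$ with $P_1 := \frac{n\tau}{\bareps\theta}X^{1/2}\hat F X^{1/2}$ and $Q_1 := \frac{n\tau}{\bareps\theta}X^{1/2}F X^{1/2}$, both of which are positive semidefinite. Since $\Tr(P_1 Q_1) = \Tr(Q_1^{1/2}P_1 Q_1^{1/2}) \ge 0$, one gets $\Tr(B_1^2) \le \Tr(P_1^2) + \Tr(Q_1^2)$. For any PSD $M$ one has $\Tr(M^2) \le \lambda_{\max}(M)\Tr(M) \le \Tr(M)^2$, so
\[
\sqrt{\Tr(B_1^2)} \le \sqrt{\Tr(P_1)^2 + \Tr(Q_1)^2} \le \Tr(P_1) + \Tr(Q_1) = \frac{n\tau}{\bareps\theta}\bigl(X\bullet\hat F + X\bullet F\bigr).
\]
Substituting the definition of $\tau$ from step~\ref{s5-mnmxI} cancels $n/(\bareps\theta)$ against $X\bullet\hat F + X\bullet F$ and yields exactly $\sqrt{\Tr(B_1^2)} \le \nu/4$, matching the corresponding term in the bound of Claim~\ref{cl5----mnmxI}.

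For the small contribution, a direct computation gives $\Tr(B_2^2) = \bigl(\frac{n(\theta^*-\theta)}{\bareps\theta}\bigr)^2 \Tr(X^2)$. Using $(1-\delta_s)\theta^* \le \theta$ to bound $\theta^*-\theta \le \tfrac{\delta_s}{1-\delta_s}\theta$, and $\Tr(X^2) \le \Tr(X)^2 \le 1$ via Claim~\ref{cl0--mnmxI}, together with $\delta_s = \bareps^3/(32n)$, I obtain
\[
\sqrt{\Tr(B_2^2)} \le \frac{n\delta_s}{\bareps(1-\delta_s)} = \frac{\bareps^2}{32\bigl(1 - \bareps^3/(32n)\bigr)}.
\]
Finally, invoking the hypothesis $\nu > \bareps$, hence $\bareps^2 \le \bareps\nu \le \nu$, lets me absorb the second term into a small multiple of $\nu$, giving $\sqrt{\Tr(B^2)} \le \nu/4 + O(\nu/32)$. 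Squaring and using $\bareps \le 1/2$ (hence the correction factor is close to $1$) yields a bound strictly below $\nu^2/10$, via for instance $(\nu/4 + \nu/30)^2 < \nu^2/10$.

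The main obstacle is choosing the right PSD splitting of $B_1$: the naive bound $\Tr(B_1^2) \le \|B_1\|_{op}^2 \cdot n$ loses a factor of $n$ and would give something like $n/4$, which is useless. The key observation is that the positive and negative parts of $B_1$ are simultaneously controlled by $\Tr(P_1) + \Tr(Q_1)$, and this sum is precisely what the definition of $\tau$ in step~\ref{s5-mnmxI} is calibrated against. Once this is recognized, the remaining algebra is routine and parallels the bound of Claim~\ref{cl5----mnmxI}.
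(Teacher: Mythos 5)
Your proof is correct and follows essentially the same strategy as the paper's: bound $\sqrt{\Tr(B^2)}$ by a sum of traces of PSD pieces, then recognize that the step size $\tau$ is calibrated so this sum collapses to $\nu/4$, with the $(\theta^*-\theta)$ contribution controlled by $\delta_s$. The only cosmetic difference is the bookkeeping: the paper groups the $(\theta^*-\theta)I$ term with $\tau F$ into a single PSD matrix $Y = X^{1/2}(\tau F + (\theta^*-\theta)I)X^{1/2}$ and writes $B$ as $\frac{n}{\bareps\theta}$ times a single difference $\hat Y - Y$ of PSD matrices, whereas you first split $B = B_1 + B_2$ and apply the Frobenius triangle inequality before decomposing $B_1$ as a difference of PSDs; both arrive at the identical intermediate bound $\sqrt{\Tr(B^2)} \le \nu/4 + \frac{\bareps^2}{32(1-\bareps^3/(32n))}$ and the same final numerical estimate.
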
	
\begin{proof}
	Write $\hat Y=\tau X^{1/2}\hat FX^{1/2}$ and $Y=X^{1/2}\big(\tau F+(\theta^*-\theta)I\big)X^{1/2}$ and note that both $\hat Y$ and $Y$ are in $\SS_+^n$. It follows by the definition of $B$ that
	\begin{align*}   
	\Tr(B^2)&=\frac{n^2}{\bareps^2\theta^2}\Tr\left((\hat Y-Y)^2\right)\\
	&=\frac{n^2}{\bareps^2\theta^2}\left(\Tr(\hat Y^2)+\Tr(Y^2)-2\Tr(\hat Y Y)\right)\\
	&\le\frac{n^2}{\bareps^2\theta^2}\left(\Tr(\hat Y^2)+\Tr(Y^2)+2\Tr(\hat Y Y)\right)\tag{$\because \hat Y,Y\in\SS^n_+$}\\
	&\le \frac{n^2}{\bareps^2\theta^2}\left(\Tr(\hat Y^2)+\Tr(Y^2)+2\sqrt{\Tr(\hat Y^2)\Tr(Y^2)}\right) \tag{by Cauchy-Schwarz Ineq.}	\\
	&\le\frac{n^2}{\bareps^2\theta^2}\left(\Tr(\hat Y)^2+\Tr(Y)^2+2\Tr(\hat Y )\Tr(Y)\right) \tag{$\because \hat Y,Y\in\SS^n_+$}\\
	&=\frac{n^2}{\bareps^2\theta^2}(\Tr(\hat Y)+\Tr(Y))^2\\
	&=\frac{n^2}{\bareps^2\theta^2}(\tau\Tr(X\hat F)+\tau\Tr(XF)+(\theta^*-\theta)\Tr(X))^2\\
	&\le\frac{n^2}{\bareps^2\theta^2}(\tau X\bullet \hat F+\tau X\bullet F+(\theta^*-\theta))^2\tag{$\because \Tr(X)\le 1$ by Claim~\ref{cl0--mnmxI}}\\
	&\le \frac{n^2}{\bareps^2\theta^2}\left(\tau X\bullet \hat F+\tau X\bullet F+\frac{{\delta_s}}{1-{\delta_s}}\theta\right)^2\tag{$\because (1-{\delta_s})\theta^*\le \theta$}\\
	&=\left(\frac{\nu}{4}+\frac{\bareps^2}{32(1-\bareps^3/(32n))}\right)^2\tag{by definition of $\tau$ and ${\delta_s}$}\\
	&< \left(\frac{\nu}{4}+\frac{\nu^2}{16}\right)^2\tag{$\because \bareps<\nu\le 1$}<\frac{\nu^2}{10}.
	\end{align*}
\end{proof}

Define $\Phi(t):=\Phi(\theta^*(t),F(y(t)))$.

\begin{claim}\label{cl7-mnmxI}
	$\Phi(t+1)-\Phi(t) \geq \frac{\bareps \nu(t+1)^2}{40n}$.
\end{claim}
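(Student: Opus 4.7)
The strategy is to exploit the fact that $\theta^*(t+1)$ is the \emph{maximizer} of $\Phi(\,\cdot\,,F(y(t+1)))$ on $(0,\lambda_{\min}(F'))$ (Claim~\ref{cl0-mnmxI}), which together with Claim~\ref{cl5--mnmxI} (which guarantees $\theta^*(t)\in(0,\lambda_{\min}(F'))$) gives the lower bound
\[
\Phi(t+1)\;=\;\Phi(\theta^*(t+1),F')\;\ge\;\Phi(\theta^*(t),F').
\]
Subtracting $\Phi(t)=\ln\theta^*(t)+\frac{\bareps}{n}\ln\det(F-\theta^*(t) I)$ on both sides, the $\ln\theta^*(t)$ terms cancel, leaving
\[
\Phi(t+1)-\Phi(t)\;\ge\;\frac{\bareps}{n}\Bigl[\ln\det\bigl(F'-\theta^*(t) I\bigr)-\ln\det\bigl(F-\theta^*(t) I\bigr)\Bigr].
\]

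\textbf{Next step: apply the matrix identity from Claim~\ref{cl5---mnmxI}.} That identity gives $F'-\theta^* I=(F-\theta I)^{1/2}(I+B)(F-\theta I)^{1/2}$, so taking determinants I would split the bracket as
\[
\ln\det(F'-\theta^* I)-\ln\det(F-\theta^* I)\;=\;\underbrace{\bigl[\ln\det(F-\theta I)-\ln\det(F-\theta^* I)\bigr]}_{\ge\,0\text{ since }\theta<\theta^*}\;+\;\ln\det(I+B).
\]
The first bracket is nonnegative because $F-\theta I\succeq F-\theta^* I\succ 0$, and therefore can be dropped. This isolates the essential gain $\frac{\bareps}{n}\ln\det(I+B)$ that must be bounded from below by $\nu^2/40$.

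\textbf{Bounding $\ln\det(I+B)$.} Here I would use Claim~\ref{cl5----mnmxI} ($|\lambda_j(B)|\le 1/2$ for all $j$) together with the scalar inequality $\ln(1+x)\ge x-x^2$ valid for all $x\ge-1/2$ (an easy one-variable calculus check). Writing $\ln\det(I+B)=\sum_j\ln(1+\lambda_j(B))$, this yields
\[
\ln\det(I+B)\;\ge\;\Tr(B)-\Tr(B^2).
\]
Now I would plug in the already-proved Claims~\ref{cl6--mnmxI} and~\ref{cl6-mnmxI}, namely $\Tr(B)\ge \nu^2/8$ and $\Tr(B^2)<\nu^2/10$, to get $\ln\det(I+B)>\nu^2/8-\nu^2/10=\nu^2/40$. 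Combined with the reduction above, this immediately gives $\Phi(t+1)-\Phi(t)\ge\frac{\bareps\nu(t+1)^2}{40n}$.

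\textbf{Anticipated obstacle.} There is no serious obstacle, since all the heavy lifting (the matrix-factorization identity, the spectral bound on $B$, and the estimates of $\Tr(B)$ and $\Tr(B^2)$) has been done in Claims~\ref{cl5---mnmxI}--\ref{cl6-mnmxI}. The only subtlety worth flagging is that Claims~\ref{cl6--mnmxI}--\ref{cl6-mnmxI} are conditional on $\nu(t+1)>\bareps$; this is fine because the claim is only invoked when iteration $t+1$ is performed, i.e.\ when the inner-while test $\nu(t+1)>\bareps$ holds. The second mild point is verifying the elementary inequality $\ln(1+x)\ge x-x^2$ on $[-1/2,\infty)$, which follows from differentiating $f(x):=\ln(1+x)-x+x^2$ and noting $f(0)=0$ with $f$ decreasing on $[-1/2,0]$ and increasing on $[0,\infty)$.
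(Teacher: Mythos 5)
Your proof is correct and follows essentially the same route as the paper: lower-bound $\Phi(t+1)$ by evaluating $\Phi(\cdot,F')$ at the feasible point $\theta^*(t)$ (justified by Claim~\ref{cl5--mnmxI}), reduce the $\log\det$ difference to $\ln\det(I+B)$ via Claim~\ref{cl5---mnmxI} and the monotonicity of $\ln\det$ in $\theta\le\theta^*$, then apply $\ln(1+z)\ge z-z^2$ on $[-1/2,\infty)$ together with Claims~\ref{cl5----mnmxI}, \ref{cl6--mnmxI} and \ref{cl6-mnmxI}. Your explicit splitting of the determinant difference into a nonnegative bracket plus $\ln\det(I+B)$ is a cosmetic rearrangement of the paper's step "because $\theta\le\theta^*$", and your remark that the bound is invoked only when the inner loop test $\nu(t+1)>\bareps$ holds correctly accounts for the hypotheses of Claims~\ref{cl6--mnmxI}--\ref{cl6-mnmxI}.
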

\begin{proof} 
	Note that Claim~\ref{cl5--mnmxI} implies that $\theta^*$ is feasible to the problem $\max\{\Phi(\xi,F'):0\leq \xi \leq \lambda_{\min}(F')\}$. Thus,
	\begin{align*}
	\Phi(t+1)&=\Phi(\theta^*(t+1),F')
	\geq \ln \theta^* + \frac{\bareps}{n}\ln \det (F'-\theta^* I).\\
	\therefore \Phi(t+1)-\Phi(t) &\geq \frac{\bareps}{n}\left(\ln \det(F'-\theta^* I)-\ln \det(F-\theta^* I)\right)\\
	&\geq \frac{\bareps}{n}\left(\ln \det\big(F'-\theta^*I\big)-\ln \det(F-\theta I)\right)\tag{$\because \theta\le\theta^*$}\\
	&= \frac{\bareps}{n}\ln \det \left(I+B\right)\tag{by Claim~\ref{cl5---mnmxI}}\\
	&=\frac{\bareps}{n}\sum_{j=1}^n\ln  \left(1+\lambda_j(B)\right)\\
	&\geq \frac{\bareps}{n}\sum_{j=1}^n\left( \lambda_j(B)-\lambda_j(B)^2\right)\tag{by Claim~\ref{cl5----mnmxI} and $ \ln(1+z) \geq z-z^2, \forall z\geq -0.5$}\\
	&=\frac{\bareps}{n}\left(\Tr(B)-\Tr(B^2)\right)\\
	&>\frac{\bareps}{8n}\nu^2-\frac{\bareps}{10n}\nu^2\tag{by Claims~\ref{cl6--mnmxI} and \ref{cl6-mnmxI}}\\
	&=\frac{\bareps}{40n}\nu^2.
	\end{align*}
\end{proof}

\begin{claim}\label{cl8-mnmxI}
	For any $t,t'$, $$\Phi(t')-\Phi(t) \leq (1+\bareps)\ln \displaystyle\frac{ X(t) \bullet A_{i(t)}}{(1-\bareps)X(t) \bullet F(y(t))}.$$
\end{claim}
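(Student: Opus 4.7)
The main idea is to apply AM–GM to the positive eigenvalues of $(F-\beta I)X(t)$ (which is similar to the PSD matrix $X(t)^{1/2}(F-\beta I)X(t)^{1/2}$) for any $\beta\in(0,\lambda_{\min}(F))$, giving the key inequality
\begin{align*}
\frac{\bareps}{n}\ln\det(F-\beta I)\ \le\ \bareps\ln\frac{F\bullet X(t)-\beta\Tr X(t)}{n}-\frac{\bareps}{n}\ln\det X(t).\qquad (\star)
\end{align*}
The crucial observation is that $(\star)$ holds with \emph{equality} when $F=F(y(t))$ and $\beta=\theta(t)$: by the definition of $X(t)$ in step~\ref{s3.-mnmxI}, $(F(y(t))-\theta(t)I)X(t)=\frac{\bareps\theta(t)}{n}I$ is a scalar matrix, so all eigenvalues of $(F(y(t))-\theta(t)I)X(t)$ equal $\bareps\theta(t)/n$ and AM–GM is tight.

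First I would apply $(\star)$ with $F=F(y(t'))$ and $\beta=\theta^*(t')$. Using $F(y(t'))\bullet X(t)\le A_{i(t)}\bullet X(t)$ (from the oracle's choice of $i(t)$ together with $\b1^Ty(t')=1$), this upper-bounds $\Phi(t')$ by $\ln\theta^*(t')+\bareps\ln\frac{A_{i(t)}\bullet X(t)-\theta^*(t')\Tr X(t)}{n}-\frac{\bareps}{n}\ln\det X(t)$. Viewing this as a function of $\beta:=\theta^*(t')\in\bigl(0,\,A_{i(t)}\bullet X(t)/\Tr X(t)\bigr)$, the one-variable maximum is attained at $\beta^\dagger=\frac{A_{i(t)}\bullet X(t)}{(1+\bareps)\Tr X(t)}$, yielding
\begin{align*}
\Phi(t')\le (1+\bareps)\ln\bigl(A_{i(t)}\bullet X(t)\bigr) - \ln\Tr X(t) + \bareps\ln\bareps - (1+\bareps)\ln(1+\bareps) - \bareps\ln n - \frac{\bareps}{n}\ln\det X(t).
\end{align*}

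Next I would apply $(\star)$ at $F=F(y(t))$ and $\beta=\theta(t)$, where equality holds; this yields $\tilde\Phi(t):=\ln\theta(t)+\frac{\bareps}{n}\ln\det(F(y(t))-\theta(t)I)=(1+\bareps)\ln\theta(t)+\bareps\ln\bareps-\bareps\ln n-\frac{\bareps}{n}\ln\det X(t)$ exactly. Because $\theta^*(t)$ maximizes $\Phi(\cdot,F(y(t)))$ and $\theta(t)\le\theta^*(t)$, we have $\Phi(t)\ge\tilde\Phi(t)$. Subtracting this lower bound from the above upper bound on $\Phi(t')$, the $\ln\det X(t)$, $\bareps\ln\bareps$ and $\bareps\ln n$ terms cancel, leaving
\begin{align*}
\Phi(t')-\Phi(t)\ \le\ (1+\bareps)\ln\frac{A_{i(t)}\bullet X(t)}{(1+\bareps)\,\theta(t)}\ -\ \ln\Tr X(t).
\end{align*}

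To finish, Claim~\ref{cl3-mnmxI} gives $(1+\bareps)\theta(t)>X(t)\bullet F(y(t))$ and Claim~\ref{cl0--mnmxI} gives $\Tr X(t)>1-\bareps$; substituting these yields $(1+\bareps)\ln\frac{A_{i(t)}\bullet X(t)}{X(t)\bullet F(y(t))}-\ln(1-\bareps)$, which is at most the claimed $(1+\bareps)\ln\frac{A_{i(t)}\bullet X(t)}{(1-\bareps)X(t)\bullet F(y(t))}$ because the gap equals $\bareps\,|\ln(1-\bareps)|\ge 0$. The main obstacle to anticipate is recognising that $(\star)$ is \emph{tight} at $(F,\beta)=(F(y(t)),\theta(t))$ via the scalar-matrix identity for $X(t)$; without this one would be subtracting one upper bound from another and obtaining no useful inequality. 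Once this tightness is spotted, the rest is a single-variable calculus maximization and bookkeeping of constants against the previously established claims.
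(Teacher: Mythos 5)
Your proof is correct and in fact establishes a slightly tighter bound than the one stated. It is also a genuinely different route from the paper's, and the comparison is worth recording.

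The paper's proof attacks $\Phi(t')-\Phi(t)$ directly: it uses the matrix identity of Claim~\ref{cl5-----mnmxI}, $(F-\theta^*I)^{-1}=\bigl(\tfrac{\bareps\theta}{n}I-(\theta^*-\theta)X\bigr)^{-1}X$, to introduce $X(t)$ into the log-determinant, and then separately bounds $\det\bigl(\tfrac{\bareps\theta}{n}I-(\theta^*-\theta)X\bigr)^{-1}$ using $\Tr(X)\le 1$, $(1-\delta_s)\theta^*\le\theta$, and the choice of $\delta_s$. You instead bound $\Phi(t')$ from above and $\Phi(t)$ from below separately, and the key structural insight making this viable is your observation that the AM--GM inequality $(\star)$ is \emph{tight} at $(F,\beta)=(F(y(t)),\theta(t))$ because $(F(y(t))-\theta(t)I)X(t)=\tfrac{\bareps\theta(t)}{n}I$ is a scalar matrix; combined with $\Phi(t)\ge\tilde\Phi(t):=\Phi(\theta(t),F(y(t)))$ (since $\theta^*(t)$ is the maximizer), this sidesteps the explicit inverse-determinant bound and the associated $\delta_s$ bookkeeping. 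Both proofs ultimately share the AM--GM/concavity step on $\ln\det\bigl[X(F'-\theta'^*I)\bigr]$, the oracle inequality $X(t)\bullet F(y(t'))\le X(t)\bullet A_{i(t)}$ via $\b1^Ty(t')=1$, the single-variable maximization over $\beta=\theta^*(t')$ (resp.\ $\xi$), and the closing substitutions from Claims~\ref{cl3-mnmxI} and~\ref{cl0--mnmxI}. What your version buys is a cleaner handling of the $\theta$-versus-$\theta^*$ discrepancy; what it costs you is nothing extra, since the $\Tr X(t)\in(1-\bareps,1)$ bound from Claim~\ref{cl0--mnmxI} (which is itself where $\delta_s$ enters) is still needed in both arguments. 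One small thing worth stating explicitly for completeness: the range $\beta=\theta^*(t')<A_{i(t)}\bullet X(t)/\Tr X(t)$ that you invoke for the single-variable maximization does hold, because $\theta^*(t')<\lambda_{\min}(F(y(t')))$ by Claim~\ref{cl2-mnmxI}, while $A_{i(t)}\bullet X(t)\ge F(y(t'))\bullet X(t)\ge\lambda_{\min}(F(y(t')))\Tr X(t)$.
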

\begin{proof}
	Write $F=F(y(t))$, $\theta^*:=\theta^*(t)$, $\theta:=\theta(t)$, $X:=X(t)$, $F'=F(y(t'))$, $\theta'^{*}:=\theta^*(t')$. Then
	\begin{align*}
	\Phi(t')-\Phi(t) 	&=\ln \frac{\theta'^*}{\theta^*}+\frac{\bareps}{n}\ln \det \left[(F-\theta^*I)^{-1}(F'-\theta'^*I)\right]\\
	&=\ln \frac{\theta'^*}{\theta^*}+\frac{\bareps}{n}\ln \det \left[\left(\frac{\bareps \theta}{n}I-(\theta^*-\theta)X\right)^{-1}X(F'-\theta'^*I)\right]\tag{by Claim~\ref{cl5-----mnmxI}}\\
	&=\ln \frac{\theta'^*}{\theta^*}+\frac{\bareps}{n} \left[\ln \det\left(\frac{\bareps \theta}{n}I-(\theta^*-\theta)X\right)^{-1}+\ln \det \big[X(F'-\theta'^*I)\big]\right]\\
	&\le\ln \frac{\theta'^*}{\theta^*}+\frac{\bareps}{n} \left[\ln \left(\frac{\bareps \theta}{n}-\frac{{\delta_s}\theta}{1-{\delta_s}}\right)^{-n}+\ln \det \big[X(F'-\theta'^*I)\big]\right]\tag{$\because \Tr(X)\le1$  by Claim~\ref{cl0--mnmxI} and $(1-{\delta_s})\theta^*\le\theta$}\\
	&\le\ln \frac{\theta'^*}{\theta^*}+\frac{\bareps}{n} \left[\ln\left(\frac{n}{(1-\bareps)\bareps \theta}\right)^n+\ln\det X(F'-\theta'^*I)\right]\tag{by defintion  of ${\delta_s}$}\\
	&=\ln \frac{\theta'^*}{\theta^*}+\bareps\ln \frac{n}{(1-\bareps)\bareps \theta}+\frac{\bareps}{n}\ln \left[\det X(F'-\theta'^*I)\right]\\
	&=\ln \frac{\theta'^*}{\theta^*}+\bareps\ln \frac{n}{(1-\bareps)\bareps \theta}+\frac{\bareps}{n}\sum_{j=1}^n\ln \lambda_j\big(X(F'-\theta'^*I)\big)\\
	&\leq \ln \frac{\theta'^*}{\theta^*}+\bareps\ln \frac{n}{(1-\bareps)\bareps \theta}+\bareps\ln \left(\frac{1}{n}\sum_{j=1}^n\lambda_j\big(X(F'-\theta'^*I)\big)\right) \tag{by the concavity of $\ln(\cdot)$}\\
	&= \ln \frac{\theta'^*}{\theta^*}+\bareps\ln \frac{n}{(1-\bareps)\bareps \theta}+\bareps\ln \left(\frac{1}{n}\Tr(XF'-\theta'^*X)\right)\\
	&\le \ln \frac{\theta'^*}{\theta^*}+\bareps\ln \frac{n}{(1-\bareps)\bareps \theta}+\bareps\ln \left(\frac{X\bullet F'-\theta'^*(1-\bareps)}{n}\right)\tag{$\because \Tr(X)\ge1-\bareps$ by Claim~\ref{cl0--mnmxI}}\\
	&= \ln \frac{\theta'^*}{\theta^*}+\bareps\ln \frac{1}{(1-\bareps)\bareps \theta}+\bareps\ln \left(X\bullet F'-\theta'^*(1-\bareps)\right)\\
	&\leq \ln \frac{\theta'^*}{\theta^*}+\bareps\ln \frac{1}{(1-\bareps)\bareps \theta}+\bareps\ln \left(\max_{y\in\RR_+^m:~\b1^Ty= 1} X\bullet F(y)-\theta'^*(1-\bareps)\right)\tag{$\because \b1^Ty(t')=1$ by Claim~\ref{cl5.-mnmxI}}\\
	&=\ln \frac{\theta'^*}{\theta^*}+\bareps\ln \frac{1}{(1-\bareps)\bareps \theta}+\bareps\ln \left( X\bullet A_{i(t)}-\theta'^*(1-\bareps)\right)\tag{by defintion of $i(t)$}\\
	&\leq\max_{0\le\xi< X\bullet A_{i(t)}} \left\{\ln \frac{\xi}{(1-\bareps)\theta^*}+\bareps\ln \frac{1}{(1-\bareps)\bareps \theta}+\bareps\ln \left(X\bullet A_{i(t)}-\xi\right)\right\}\\
	&= (1+\bareps)\ln \displaystyle\frac{X\bullet A_{i(t)}}{(1-\bareps^2)\theta}+\ln\frac{\theta}{\theta^*}\tag{$\max(\cdot)$ is achieved at $\xi=\frac{X\bullet A_{i(t)}}{1+\bareps}$}\\
	&\le (1+\bareps)\ln \displaystyle\frac{X\bullet A_{i(t)}}{(1-\bareps^2)\theta}\tag{$\because\theta\le\theta^*$}\\
	&\le (1+\bareps)\ln \displaystyle\frac{X\bullet A_{i(t)}}{(1-\bareps)X \bullet F}.\tag{by Claim~\ref{cl3-mnmxI}}\\
	\end{align*}
\end{proof}

Recall by assumption~(B-I) that $\bar A:=\sum_{i=1}^rA_i\succ0$. 
\begin{claim}\label{cl9-mnmxI} $\frac{X(0) \bullet A_{i(0)}}{X(0) \bullet F(y(0))}\leq \psi:=\frac{r\cdot\lambda_{\max}(A_{i(0)})}{\lambda_{\min}(\bar A)}\le \frac{r\cdot\max_i\lambda_{\max}(A_i)}{\lambda_{\min}(\bar A)}\le n\tau 2^\cL$.\end{claim}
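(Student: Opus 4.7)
The plan is to exploit the very explicit form of the initial iterate $y(0)$ and use elementary spectral bounds based on the extreme eigenvalues of $F(y(0))$ and $A_{i(0)}$.

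First I would unwind the initialization. By line~\ref{s1-mnmxI} of the algorithm, $y(0)=\frac{1}{r}\sum_{i=1}^{r}\b1_i$, so $F(y(0))=\frac{1}{r}\sum_{i=1}^{r}A_i=\frac{\bar A}{r}$, where $\bar A\succ 0$ by assumption~(B-I). By Claim~\ref{cl0-mnmxI} we have $X(0)\succ 0$, so in particular $\Tr(X(0))>0$ is a common positive factor that will cancel.

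Next I would bound the ratio spectrally. For the numerator, since $A_{i(0)}\preceq\lambda_{\max}(A_{i(0)})I$ and $X(0)\succeq 0$, we get
$$X(0)\bullet A_{i(0)}\le\lambda_{\max}(A_{i(0)})\,\Tr(X(0)).$$
For the denominator, since $\bar A\succeq\lambda_{\min}(\bar A)I$,
$$X(0)\bullet F(y(0))=\frac{1}{r}X(0)\bullet\bar A\ge\frac{\lambda_{\min}(\bar A)}{r}\Tr(X(0)).$$
Dividing, the $\Tr(X(0))$ factors cancel and we obtain
$$\frac{X(0)\bullet A_{i(0)}}{X(0)\bullet F(y(0))}\le\frac{r\,\lambda_{\max}(A_{i(0)})}{\lambda_{\min}(\bar A)}=\psi,$$
which is the first inequality. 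The second inequality $\lambda_{\max}(A_{i(0)})\le\max_i\lambda_{\max}(A_i)$ is immediate.

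For the final inequality $\frac{r\max_i\lambda_{\max}(A_i)}{\lambda_{\min}(\bar A)}\le n\tau 2^{\cL}$, I would use assumption~(B-I), which gives $\tau=\frac{r}{\lambda_{\min}(\bar A)}$, so that the left-hand side equals $\tau\cdot\max_i\lambda_{\max}(A_i)$. Since each $A_i\succeq 0$ with entries of bit-length at most $\cL$, we have $\lambda_{\max}(A_i)\le\Tr(A_i)\le n\cdot 2^{\cL}$, which yields the bound $\tau\cdot\max_i\lambda_{\max}(A_i)\le n\tau 2^{\cL}$. There is no real obstacle here; the entire claim is a one-shot spectral sandwich, and the only mildly delicate point is making sure the bit-length bound on $\lambda_{\max}(A_i)$ is properly justified through the trace inequality for PSD matrices.
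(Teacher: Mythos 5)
Your proof is correct and follows essentially the same route as the paper's: both bound $X(0)\bullet A_{i(0)}\le\lambda_{\max}(A_{i(0)})\Tr(X(0))$ and $X(0)\bullet F(y(0))\ge\frac{1}{r}\lambda_{\min}(\bar A)\Tr(X(0))$ using extreme eigenvalues (the paper via the spectral decomposition of $X(0)$, you via the equivalent L\"owner-order comparison), and then cancel $\Tr(X(0))$. Your spelled-out justification of the final inequality $\le n\tau 2^{\cL}$ via $\tau=r/\lambda_{\min}(\bar A)$ and $\lambda_{\max}(A_i)\le\Tr(A_i)\le n2^{\cL}$ is a welcome addition that the paper leaves implicit.
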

\begin{proof}
	Let $X(0)=\sum_{j=1}^n\lambda_ju_ju_j^T$ be the spectral decomposition of $X(0)$. Then, 
	\begin{align*}
	X(0)\bullet A_{i(0)}&=\sum_{j=1}^n\lambda_jA_{i(0)}\bullet u_ju_j^T	\leq\sum_{j=1}^n\lambda_j\lambda_{\max}(A_{i(0)})= \lambda_{\max}(A_{i(0)})\cdot\Tr(X(0))\\
	X(0)\bullet  F(y(0))&=\sum_{j=1}^n\lambda_j F(y(0))\bullet u_ju_j^T	\geq\frac1r\sum_{j=1}^n\lambda_j\lambda_{\min}(\bar A)=\frac1r\lambda_{\min}(\bar A)\cdot\Tr(X(0)). 
	\end{align*}
	The claim follows.
\end{proof}

\begin{claim}\label{cl10-mnmxI}
	The algorithm terminates in at most $O\big(n\log \psi+\frac{n}{\epsilon^2}\big)$ iterations.
\end{claim}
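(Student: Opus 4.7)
The plan is to decompose the total iteration count by scaling phase $s$, bound the iteration count $T_s = t_{s+1}-t_s$ in each phase via Claims~\ref{cl7-mnmxI} and~\ref{cl8-mnmxI}, and sum. The inner while-loop condition guarantees that every iteration in phase $s$ except possibly the last satisfies $\nu(t+1)>\bareps_s$, so Claim~\ref{cl7-mnmxI} gives a per-iteration potential gain of at least $\bareps_s^3/(40n)$. If $\Delta_s$ denotes an upper bound on the total potential growth $\Phi(t_{s+1})-\Phi(t_s)$ in phase $s$, then $T_s \leq 1 + 40n\Delta_s/\bareps_s^3$, so it suffices to bound $\Delta_s$.

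For phase $0$ (where $\bareps_0 = 1/2$), applying Claim~\ref{cl8-mnmxI} with $t=0$, $t'=t_1$ and combining with Claim~\ref{cl9-mnmxI} yields
\[
\Phi(t_1)-\Phi(0) \;\le\; (1+\bareps_0)\ln\frac{R(0)}{1-\bareps_0} \;\le\; \tfrac{3}{2}\ln(2\psi) \;=\; O(\log\psi),
\]
where $R(t):=X(t)\bullet A_{i(t)}/X(t)\bullet F(y(t))$. Hence $\Delta_0 = O(\log\psi)$ and $T_0 = O(n\log\psi)$.

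For phase $s\ge 1$ I would establish $\Delta_s = O(\bareps_s)$ via a warm-start argument. The exit condition at the end of phase $s-1$ yields $\nu(t_s)\le\bareps_{s-1}=2\bareps_s$; using the maximality of $i(t_s-1)$ and the definition of $\nu$, this translates into
\[
\max_i\, X(t_s-1)\bullet A_i \;\le\; \tfrac{1+\bareps_{s-1}}{1-\bareps_{s-1}}\,X(t_s-1)\bullet F(y(t_s-1)) \;=\; (1+O(\bareps_s))\,X(t_s-1)\bullet F(y(t_s-1)).
\]
Since $y$ is updated by only $\tau=O(\bareps_{s-1}/n)$ per iteration (Claim~\ref{cl4-mnmxI}) and the halving of $\bareps$ perturbs $\theta$ and $X$ only by mild multiplicative factors (Claim~\ref{cl2-mnmxI}), a perturbation argument should propagate this to the bound $R(t_s)\le 1+O(\bareps_s)$. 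Claim~\ref{cl8-mnmxI} then gives
\[
\Delta_s \;\le\; (1+\bareps_s)\ln\frac{1+O(\bareps_s)}{1-\bareps_s} \;=\; O(\bareps_s),
\]
so $T_s = O(n/\bareps_s^2)$. Since $\bareps_s = 2^{-s-1}$ and the outer loop stops once $\bareps_s\le\epsilon$, the geometric series $\sum_{s\ge 1} n/\bareps_s^2$ is dominated by its last term, giving $\sum_{s\ge 1} T_s = O(n/\epsilon^2)$, and adding $T_0=O(n\log\psi)$ yields the claimed bound.

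The main obstacle is the warm-start estimate $R(t_s)\le 1+O(\bareps_s)$ for $s\ge 1$: it requires simultaneously controlling how the small $O(\bareps_{s-1}/n)$ perturbation of $y$ inherited from the last iteration of phase $s-1$ and the halving $\bareps_{s-1}\mapsto\bareps_s$ affect the computed root $\theta(t_s)$ and primal matrix $X(t_s)$, and thereby the ratio $\max_i X(t_s)\bullet A_i/X(t_s)\bullet F(y(t_s))$. Without this tight propagation, Claim~\ref{cl8-mnmxI} applied with only a generic estimate on $R(t_s)$ would yield $\Delta_s=O(1)$ and hence $T_s=O(n/\bareps_s^3)$, summing to $O(n/\epsilon^3)$, which is worse than the stated bound.
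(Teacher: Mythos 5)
Your decomposition into scaling phases, the bound $T_0 = O(n\log\psi)$, and the geometric summation $\sum_{s\ge 1}n/\eps_s^2 = O(n/\epsilon^2)$ all match the paper's argument. The gap you flag for $s\ge 1$ is genuine as you have set it up, but it is self-inflicted: the perturbation argument you call for is not needed, because you have chosen the wrong anchor point for Claim~\ref{cl8-mnmxI}.

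Claim~\ref{cl8-mnmxI} holds for \emph{any} pair $t,t'$; you should apply it with $t$ equal to the \emph{last} iteration of phase $s-1$ (the paper's $t_{s-1}$, your $t_s-1$), not the first iteration of phase $s$. At that point, the ratio you need is controlled directly by the very quantity the algorithm computes: by line~\ref{s4-mnmxI},
\[
\nu(t_{s-1}+1) \;=\; \frac{X(t_{s-1})\bullet A_{i(t_{s-1})} - X(t_{s-1})\bullet F(y(t_{s-1}))}{X(t_{s-1})\bullet A_{i(t_{s-1})} + X(t_{s-1})\bullet F(y(t_{s-1}))}
\quad\Longleftrightarrow\quad
\frac{X(t_{s-1})\bullet A_{i(t_{s-1})}}{X(t_{s-1})\bullet F(y(t_{s-1}))} \;=\; \frac{1+\nu(t_{s-1}+1)}{1-\nu(t_{s-1}+1)},
\]
and the phase-transition condition gives $\nu(t_{s-1}+1)\le 2^{-s}=2\eps_s$, so the ratio is at most $\frac{1+2\eps_s}{1-2\eps_s}=1+O(\eps_s)$ without any further work. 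Plugging this into Claim~\ref{cl8-mnmxI}, together with Claim~\ref{cl7-mnmxI} applied to iterations inside phase $s$ (each of which has $\nu(t+1)>\eps_s$), yields $\frac{\eps_s^3(t_s-t_{s-1})}{40n}<(1+\eps_s)\ln\frac{1+2\eps_s}{(1-\eps_s)(1-2\eps_s)}\le 15\eps_s$, hence $t_s-t_{s-1}=O(n/\eps_s^2)$, exactly what you wanted. The ``tight propagation'' of the ratio across the rescaling $\eps_{s-1}\mapsto\eps_s$ that you identified as the main obstacle simply never arises: you already had the bound at $t_s-1$ and should have used it there.
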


\begin{proof}
	Let $t_{-1}=-1$ and, for $s=0,1,2,\ldots$, let $t_s$ be the smallest $t$ such that $\nu(t+1)\le2^{-(s+1)}$ (so $t_s+1$ is the value of $t$ at which the iteration $s+1$ of the outer while-loop starts). Then for $t=t_{s-1}+1,\ldots,t_{s}-1$, we have $\nu(t+1)>2^{-(s+1)}=\eps_s$. Hence, for $s=0$,
	\begin{align*}
	\frac{\eps_0^3 t_0}{40n}&<\Phi(t_0)-\Phi(0)	\tag{by Claim~\ref{cl7-mnmxI}} \\ &\le(1+\eps_0)\ln\frac{X(0)\bullet A_{i(0)}}{(1-\eps_0)X(0)\bullet F(y(0))}	\tag{by Claim~\ref{cl8-mnmxI}} \\ 
	&\le(1+\eps_0)\ln \frac{\psi}{(1-\eps_0)}	\tag{by Claim~\ref{cl9-mnmxI}}.
	\end{align*}
	Setting $\eps_0=\frac{1}{2}$ in the last series of inequalities we get 
	\begin{align}\label{est1}
	t_0&<480n\ln(2\psi)=O(n\log \psi).
	\end{align}
	Now consider $s\ge 1$: 
	\begin{align*}
	\frac{\eps_s^3 (t_s-t_{s-1})}{40n}&<\Phi(t_s)-\Phi(t_{s-1})	\tag{by Claim~\ref{cl7-mnmxI}} \\ &\le(1+\eps_s)\ln\frac{X(t_{s-1})\bullet A_{i(t_{s-1})}}{(1-\eps_s)X(t_{s-1})\bullet F(y(t_{s-1}))}	\tag{by Claim~\ref{cl8-mnmxI}} \\ 
	&=
	(1+\eps_s)\ln \frac{1+\nu(t_{s-1}+1)}{(1-\eps_s)\big(1-\nu(t_{s-1}+1)\big)}	\tag{by definition of ~$\nu(t_{s-1}+1)$}\\
	&\le(1+\eps_s)\ln \frac{1+2\eps_s}{(1-\eps_s)(1-2\eps_s)}	\tag{$\because \nu(t_{s-1}+1)\le2^{-s}=2\eps_s$}\\
	&\le(1+\eps_s)\ln(1+12\eps_s)\tag{$\because\eps_s\le\frac14$}
	\le15\eps_s.
	\end{align*}
	Setting $\eps_s=\frac{1}{2^{s+1}}$ in the last series of inequalities we get 
	\begin{align}\label{est2}
	t_s-t_{s-1}&<\frac{600n}{\eps_s^2}=O(n/\eps_s^2).
	\end{align}
	Summing \raf{est1}, and \raf{est2} over $s=1,2,\ldots,\lceil\log\frac{1}{\epsilon}\rceil$, we get the claim.
\end{proof}
\begin{remark}\label{r1}
	If we do not insist on a sparse dual solution, then we can use the initialization $y(0)\gets\frac1m\b1$ in step~\ref{s1-mnmxI} in Algorithm~\ref{log-pack-alg}, where  $\b1$ is the $m$-dimensional vector of all ones, and replace $\psi$ in Claim~\ref{cl9-mnmxI}, and hence in the running time in Claim~\ref{cl10-mnmxI}, by $m$. 
\end{remark}
\subsubsection{Primal Dual Feasibility and Approximate Optimality  }
Let $t_f+1$ be the value of $t$ when the algorithm terminates and $s_f+1$ be the value of $s$ at termination. For simplicity, we write  $s=s_f$.

\begin{claim}\label{cl11-mnmxI}(Primal feasibility).
	$\hat X\succ 0$ and $\max_i  A_i\bullet \hat X\leq 1$.
\end{claim}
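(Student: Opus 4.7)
The plan is to prove the two parts separately, both by directly unwinding the definition of $\hat X$ given at step~\ref{so-mnmxI} and invoking the earlier structural claims.

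For positive-definiteness of $\hat X$, I would simply chain earlier facts. By Claim~\ref{cl5-mnmxI} (applied at $t = t_f$), we have $F(y(t_f))\succ 0$, which in turn enables Claim~\ref{cl0-mnmxI} and yields $X(t_f)\succ 0$. Since the scalar multiplier $(1-\eps_{s-1})/\bigl((1+\eps_{s-1})^2\theta(t_f)\bigr)$ is strictly positive (recall $\eps_{s-1}<1$ and $\theta(t_f)>0$), multiplying a PD matrix by it preserves positive definiteness, so $\hat X\succ 0$.

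For the covering inequality, the key input is the termination condition of the inner while-loop at step~\ref{s1.-mnmxI}: the loop exits exactly when $\nu(t_f+1)\le\eps_{s-1}$ (the accuracy used during the last phase, before $s$ was incremented). Since $i(t_f)=\argmax_i A_i\bullet X(t_f)$ was chosen in step~\ref{s3-mnmxI}, for every $i\in[m]$ the ratio
\[
\frac{A_i\bullet X(t_f)-X(t_f)\bullet F(y(t_f))}{A_i\bullet X(t_f)+X(t_f)\bullet F(y(t_f))} \;\le\; \nu(t_f+1)\;\le\;\eps_{s-1},
\]
which rearranges to $(1-\eps_{s-1})\,A_i\bullet X(t_f)\le(1+\eps_{s-1})\,X(t_f)\bullet F(y(t_f))$. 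Applying Claim~\ref{cl3-mnmxI} to upper-bound $X(t_f)\bullet F(y(t_f))<(1+\eps_{s-1})\theta(t_f)$ gives
\[
A_i\bullet X(t_f)\;<\;\frac{(1+\eps_{s-1})^2}{1-\eps_{s-1}}\,\theta(t_f)\qquad\forall i\in[m].
\]

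Plugging this into the definition of $\hat X$, for every $i$,
\[
A_i\bullet\hat X \;=\;\frac{(1-\eps_{s-1})\,A_i\bullet X(t_f)}{(1+\eps_{s-1})^2\,\theta(t_f)} \;<\;\frac{(1-\eps_{s-1})}{(1+\eps_{s-1})^2\theta(t_f)}\cdot\frac{(1+\eps_{s-1})^2}{1-\eps_{s-1}}\theta(t_f)\;=\;1,
\]
which is the desired primal feasibility. The only subtle point (and the main thing to be careful about) is the bookkeeping of the scale index $\eps_{s-1}$ versus $\eps_s$ at termination: after the inner loop exits with accuracy $\eps_{s-1}$ in force, the outer loop increments $s$ before exiting, so the $\eps_{s-1}$ appearing in $\hat X$ is precisely the accuracy at which the stopping condition $\nu(t_f+1)\le\eps_{s-1}$ was certified, which is what makes the cancellation above exact. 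No further ideas are needed beyond this.
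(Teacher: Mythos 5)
Your proof is correct and follows essentially the same route as the paper: the positive-definiteness comes from Claims~\ref{cl5-mnmxI} and \ref{cl0-mnmxI} together with the positive scalar, and the covering inequality is obtained by rearranging the inner-loop exit condition $\nu(t_f+1)\le\eps_{s_f}$, applying Claim~\ref{cl3-mnmxI} to bound $X(t_f)\bullet F(y(t_f))$ by $(1+\eps_{s_f})\theta(t_f)$, and cancelling against the scaling in the definition of $\hat X$. Your handling of the $\eps_{s-1}$ versus $\eps_{s_f}$ bookkeeping is accurate and matches the paper's convention $s=s_f$.
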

\begin{proof}
	The first claim is immediate from Claim~\ref{cl0-mnmxI}. To see the second claim, we use the definition of $\nu(t_f)$ and the termination condition in line~\ref{s1.-mnmxI} (which is also satisfied even if $X(t_f)\bullet A_{i(t_f)} - X(t_f)\bullet  F(y(t_f))=0$):
	\begin{align*}
	\frac{X(t_f)\bullet A_{i(t_f)} - X(t_f)\bullet  F(y(t_f))}{X(t_1)\bullet A_{i(t_f)} + X(t_f)\bullet  F(y(t_f))} &\leq \bareps.\\
	\therefore (1+\bareps)X(t_f)\bullet F(y(t_f)) &\geq (1-\bareps)X(t_f)\bullet  A_{i(t_f)}\\
	&= (1-\bareps)\max_i X(t_f)\bullet A_i \tag{by the defintition of $i(t_f)$}\\
	\therefore (1+\bareps)^2 \theta(t_f) &\geq (1-\bareps)\max_i X(t_f)\bullet A_i \tag{$\because X(t_f) \bullet F(y(t_f)) \le (1+\bareps)\theta(t_f)$ by Claim~\ref{cl3-mnmxI}}.
	\end{align*}
	The claim follows by the definition of $\hat X$ in step~\ref{so-mnmxI} of the algorithm.
\end{proof}
\begin{claim}(Dual feasibility).\label{cl12-mnmxI}
	$\hat y\ge0$ and $F(\hat y)\succ I$.
\end{claim}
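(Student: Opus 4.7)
The plan is to verify both parts of the claim directly from the algorithm's bookkeeping and the approximation results already established for $\theta(t)$.

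First I would handle non-negativity of $\hat y$. By definition in step~\ref{so-mnmxI}, $\hat y = y(t_f)/\theta(t_f)$. Since $\theta(t_f) > 0$ (it is the smallest positive root of the defining equation in step~\ref{s2-mnmxI}), it suffices to show $y(t) \geq 0$ for all $t$. This is an easy induction on $t$: $y(0) = \frac{1}{r}\sum_{i=1}^r \b1_i \geq 0$, and in step~\ref{s6-mnmxI} we update $y(t+1) = (1-\tau(t+1)) y(t) + \tau(t+1)\b1_{i(t)}$ where $\tau(t+1) \in (0,1)$ by Claim~\ref{cl4-mnmxI} and $\b1_{i(t)} \geq 0$, so $y(t+1)$ is a convex combination of nonnegative vectors.

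For the semidefinite inequality $F(\hat y) \succeq I$, observe that $F(\hat y) = F(y(t_f))/\theta(t_f)$ by linearity of $F$ and the definition of $\hat y$. Thus the claim reduces to showing $F(y(t_f)) \succeq \theta(t_f) I$. But this is immediate from Claim~\ref{cl2-mnmxI}, which gives
\[
\theta(t_f) < \frac{\lambda_{\min}(F(y(t_f)))}{1 + \bareps/n} \leq \lambda_{\min}(F(y(t_f))),
\]
so $F(y(t_f)) - \theta(t_f) I \succ 0$. Dividing by the positive scalar $\theta(t_f)$ yields $F(\hat y) \succ I$, which is stronger than the stated $\succeq I$.

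There is no real obstacle here—the work was already done in establishing Claims~\ref{cl2-mnmxI} and~\ref{cl4-mnmxI}. The only thing worth being careful about is consistently tracking the index: the algorithm exits the inner while-loop with the variable $t$ equal to $t_f+1$, so the last computed primal/dual pair corresponds to index $t_f$, and $\hat X, \hat y$ are built from $(X(t_f), \theta(t_f), y(t_f))$ in step~\ref{so-mnmxI}.
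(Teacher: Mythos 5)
Your proof is correct and follows essentially the same route as the paper: nonnegativity of $\hat y$ from the initialization, Claim~\ref{cl4-mnmxI}, and the convex-combination update, and $F(\hat y)\succ I$ from the eigenvalue bound of Claim~\ref{cl2-mnmxI} after rescaling by $\theta(t_f)$. One small slip at the end: you describe the stated inequality as $\succeq I$, but the claim in fact asserts the strict $\succ I$, which is what your argument (and the paper's) actually establishes.
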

\begin{proof}
	The fact that $\hat y\ge 0$ follows from the initialization of $y(0)$ in step~\ref{s1-mnmxI}, Claim~\ref{cl4-mnmxI},  and the update of $y(t+1)$ in step~\ref{s6-mnmxI}. For the other claim, we have \begin{align*}
	\lambda_{\min}\big(F(\hat y)\big)&=\frac{1}{\theta(t_f)}\lambda_{\min}\big(F(y(t_f))\big)
	\ge1+\frac{\bareps}n.\tag{by Claim~\ref{cl2-mnmxI}}
	\end{align*} 
\end{proof}

\begin{claim}(Approximate optimality).\label{cl13-mnmxI}
	$I\bullet \hat X\ge\left(\frac{1-\bareps}{1+\bareps}\right)^2\b1^T\hat y$.
\end{claim}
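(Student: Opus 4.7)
The proof should be a short, direct computation. The plan is to evaluate $I\bullet\hat X$ and $\b1^T\hat y$ using the definitions from step~\ref{so-mnmxI} and then apply two facts already established: that $\Tr(X(t_f))=g(\theta(t_f))\in(1-\bareps,1)$ from Claim~\ref{cl0--mnmxI}, and that $\b1^T y(t_f)=1$ from Claim~\ref{cl5.-mnmxI}.

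Let $t_f$ be such that $t=t_f+1$ at termination, so $\hat X=\frac{(1-\bareps)X(t_f)}{(1+\bareps)^2\theta(t_f)}$ and $\hat y=\frac{y(t_f)}{\theta(t_f)}$, where here $\bareps$ denotes the final scaled accuracy $\eps_{s_f-1}$ (which is at most the target $\epsilon$ times a constant; the inequality to be proven is in terms of this $\bareps$). First I would observe that
\[
I\bullet\hat X \;=\; \Tr(\hat X) \;=\; \frac{(1-\bareps)\Tr(X(t_f))}{(1+\bareps)^2\,\theta(t_f)}.
\]
Since $\Tr(X(t_f))=\frac{\bareps\,\theta(t_f)}{n}\Tr(F(y(t_f))-\theta(t_f) I)^{-1}=g(\theta(t_f))$, Claim~\ref{cl0--mnmxI} gives $\Tr(X(t_f))>1-\bareps$. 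Thus
\[
I\bullet \hat X \;>\; \frac{(1-\bareps)^2}{(1+\bareps)^2\,\theta(t_f)}.
\]

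Next, by Claim~\ref{cl5.-mnmxI} we have $\b1^T y(t_f)=1$, so
\[
\b1^T\hat y \;=\; \frac{\b1^T y(t_f)}{\theta(t_f)} \;=\; \frac{1}{\theta(t_f)}.
\]
Taking the ratio cancels the $\theta(t_f)$ factor and yields $I\bullet\hat X > \left(\frac{1-\bareps}{1+\bareps}\right)^2 \b1^T\hat y$, which is the desired bound (with the strict inequality strengthening the stated $\ge$).

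There is no real obstacle here; the calculation is immediate once the two auxiliary facts are invoked. The only point requiring a little care is keeping straight that $\bareps$ in step~\ref{so-mnmxI} refers to the scaled accuracy $\eps_{s_f-1}$ of the final outer iteration — which is exactly the parameter relative to which the lower bound $\Tr(X(t_f))>1-\bareps$ from Claim~\ref{cl0--mnmxI} is stated, so the two $\bareps$'s match up with no loss.
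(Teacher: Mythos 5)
Your proposal is correct and is essentially the paper's own proof: the paper likewise cites Claim~\ref{cl0--mnmxI} for $\Tr(X(t_f))\ge 1-\bareps$ and Claim~\ref{cl5.-mnmxI} for $\b1^Ty(t_f)=1$, and then appeals to the definitions of $\hat X$ and $\hat y$ in step~\ref{so-mnmxI}; you have simply spelled out the resulting one-line computation. (One small notational slip: the $\eps_{s-1}$ appearing in step~\ref{so-mnmxI} is $\eps_{s_f}$, not $\eps_{s_f-1}$, since at termination $s=s_f+1$ and the paper then relabels $s:=s_f$ so that $\bareps=\eps_{s_f}$; this does not affect the substance of your argument.)
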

\begin{proof}
	By Claim~\ref{cl0--mnmxI}, we have $\Tr(X(t_f))\ge1-\bareps$, and by Claim~\ref{cl5.-mnmxI}, we have $\b1^Ty(t_f)=1$. The claim follows by the definition of $\hat X$ and $\hat y$ in step~\ref{so-mnmxI}.   
\end{proof}	

\begin{remark}\label{r2}
	Suppose that in step~\ref{s3-mnmxI} of Algorithm~\ref{log-pack-alg}, we instead define $i(t)$ to be an index $i\in[m]$ such that $A_i\bullet X(t)\ge 1-\eps_s$, and we are guaranteed that such index exists in each iteration of the algorithm. Then the dual solution $\hat y$ satisfies: $\b1^T\hat y\le 1+O(\epsilon)$. Indeed, the proof of Claim~\ref{cl11-mnmxI} can be easily modified to show that $\theta(t_f)\ge\frac{(1-\eps_{s_f})^2}{(1+\eps_{s_f})^2}$, which combined with the definition of $\hat y$ in step~\ref{so-mnmxI} of the algorithm implies the claim.
\end{remark}	
\subsubsection{Running Time per Iteration}

\paragraph{Computing $\theta(t)$.} Given $F:=F(y(t))\succ 0$, we first compute an approximation $\widetilde\lambda$ of $\lambda_{\min}(F)$ using Lanczos' algorithm with a random start~\cite{KW99}.
\begin{lemma}[\cite{KW99}]\label{KW}
	Let $M\in\SS_+^n$ be a positive semidefinite matrix with $N$ non-zeros and $\gamma\in(0,1)$ be a given constant. Then there is a randomized algorithm that computes, with high (i.e., $1-o(1)$)  probability a unit vector $v\in\RR^n$ such that $v^TMv\ge(1-\gamma)\lambda_{\max}(M)$. The algorithm takes $O\big(\frac{\log n}{\sqrt{\gamma}}\big)$ iterations, each requiring $O(N)$ arithmetic operations.   
\end{lemma}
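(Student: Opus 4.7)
The plan is to analyze the standard Lanczos iteration applied to $M$ starting from a random unit vector $v_0$ drawn uniformly from the sphere $\SS^{n-1}$. Recall that after $k$ iterations Lanczos produces an orthonormal basis of the Krylov subspace $\mathcal{K}_k(M,v_0)=\spn\{v_0,Mv_0,\ldots,M^{k-1}v_0\}$, and the algorithm's output is a unit vector $v\in\mathcal{K}_k$ maximizing the Rayleigh quotient over $\mathcal{K}_k$. Hence
\[
v^TMv \;=\; \max_{0\neq p\in\mathcal{P}_{k-1}}\frac{\langle Mp(M)v_0,\,p(M)v_0\rangle}{\|p(M)v_0\|^2},
\]
where $\mathcal{P}_{k-1}$ denotes polynomials of degree at most $k-1$. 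So the whole problem reduces to exhibiting a good polynomial.

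Let $\lambda_1\ge\lambda_2\ge\cdots\ge\lambda_n\ge 0$ be the eigenvalues of $M$ with orthonormal eigenvectors $u_1,\ldots,u_n$, and expand $v_0=\sum_i c_i u_i$. The ratio above can be rewritten as a weighted average
\[
\frac{\sum_i \lambda_i\, c_i^2\, p(\lambda_i)^2}{\sum_i c_i^2\, p(\lambda_i)^2}.
\]
The standard trick is to pick $p$ equal to a shifted/scaled Chebyshev polynomial of the first kind $T_{k-1}$ mapped so that the interval $[0,(1-\gamma)\lambda_1]$ is sent into $[-1,1]$ while $\lambda_1$ is pushed outside. Chebyshev's extremal property then gives the amplification
\[
\frac{p(\lambda_1)^2}{\max_{\lambda\le(1-\gamma)\lambda_1}p(\lambda)^2}\;\ge\;\tfrac{1}{4}\bigl(1+2\sqrt{\gamma}\bigr)^{2(k-1)},
\]
so after $k=\Theta(\log(n/c_1^2)/\sqrt{\gamma})$ iterations the weight on $\lambda_1$ dominates the rest, provided the starting coefficient $c_1^2$ is not too small.

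The remaining ingredient is the probabilistic bound on $c_1^2=\langle u_1,v_0\rangle^2$ for a uniformly random unit vector $v_0$. A standard concentration fact for the sphere shows that for any fixed unit $u_1$, $\Pr[c_1^2 \le 1/n^2]=o(1)$, and more generally $c_1^2=\Omega(1/\poly(n))$ with high probability. Plugging this into the amplification bound above, $k=O(\log n/\sqrt{\gamma})$ iterations suffice to ensure $v^TMv\ge(1-\gamma)\lambda_1$ with probability $1-o(1)$. Finally, each Lanczos iteration consists of one matrix-vector product with $M$ together with two orthogonalization steps against the previous two basis vectors (by the three-term recurrence), so its cost is $O(N)+O(n)=O(N)$ arithmetic operations.

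The main obstacle, and the part that deserves care in a full write-up, is the probabilistic bound on $c_1^2$ combined with the Chebyshev estimate: one must argue quantitatively that the (tiny, random) initial bias towards $u_1$ is amplified enough in only $O(\log n/\sqrt{\gamma})$ steps, which is exactly where the $\sqrt{\gamma}$ (rather than $\gamma$) dependence comes from. The analysis of \cite{KW99} carries this out; here we simply invoke their result.
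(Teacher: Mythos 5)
The paper does not supply its own proof of this lemma; it is cited directly from~\cite{KW99}, which the authors invoke as a black box. Your sketch accurately reproduces the Kaniel--Paige / Kucz\'{y}nski--Wo\'{z}niakowski style of argument underlying that reference---optimizing the Rayleigh quotient over the Krylov subspace, bounding it below via a shifted Chebyshev polynomial to obtain the $(1+2\sqrt{\gamma})^{2(k-1)}$ amplification, and combining this with the anti-concentration bound $\Pr[\langle u_1,v_0\rangle^2\le n^{-2}]=o(1)$ for a uniformly random start---so you are in agreement with the source being cited, and the per-iteration cost $O(N)$ from the three-term recurrence is stated correctly.
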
 
By Claim~\ref{cl2-mnmxI}, we need $\widetilde\lambda$ to lie in the range $[\frac{\lambda_{\min}(F)}{1+\eps_s/n},\lambda_{\min}(F)]$.
To obtain $\widetilde\lambda$, we  may apply the above lemma with $M:=F^{-1}$ and $\gamma:=\frac{\eps_s}{2n}$. Then in $O\big(\sqrt{\frac{n}{\eps_s}}\log n\big)$ iterations we get $\widetilde\lambda:=\frac{1-\gamma}{v^TF^{-1}v}$ satisfying our requirement. However, we can save (roughly) a factor of $\sqrt{n}$ in the running time by using, instead, $M:=F^{-n}$ and $\gamma:=\frac{\eps_s}{2}$. Let $v$ be the vector obtained from Lemma~\ref{KW}, and set $\widetilde \lambda:=\left(\frac{1-\gamma}{v^TF^{-n}v}\right)^{1/n}$. Then, as $\lambda_{\max}(M)\ge v^TMv\ge(1-\gamma)\lambda_{\max}(M)$, and $\lambda_{\min}(F)=\lambda_{\max}(F^{-n})^{-1/n}$, we get \begin{align}\label{ee--I}
\frac{\lambda_{\min}(F)}{1+\eps_s/n}\le(1-\gamma)^{1/n}\lambda_{\min}(F)\le\widetilde\lambda \le\lambda_{\min}(F).
\end{align}
Note that we can compute $F^{-n}$ in $O(n^{\omega}\log n)$, where $w$ is the {\it exponent of matrix multiplication}. Thus, the overall running time for computing $\widetilde\lambda$ is $O(n^{\omega}\log n+\frac{n^2\log n}{\sqrt{\eps_s}})$.

Given $\widetilde\lambda$, we know by Claim~\ref{cl2-mnmxI} and \raf{ee--I} that $\theta^*(t)\in[\frac{\widetilde\lambda}{1+\eps_s},\widetilde\lambda]$. Then we can apply binary search to find $\theta(t):=\theta^*(t)_{\delta_s}$ as follows. Let $\theta_k=\frac{\widetilde\lambda}{1+\eps_s}(1+\delta_s)^k$, for $k=0,1,\ldots,K:=\lceil\frac{2\ln(1+\eps_s)}{\delta_s}\rceil$, and note that $\theta_L\ge\widetilde\lambda$. Then we do binary search on the exponent $k\in\{0,1\ldots,K\}$; each step of the search evaluates $g(\theta_k):=\displaystyle\frac{\bareps \theta_\ell}{n}\Tr(F-\theta_k I)^{-1}$, and depending on whether this value is less than or at least $1$, the value of $k$ is increased or decreased, respectively. The search stops when the search interval $[\ell,u]$ has $u\le \ell+1$, in which case we set $\theta(t)=\theta_{\ell}$; the number of steps until this happens is $O(\log K)=O(\log\frac{1}{\delta_s})=O(\log\frac n{\eps_s})$. By the monotonicity of $g(x)$ (in the interval $[0,\lambda_{\min}(F)]$), and the property of binary search, we know that $\theta^*\in[\theta_\ell,\theta_u]$. Thus, by the stopping criterion, $$\theta(t)=\theta_\ell\le\theta^*(t)\le\theta_u\le\theta_{\ell+1}=(1+\delta_s)\theta_\ell,$$
implying that $(1-\delta_s)\theta^*(t)\le\theta(t)\le\theta^*(t)$.
Since evaluating $g(\theta_\ell)$ takes $O(n^\omega)$, the overall running time for the binary search procedure is $O(n^{\omega}\log\frac n{\eps_s})$, and hence the total time needed for for computing $\theta(t)$ is $O(n^{\omega}\log \frac n{\epsilon}+\frac{n^2\log n}{\sqrt{\epsilon}})$.

All other steps of the algorithm inside the inner while-loop can be done in $O(\cT+n^2)$ time, where $\cT$ is the time taken by a single call to the oracle Max$(X(t))$ in step~\ref{s3-mnmxI} of the algorithm. Thus, in view of Claim~\ref{cl10-mnmxI}, we obtain the following  result.
\begin{theorem}\label{thm:pack-main}
	For any $\epsilon>0$, Algorithm~\ref{log-pack-alg} outputs an $O(n\log \psi+\frac{n}{\epsilon^2})$-sparse $O(\epsilon)$-optimal primal-dual pair in time\footnote{$\widetilde O(\cdot)$ hides polylogarithmic factors in $n$ and $\frac1\epsilon$.} $O\big((n\log \psi+\frac{n}{\epsilon^2})(n^{\omega}\log \frac n{\epsilon}+\frac{n^2\log n}{\sqrt{\epsilon}}+\cT)\big)=\widetilde O\big(\frac{n^{\omega+1}\log \psi}{\epsilon^{2.5}}+\frac{n\cT\log\psi}{\epsilon^2}\big)$. 
\end{theorem}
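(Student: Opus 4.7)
The theorem packages together three deliverables: sparsity of the dual solution, $O(\epsilon)$-optimality of the returned primal-dual pair, and the running-time bound. My plan is to extract each of these from the machinery already set up in Claims~\ref{cl0-mnmxI}--\ref{cl13-mnmxI} and the per-iteration cost analysis for computing $\theta(t)$ given above.

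For \emph{sparsity}, I would observe that the support of $y(t)$ can grow by at most one index per iteration of the inner while-loop, because the update in step~\ref{s6-mnmxI} is a convex combination of $y(t)$ with the single unit vector $\b1_{i(t)}$. Since the initialization in step~\ref{s1-mnmxI} places at most $r$ indices in the support (and $r \le \psi$ by Claim~\ref{cl9-mnmxI}), the total support size of $\hat y$ is bounded by $r$ plus the number of iterations, which by Claim~\ref{cl10-mnmxI} is $O(n\log\psi + n/\epsilon^2)$. For \emph{$O(\epsilon)$-optimality}, Claim~\ref{cl11-mnmxI} gives $\hat X \succeq 0$ and $A_i\bullet\hat X\le 1$, Claim~\ref{cl12-mnmxI} gives $\hat y\ge 0$ and $F(\hat y)\succeq I$ (so both $\hat X$ and $\hat y$ are feasible for \raf{packI} and \raf{coverI} respectively), and Claim~\ref{cl13-mnmxI} gives $I\bullet\hat X\ge ((1-\bareps)/(1+\bareps))^2\b1^T\hat y$; since the terminating scale satisfies $\bareps\le\epsilon$, the prefactor is $1-O(\epsilon)$, and weak duality then yields $I\bullet \hat X \ge (1-O(\epsilon))z_I^*$, i.e.\ $(\hat X,\hat y)$ is $O(\epsilon)$-optimal in the sense of \raf{pdf-I}.

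For \emph{running time}, I combine the iteration count from Claim~\ref{cl10-mnmxI} with the per-iteration cost established in the preceding ``Running Time per Iteration'' subsection. The dominant cost inside the inner loop is the computation of $\theta(t)$, shown there to be $O(n^{\omega}\log(n/\epsilon) + n^2\log n/\sqrt{\epsilon})$ using Lanczos with a random start (Lemma~\ref{KW}) followed by binary search on the exponent $k$ in a geometric grid; all remaining operations inside the loop (computing $X(t)$, invoking the oracle, updating $y(t+1)$, etc.)\ take $O(\cT + n^2)$ time. Multiplying by the $O(n\log\psi + n/\epsilon^2)$ iteration bound and absorbing lower-order factors into $\widetilde O(\cdot)$ yields the stated $\widetilde O(n^{\omega+1}\log\psi/\epsilon^{2.5} + n\cT\log\psi/\epsilon^2)$ bound.

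The only nontrivial ``obstacle'' here is really just the bookkeeping to see that the sparsity of $\hat y$ tracks the iteration count and does not include a hidden dependence on $m$; this is immediate because the only way new indices enter $\supp(y)$ is via the oracle $\text{Max}(X(t))$, contributing one index per iteration. Everything else in the theorem is a direct assembly of previously proven claims, so the proof is essentially a one-paragraph synthesis once all three pieces are cited in order.
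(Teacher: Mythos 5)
Your proposal is correct and takes essentially the same route as the paper: Theorem~\ref{thm:pack-main} is proved by directly assembling the iteration bound of Claim~\ref{cl10-mnmxI}, the feasibility/optimality statements in Claims~\ref{cl11-mnmxI}--\ref{cl13-mnmxI}, and the per-iteration cost analysis for computing $\theta(t)$, exactly as you do. One small mis-citation worth fixing: Claim~\ref{cl9-mnmxI} does not state $r\le\psi$ (it upper-bounds the ratio $X(0)\bullet A_{i(0)}/(X(0)\bullet F(y(0)))$ by $\psi$); the $r$ indices contributed by the initialization $y(0)=\frac1r\sum_{i=1}^r\b1_i$ are instead absorbed into the stated sparsity bound because without loss of generality $r\le n=O(n/\epsilon^2)$.
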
	

%
%
\subsection{Algorithm for~\raf{packII}-\raf{coverII}}\label{sec:log-cover}
In this section we give an algorithm for finding a sparse $O(\epsilon)$-optimal primal-dual solution for~\raf{packII}-\raf{coverII}.

For numbers $x\in\RR_+$ and $\delta\in(0,1)$,  a $\delta$-(upper) approximation $x^\delta$ of $x$ is a number such that  $x\le x^\delta<(1+\delta)x$.

The algorithm is shown as Algorithm~\ref{log-cover-alg}. 


\begin{algorithm}[H]
	\SetAlgoLined
	$s \gets 0$; $\eps_0\gets \frac14$; $t\gets 0$; 	 $\nu(0) \gets 1$; $y(0) \gets\b1_i$ (for an arbitrary $i\in[m]$)\label{s1-mnmxII}\\
	\While{$\bareps>\epsilon$}{\label{s0.-mnmxII}
		$\delta_s\gets\frac{\eps_s^3}{32n}$\\
		\While{$\nu(t) >\bareps$}{ \label{s1.-mnmxII}
			$\theta(t)\gets\theta^*(t)^{\delta_s}$, where $\theta^*(t)$ is the smallest positive number root of the equation $\displaystyle\frac{\bareps\theta}{n}\Tr(\theta I-F(y(t)))^{-1}=1$ \label{s2-mnmxII}\\
			$X(t) \gets \displaystyle\frac{\bareps\theta(t)}{n}(\theta(t) I-F(y(t)))^{-1}$ 	/* Set the primal solution */ \label{s3.-mnmxII}\\
			$i(t) \gets\argmin_{i} A_i\bullet X(t)$ /* Call the minimization oracle */ \label{s3-mnmxII}\\ 
			$\nu(t+1) \gets\displaystyle\frac{ X(t)\bullet  F(y(t)) -X(t)\bullet A_{i(t)}}{X(t)\bullet A_{i(t)}+X(t)\bullet F(y(t))}$ \label{s4-mnmxII}/*  Compute the error */\\
			$\tau(t+1) \gets \displaystyle\frac{\bareps \theta(t) \nu(t+1)}{4n(X(t)\bullet A_{i(t)}+X(t)\bullet F(y(t)))}$ \label{s5-mnmxII}/*  Compute the step size */\\
			$y(t+1) \gets (1-\tau(t+1))y(t)+\tau(t+1) \b1_{i(t)}$ 	/* Update the dual solution */ \label{s6-mnmxII}\\
			$t \gets t+1$}
		$\eps_{s+1}\gets\frac{\eps_{s}}2$\\
		$s\gets s+1$}
	$\hat X\gets \frac{(1+\eps_{s-1})X(t-1)}{(1-2\eps_{s-1})^2\theta(t-1))}$; $\hat y\gets \frac{y(t-1)}{\theta(t-1)}$\label{so-mnmxII}\\
	\Return $(\hat X,\hat y,t)$
	\caption{Logarithmic-potential Algorithm for~\raf{packII}-\raf{coverII}}\label{log-cover-alg}
\end{algorithm}
\subsection{Analysis}
\subsubsection{Some Preliminaries}
Up to Claim~\ref{cl10-mnmxII}, we fix a particular iteration $s$ of the outer while-loop in the algorithm.
For simplicity in the following, we will sometimes write $F:=F(y(t))$, $\theta:=\theta(t)$, $\theta^*:=\theta^*(t)$, $X:=X(t)$, $\hat F:=A_{i(t)}$, $\tau:=\tau(t+1)$, $\nu:=\nu(t+1)$, $F':=F(y(t+1))$, and $\theta':=\theta(t+1)$, when the meaning is clear from the context. 
For $H\succ 0$ and $x\in(0,\lambda_{\min}(H))$, define the following logarithmic potential function:
\begin{align}\label{log-pot-}
\Phi(x,H)=\ln x-\frac{\bareps}{n}\ln\det\big(x I-H\big).
\end{align}
\begin{claim}\label{cl0-mnmxII}
	If $\lambda_{\max}(F)> 0$, then $\theta^*(t)=\argmin_{x>\lambda_{\max}(F)}\Phi(x,F(y(t)))$ and $X(t)\succ0$.
\end{claim}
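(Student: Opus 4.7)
The plan is to mirror the proof of Claim~\ref{cl0-mnmxI}, but accounting for the sign flip in the log-det term of the potential~\raf{log-pot-}. First I would compute the derivative
\[
\frac{d\Phi(x,F)}{dx}=\frac{1}{x}-\frac{\bareps}{n}\Tr\bigl((xI-F)^{-1}\bigr),
\]
whose natural domain is $x>\lambda_{\max}(F)$, the set where $xI-F\succ 0$ and $\ln\det(xI-F)$ is real-valued. Setting this derivative to zero yields precisely the equation defining $\theta^*(t)$ in step~\ref{s2-mnmxII}, so any interior minimizer must satisfy that equation.

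Unlike the packing case, $\Phi(\cdot,F)$ is not globally convex, so I would not argue via a second derivative. Instead I would factor
\[
\frac{d\Phi(x,F)}{dx}=\frac{1}{x}\bigl(1-g(x)\bigr),\qquad g(x):=\frac{\bareps}{n}\sum_{j=1}^n\frac{x}{x-\lambda_j(F)},
\]
and show that $g$ is strictly decreasing on $(\lambda_{\max}(F),\infty)$ by noting that
$\frac{d}{dx}\frac{x}{x-\lambda_j(F)}=-\frac{\lambda_j(F)}{(x-\lambda_j(F))^2}\le 0$, with strict inequality for at least one index $j$ since $\lambda_{\max}(F)>0$. Combined with the limits $g(x)\to+\infty$ as $x\downarrow\lambda_{\max}(F)$ and $g(x)\to\bareps<1$ as $x\to\infty$, this gives a unique root $\theta^*>\lambda_{\max}(F)$ of $g(x)=1$. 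Since $\frac{d\Phi}{dx}<0$ for $\lambda_{\max}(F)<x<\theta^*$ and $>0$ for $x>\theta^*$, together with $\Phi(x,F)\to+\infty$ at both ends of the domain (the left end from the $-\ln\det$ term, the right end because $\Phi(x,F)\sim(1-\bareps)\ln x$), this $\theta^*$ is the unique global minimizer on $(\lambda_{\max}(F),\infty)$, as claimed.

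Finally, since the definition of a $\delta_s$-upper approximation gives $\theta(t)\ge\theta^*(t)>\lambda_{\max}(F)$, the matrix $\theta(t)I-F$ is positive definite, and so $X(t)=\frac{\bareps\theta(t)}{n}(\theta(t)I-F)^{-1}\succ 0$. The only subtle step is the argument for uniqueness and global minimality in the absence of convexity; this is handled entirely through the monotonicity of $g$, which I expect to be straightforward once the factorization $\frac{d\Phi}{dx}=\frac{1}{x}(1-g(x))$ is written down.
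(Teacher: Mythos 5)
Your proof is correct and follows essentially the same route as the paper; the difference is that the paper merely asserts (parenthetically) that $\frac{d\Phi}{dx}<0$ for $\lambda_{\max}(F)<x<\theta^*$ and $\frac{d\Phi}{dx}>0$ for $x>\theta^*$, whereas you actually justify this via the factorization $\frac{d\Phi}{dx}=\frac{1}{x}(1-g(x))$ and the strict monotonicity of $g$, together with the boundary behavior $g(x)\to+\infty$ as $x\downarrow\lambda_{\max}(F)$ and $g(x)\to\bareps<1$ as $x\to\infty$. (The monotonicity of $g$ the paper does establish separately in Claim~\ref{cl0--mnmxII}, so the ingredient you supply is implicitly available there.) Your argument for $X(t)\succ 0$ — that $\theta(t)\ge\theta^*(t)>\lambda_{\max}(F)$ makes $\theta(t)I-F\succ 0$ — matches the paper. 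In short: same approach, slightly more rigorous on the uniqueness-and-global-minimality point the paper takes as evident.
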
	
\begin{proof}
	Note that 
	\begin{align*}
	\frac{d\Phi(x,F)}{dx}&=\frac{1}{x}-\frac{\bareps}{n}\Tr\big((x I-F)^{-1}\big)\quad\text{ and }\quad
	\frac{d^2\Phi(x,F)}{dx^2}=-\frac{1}{x^2}+\frac{\bareps}{n}\Tr\big((x I-F)^{-2}\big).
	\end{align*}
	Note that $ \Phi(x,F)$ is not convex in  $x\in(\lambda_{\max}(F),+\infty)$, but has a unique minimizer in this interval, defined by setting $\frac{d\Phi(x,F)}{dx}=0$, giving the definition $\theta^*(t)$ in step~\ref{s2-mnmxII} of Algorithm~\ref{log-cover-alg}. (Indeed, $\frac{d\Phi(x,F)}{dx}<0$ for $\lambda_{\max}(F)<x<\theta^*(t)$, while $\frac{d\Phi(x,F)}{dx}>0$ for $x>\theta^*(t)$.) Also, by definition of $X$ in
	step~\ref{s3.-mnmxI}, $\lambda_{\min}(X)=\frac{\bareps\theta}{n}(\theta-\lambda_{\min}(F))^{-1}>0$ (as $\theta\ge\theta^*>\lambda_{\max}(F)\ge\lambda_{\min}(F)$), implying that $X\succ 0$.
\end{proof}

For $x\in(\lambda_{\max}(H),+\infty)$, let $g(x):=\displaystyle\frac{\bareps x}{n}\Tr(xI-H)^{-1}$.
The following claim shows that our choice of  ${\delta_s}$ guarantees that  $g(\theta)$ is a good approximation of $g(\theta^*)=1$.
\begin{claim}\label{cl0--mnmxII}
	$g(\theta(t))\in(1-\bareps,1]$.
\end{claim}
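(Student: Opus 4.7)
The plan is to mirror the argument of Claim~\ref{cl0--mnmxI} with the appropriate sign changes. In the present covering-II setting $\theta(t)$ is a $\delta_s$-\emph{upper} approximation of $\theta^*(t)$, so $\theta^*\le\theta\le(1+\delta_s)\theta^*$, and $\theta$ lies \emph{above} all eigenvalues of $F:=F(y(t))$ rather than below them. Accordingly, $g$ will turn out to be monotone \emph{decreasing} and convex on $(\lambda_{\max}(F),+\infty)$, which flips the role of the two one-sided bounds compared to Claim~\ref{cl0--mnmxI}.

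First I would differentiate $g$ twice on $(\lambda_{\max}(F),+\infty)$:
\[
g'(x) \;=\; -\frac{\bareps}{n}\sum_{j=1}^{n}\frac{\lambda_j(F)}{(x-\lambda_j(F))^2}, \qquad g''(x) \;=\; \frac{2\bareps}{n}\sum_{j=1}^{n}\frac{\lambda_j(F)}{(x-\lambda_j(F))^3}.
\]
Since $\lambda_j(F)\ge 0$, we get $g'\le 0$ and $g''\ge 0$; thus $g$ is (weakly) decreasing and convex on $(\lambda_{\max}(F),+\infty)$. Because $\theta\ge\theta^*$, monotonicity gives the upper bound $g(\theta)\le g(\theta^*)=1$ for free.

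For the lower bound I would use the convexity tangent inequality at $\theta^*$:
\[
g(\theta) \;\ge\; g(\theta^*) + (\theta-\theta^*)\,g'(\theta^*) \;=\; 1 \;-\; (\theta-\theta^*)\,\frac{\bareps}{n}\sum_{j}\frac{\lambda_j(F)}{(\theta^*-\lambda_j(F))^2}.
\]
To bound the sum, I would write $\lambda_j(F)=\theta^*-(\theta^*-\lambda_j(F))$ and split:
\[
\sum_{j}\frac{\lambda_j(F)}{(\theta^*-\lambda_j(F))^2} \;=\; \theta^*\sum_{j}\frac{1}{(\theta^*-\lambda_j(F))^2}\;-\;\sum_{j}\frac{1}{\theta^*-\lambda_j(F)}.
\]
The definition of $\theta^*$ gives $\sum_j(\theta^*-\lambda_j(F))^{-1}=n/(\bareps\theta^*)$, and the elementary inequality $\sum x_j^2\le(\sum x_j)^2$ (for $x_j=(\theta^*-\lambda_j(F))^{-1}\ge 0$) then bounds the squared sum by $n^2/(\bareps\theta^*)^2$. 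Combining and using $\theta-\theta^*\le\delta_s\theta^*$,
\[
(\theta-\theta^*)\frac{\bareps}{n}\sum_j\frac{\lambda_j(F)}{(\theta^*-\lambda_j(F))^2} \;\le\; \delta_s\,\theta^*\!\left(\frac{n}{\bareps\theta^*}-\frac{1}{\theta^*}\right) \;=\; \delta_s\cdot\frac{n-\bareps}{\bareps} \;\le\; \frac{\bareps^2}{32} \;<\; \bareps,
\]
by the choice $\delta_s=\bareps^3/(32n)$. Therefore $g(\theta)>1-\bareps$, completing the claim.

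I do not expect any real obstacle; the only subtlety is the sign bookkeeping. In Claim~\ref{cl0--mnmxI} convexity was used \emph{together with} monotone increase to lower-bound $g(\theta)$ with $\theta<\theta^*$, whereas here $g$ is decreasing and $\theta\ge\theta^*$, so $(\theta-\theta^*)g'(\theta^*)\le 0$ — but the tangent inequality still supplies a non-trivial lower bound, and the same sum-manipulation trick carries over verbatim.
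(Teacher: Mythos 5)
Your proposal is correct and follows essentially the same path as the paper: strict convexity and monotone decrease of $g$ on $(\lambda_{\max}(F),\infty)$, the tangent-line lower bound at $\theta^*$, the identity $\sum_j(\theta^*-\lambda_j(F))^{-1}=n/(\bareps\theta^*)$, the bound $\sum_jx_j^2\le(\sum_jx_j)^2$, and the choice of $\delta_s$. The only cosmetic difference is that you express $g'(\theta^*)$ as a single sum over $\lambda_j/(\theta^*-\lambda_j)^2$ and then re-split it via $\lambda_j=\theta^*-(\theta^*-\lambda_j)$, whereas the paper keeps the two pieces of $g'$ separate throughout; the resulting estimates are identical.
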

\begin{proof}
	For $x\in(\lambda_{\max}(H),+\infty)$, we have
	\begin{align}\label{g-II}
	\frac{dg(x)}{dx}&=\frac{\bareps}{n}\sum_{j=1}^n\frac{1}{x-\lambda_j(F)}-\frac{\bareps x}{n}\sum_{j=1}^n\frac{1}{(x-\lambda_j(F))^2}=-\frac{\bareps}{n}\sum_{j=1}^n\frac{\lambda_j(F)}{(x-\lambda_j(F))^2}<0,\\
	\frac{d^2g(x)}{dx^2}&=-\frac{2\bareps}{n}\sum_{j=1}^n\frac{1}{(\lambda_j(F)-x)^2}+\frac{2\bareps x}{n}\sum_{j=1}^n\frac{1}{(\lambda_j(F)-x)^3}=\frac{2\bareps}{n}\sum_{j=1}^n\frac{\lambda_j(F)}{(x-\lambda_j(F))^3}>0.
	\end{align}	 
	Thus, $g(x)$ is monotone decreasing and strictly convex in $x$.  As $\theta\ge\theta^*$, we have $g(\theta)\le g(\theta^*)=1$. Moreover, by convexity,
	\begin{align}\label{g--}
	g(\theta)&\ge g(\theta^*)+(\theta-\theta^*)\left.\frac{dg(x)}{dx}\right|_{x=\theta^*}\nonumber\\
	&\ge 1+{\delta_s}\frac{\bareps\theta^*}{n}\sum_{j=1}^n\frac{1}{\theta^*-\lambda_j(F)}-{\delta_s}\frac{\bareps}{n}\sum_{j=1}^n\left(\frac{\theta^*}{\theta^*-\lambda_j(F)}\right)^2\nonumber\tag{$\because \theta<(1+\delta_s)\theta^*$ and $\left.\frac{dg(x)}{dx}\right|_{x=\theta^*}<0$}\\
	&\ge 1+{\delta_s}-{\delta_s}\frac{\bareps}{n}\left(\sum_{j=1}^n\frac{\theta^*}{\theta^*-\lambda_j(F)}\right)^2\nonumber\tag{by defintition of $\theta^*$ and $\sum_jx_j^2\le\left(\sum_jx_j\right)^2$ for nonnengative $x_j$'s}\\
	&=1+{\delta_s}-\delta_s\frac{n}{\bareps}>1-\bareps.\tag{by defintition of $\theta^*$ and $\delta_s$}
	\end{align}	 
\end{proof}
The following claim shows that $\theta(t)$ provides a good approximation for $\lambda_{\max}(F(y(t)))$.
\begin{claim}\label{cl2-mnmxII}
	$\frac{\lambda_{\max}(F(y(t)))}{1-\bareps/n}<\theta(t)\le \frac{(1-\bareps)\lambda_{\max}(F(y(t)))}{1-2\bareps} $ and $\frac{\lambda_{\max}(F(y(t)))}{1-\bareps/n}\le\theta^*(t)\le \frac{\lambda_{\max}(F(y(t)))}{1-\bareps}$.
\end{claim}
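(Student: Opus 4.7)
The plan is to mirror the proof of Claim~\ref{cl2-mnmxI}, but now with $\lambda_{\min}$ replaced by $\lambda_{\max}$ and the sign conventions appropriate to the covering setting. By Claim~\ref{cl0--mnmxII} we have the two-sided estimate
\[
1-\bareps \;<\; \frac{\bareps\,\theta(t)}{n}\sum_{j=1}^n\frac{1}{\theta(t)-\lambda_j(F(y(t)))} \;\le\; 1,
\]
where by Claim~\ref{cl0-mnmxII} (and the fact that $\theta(t)\ge\theta^*(t)$ since $\theta(t)=\theta^*(t)^{\delta_s}$ is a $\delta_s$-upper approximation), we have $\theta(t)>\lambda_{\max}(F(y(t)))\ge\lambda_j(F(y(t)))$ for every $j$, so every summand is positive.

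The next step is to bracket the middle sum by its single largest term and by $n$ copies of it:
\[
\frac{1}{\theta(t)-\lambda_{\max}(F(y(t)))}\;\le\;\sum_{j=1}^n\frac{1}{\theta(t)-\lambda_j(F(y(t)))}\;\le\;\frac{n}{\theta(t)-\lambda_{\max}(F(y(t)))}.
\]
Combining the upper bound $g(\theta(t))\le 1$ with the left inequality yields $\bareps\,\theta(t)\le n(\theta(t)-\lambda_{\max}(F))$, which rearranges to $\theta(t)\ge\lambda_{\max}(F(y(t)))/(1-\bareps/n)$; the strict inequality in the statement comes from the fact that the sum strictly exceeds its largest term whenever more than one eigenvalue contributes, which is the case of interest. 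Combining the lower bound $g(\theta(t))>1-\bareps$ with the right inequality yields $\bareps\,\theta(t)>(1-\bareps)(\theta(t)-\lambda_{\max}(F))$, and solving for $\theta(t)$ gives $\theta(t)\le(1-\bareps)\lambda_{\max}(F(y(t)))/(1-2\bareps)$, which is the claimed upper bound.

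For $\theta^*(t)$, the defining equation in step~\ref{s2-mnmxII} is exactly $g(\theta^*(t))=1$, so the same two-sided bracketing of the sum, now with the single equality $g(\theta^*(t))=1$ in place of the interval $(1-\bareps,1]$, produces both $\theta^*(t)\ge\lambda_{\max}(F(y(t)))/(1-\bareps/n)$ and $\theta^*(t)\le\lambda_{\max}(F(y(t)))/(1-\bareps)$ by the same algebra. No step is genuinely difficult; the only point requiring care is ensuring that $\theta(t),\theta^*(t)>\lambda_{\max}(F(y(t)))$ throughout, which is guaranteed by Claim~\ref{cl0-mnmxII} together with $\theta(t)\ge\theta^*(t)$.
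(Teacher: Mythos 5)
Your proof is correct and follows exactly the approach of the paper: using the two-sided bound from Claim~\ref{cl0--mnmxII} and bracketing $\sum_j \frac{1}{\theta-\lambda_j(F)}$ between $\frac{1}{\theta-\lambda_{\max}(F)}$ and $\frac{n}{\theta-\lambda_{\max}(F)}$, then solving the resulting inequalities. The remark about the strictness of the first inequality is a valid observation that the paper glosses over; the algebra requires $\bareps<\frac12$ for the upper bound on $\theta(t)$, which holds since $\eps_0=\frac14$, though neither you nor the paper states this explicitly.
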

\begin{proof}
	By Claim~\ref{cl0--mnmxII}, we have
	\begin{align}\label{eq3-mnmxII}
	1-\bareps< \frac{\bareps \theta(t)}{n}\sum_{j=1}^n \frac{1}{\theta(t)-\lambda_j(F)} \le 1.
	\end{align}
	The middle term in \raf{eq3-mnmxII} is at least $\frac{\bareps \theta(t)}{n}\frac{1}{\theta(t)-\lambda_{\max}(F)}$ and at most $\frac{\bareps \theta(t)}{n}\frac{n}{\theta(t)-\lambda_{\max}(F)}$, which implies the claim for $\theta(t)$. The claim for $\theta^*(t)$ follows similarly. 
\end{proof}

%
\begin{claim}\label{cl3-mnmxII}
	$(1-2\bareps)\theta(t)< X(t) \bullet F(y(t)) \le (1-\bareps)\theta(t)$.
\end{claim}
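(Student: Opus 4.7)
The plan is to mimic the proof of Claim~\ref{cl3-mnmxI} for the packing-I setting, adapting it for the different sign convention coming from $(\theta I - F)$ instead of $(F - \theta I)$. The key observation is that the definition of $X := X(t)$ in step~\ref{s3.-mnmxII} is equivalent to the matrix identity $(\theta I - F) X = \frac{\bareps\theta}{n} I$, which gives
\[
F X = \theta X - \frac{\bareps\theta}{n} I.
\]
Taking traces yields $X \bullet F = \Tr(FX) = \theta\,\Tr(X) - \bareps\theta$, so everything reduces to estimating $\Tr(X)$.

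Next I would observe that $\Tr(X) = \frac{\bareps\theta}{n}\Tr\bigl((\theta I - F)^{-1}\bigr)$ is exactly $g(\theta)$ from the paragraph preceding Claim~\ref{cl0--mnmxII}. By Claim~\ref{cl0--mnmxII}, $g(\theta(t)) \in (1-\bareps,\, 1]$. Substituting into $X\bullet F = \theta(g(\theta) - \bareps)$, the upper bound in $g$ gives
\[
X \bullet F \;\le\; \theta(1 - \bareps),
\]
and the strict lower bound on $g$ gives
\[
X \bullet F \;>\; \theta(1 - \bareps - \bareps) \;=\; (1 - 2\bareps)\theta,
\]
which is exactly the claim.

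There is no real obstacle: the argument is a two-line computation once one recognizes that $\Tr(X) = g(\theta)$. The only point to verify carefully is that the defining equation for $X(t)$ in step~\ref{s3.-mnmxII} can indeed be rewritten in the form $(\theta I - F) X = \frac{\bareps\theta}{n} I$ (and that $\theta I - F \succ 0$, which holds since $\theta \ge \theta^* > \lambda_{\max}(F)$ by Claim~\ref{cl0-mnmxII}), so that the trace identity above is valid.
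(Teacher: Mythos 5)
Your proof is correct and follows the same route as the paper's: both rewrite the defining equation of $X(t)$ as $(\theta I - F)X = \frac{\bareps\theta}{n}I$, take traces to get $X\bullet F = \theta\,\Tr(X) - \bareps\theta$, and then plug in $\Tr(X) = g(\theta) \in (1-\bareps,1]$ from Claim~\ref{cl0--mnmxII}. The only difference is that you spell out the identification $\Tr(X) = g(\theta)$ explicitly, which the paper leaves implicit.
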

\begin{proof}
	By the definition of $X$, we have $(\theta I-F)X=\frac{\bareps \theta}{n}I$. It follows from Claim~\ref{cl0--mnmxII} that
	\begin{align*}
	X\bullet F &= \theta\Tr(X)-\frac{\bareps \theta}{n}\Tr(I)\in\big((1-\bareps,1]-\bareps\big)\theta=\big((1-2\bareps)\theta,(1-\bareps)\theta\big].
	\end{align*}
\end{proof}

\begin{claim}\label{cl5.-mnmxII}
	$\b1^Ty(t)=1$.
\end{claim}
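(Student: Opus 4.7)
The plan is to prove this by a straightforward induction on $t$, mirroring the argument used for the analogous Claim~\ref{cl5.-mnmxI} in the analysis of Algorithm~\ref{log-pack-alg}. The statement is purely a book-keeping invariant about the dual iterate, and the structure of the update rule in Algorithm~\ref{log-cover-alg} is identical (as a convex-combination update) to that in Algorithm~\ref{log-pack-alg}, so essentially the same short argument should go through.

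For the base case, I would invoke the initialization in step~\ref{s1-mnmxII}, where $y(0)\gets\b1_i$ for some $i\in[m]$. Since $\b1_i$ is a unit coordinate vector in $\RR^m$, we immediately have $\b1^Ty(0)=1$. (This differs slightly from the packing version, where $y(0)=\tfrac1r\sum_{i=1}^r\b1_i$, but in both cases the initial vector lies on the simplex.)

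For the inductive step, assume $\b1^Ty(t)=1$. The update in step~\ref{s6-mnmxII} sets
\[
y(t+1)\;=\;(1-\tau(t+1))\,y(t)\;+\;\tau(t+1)\,\b1_{i(t)},
\]
so
\[
\b1^Ty(t+1)\;=\;(1-\tau(t+1))\,\b1^Ty(t)\;+\;\tau(t+1)\,\b1^T\b1_{i(t)}\;=\;(1-\tau(t+1))\cdot 1+\tau(t+1)\cdot 1\;=\;1.
\]
This closes the induction. There is no real obstacle here; the only thing one should arguably remark on is that the step size $\tau(t+1)$ need not lie in $[0,1]$ for the identity $\b1^Ty(t+1)=1$ to hold (the argument above is purely linear and does not use nonnegativity of $1-\tau$), though of course the analogue of Claim~\ref{cl4-mnmxI} for the covering version will later be needed to ensure $y(t+1)\ge 0$.
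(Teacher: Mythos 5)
Your proof is correct and takes essentially the same approach as the paper, which simply observes that the claim is immediate from the initialization in step~\ref{s1-mnmxII} and the convex-combination update in step~\ref{s6-mnmxII}; you have merely spelled out the induction explicitly.
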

\begin{proof} 
	This is immediate from the initialization of $y(0)$ in step~\ref{s1-mnmxII} and the update of $y(t+1)$ in step~\ref{s6-mnmxII} of the algorithm.
\end{proof}	

\begin{claim}\label{cl4-mnmxII}
	For all  iterations $t$ in the while-loop, except possibly the last, $\nu(t+1),\tau(t+1) \in(0,1)$.
\end{claim}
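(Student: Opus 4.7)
The plan is to follow the same pattern as Claim~\ref{cl4-mnmxI} but adapted to the signs in the covering case.

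First I would show $\nu(t+1)\ge 0$, which is the key observation that requires the oracle and the structure of $F(y(t))$. Since $y(t)\ge 0$ and $\b1^Ty(t)=1$ by Claim~\ref{cl5.-mnmxII}, the quantity
\[
X(t)\bullet F(y(t))=\sum_{i=1}^m y_i(t)\,\bigl(X(t)\bullet A_i\bigr)
\]
is a convex combination of the scalars $X(t)\bullet A_i$, so it is at least $\min_i X(t)\bullet A_i=X(t)\bullet A_{i(t)}$. Hence the numerator in $\nu(t+1)$ is non-negative, giving $\nu(t+1)\ge 0$. If equality holds then $X(t)\bullet F(y(t))=X(t)\bullet A_{i(t)}$ which forces $\nu(t+1)=0\le\bareps$, so the inner while-loop terminates and this is the last iteration; in every other iteration $\nu(t+1)>0$.

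Next I would argue $\nu(t+1)<1$. Since each $A_i\in\SS_+^n$ is non-zero, and since $X(t)\succ 0$ by Claim~\ref{cl0-mnmxII}, writing $X(t)\bullet A_{i(t)}=\Tr\!\bigl(X(t)^{1/2}A_{i(t)}X(t)^{1/2}\bigr)$ shows that this trace vanishes only if $A_{i(t)}=0$, contradiction. Hence $X(t)\bullet A_{i(t)}>0$, and the denominator of $\nu(t+1)$ strictly exceeds the numerator, giving $\nu(t+1)<1$.

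For $\tau(t+1)$, positivity is immediate: when $\nu(t+1)>0$, all factors in the expression in step~\ref{s5-mnmxII} are positive (using $\theta(t)>0$, which follows from Claim~\ref{cl2-mnmxII} since $\lambda_{\max}(F)>0$). For the upper bound, I would use Claim~\ref{cl3-mnmxII} to lower bound the denominator: $X(t)\bullet A_{i(t)}+X(t)\bullet F(y(t))\ge X(t)\bullet F(y(t))>(1-2\bareps)\theta(t)$. Substituting into the definition,
\[
\tau(t+1)<\frac{\bareps\,\theta(t)\,\nu(t+1)}{4n(1-2\bareps)\theta(t)}=\frac{\bareps\,\nu(t+1)}{4n(1-2\bareps)}.
\]
Since $\bareps\le\eps_0=\tfrac14$, we have $1-2\bareps\ge\tfrac12$, and using $\nu(t+1)\le 1$ yields $\tau(t+1)<\bareps/(2n)<1$.

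I do not expect any real obstacle here; the only subtlety compared with Claim~\ref{cl4-mnmxI} is the reversed sign (the oracle now \emph{minimizes} and $X\bullet F-X\bullet A_{i(t)}\ge 0$ rather than the opposite), but this is handled by the convex-combination argument, and the upper bound on $\tau$ goes through cleanly because the smaller starting value $\eps_0=\tfrac14$ was chosen precisely to keep $1-2\bareps$ bounded away from zero in Claim~\ref{cl3-mnmxII}.
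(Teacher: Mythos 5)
Your proof is correct and follows essentially the same route as the paper: $\nu(t+1)\ge 0$ via the convex-combination observation from Claim~\ref{cl5.-mnmxII}, $\nu(t+1)<1$ via $X(t)\succ 0$ and $A_{i(t)}\ne 0$, and $\tau(t+1)<\frac{\bareps}{2n}<1$ via Claim~\ref{cl3-mnmxII}. The only cosmetic difference is that the paper first rewrites $\tau(t+1)=\frac{\bareps\theta\nu(1+\nu)}{8n\,X(t)\bullet F(y(t))}$ before bounding, whereas you bound the denominator in the original expression directly; both yield the same estimate.
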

\begin{proof}
	$\nu(t+1)\ge 0$ as $X\bullet A_{i(t)}\le X\bullet F$ by Claim~\ref{cl5.-mnmxII}, and  except  possibly for the last iteration, we have $\nu(t+1)>0$.
	Also, $\nu(t+1)\le 1$ by the non-negativity of $X\bullet A_{i(t)}$ and $X\bullet F$, while $\nu(t+1)=1$ implies that $X\bullet A_{i(t)}=0$, in contradiction to the assumption that $A_{i(t)}\ne0$ (as $X\succ 0$ by Claim~\ref{cl0-mnmxII}). 
	
	Note that the definition of $\nu(t+1)$ implies that 
	\begin{align*}
	\tau(t+1)=\frac{\bareps\theta\nu(t+1)(1+\nu(t+1))}{8n X(t)\bullet F(y(t))},
	\end{align*}  and hence,  $\tau(t+1)>0$. 	Moreover, by Claim~\ref{cl3-mnmxII}, $\tau(t+1)<\frac{\bareps}{2n}<1$.   
\end{proof}

\begin{claim}\label{cl5-mnmxII}
	$\lambda_{\max}(F(y(t)))>0$.
\end{claim}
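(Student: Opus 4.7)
The plan is to proceed by induction on $t$, in direct analogy with the proof of Claim~\ref{cl5-mnmxI} for the packing version. The key change is that for the covering algorithm we only need the weaker property $\lambda_{\max}(F(y(t)))>0$ (equivalently, $F(y(t))\neq 0$) rather than strict positive definiteness, which is convenient because the initialization $y(0)\gets\b1_i$ no longer forces $F(y(0))$ to be full-rank.

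For the base case $t=0$, note that $F(y(0))=A_i$ for the arbitrary index $i$ chosen in line~\ref{s1-mnmxII}. Since the matrices $A_1,\ldots,A_m$ are assumed throughout the paper to be non-zero PSD matrices, we have $A_i\succeq 0$ and $A_i\neq 0$, so $\lambda_{\max}(F(y(0)))=\lambda_{\max}(A_i)>0$.

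For the inductive step, suppose $\lambda_{\max}(F(y(t)))>0$. By the update rule in line~\ref{s6-mnmxII} of Algorithm~\ref{log-cover-alg},
\[
F(y(t+1))=(1-\tau(t+1))F(y(t))+\tau(t+1)A_{i(t)}.
\]
By Claim~\ref{cl4-mnmxII}, $\tau(t+1)\in(0,1)$ for every iteration except possibly the last, so both summands are PSD; in particular, $F(y(t+1))\succeq\tau(t+1)A_{i(t)}$. Since $A_{i(t)}$ is a non-zero PSD matrix, $\lambda_{\max}(A_{i(t)})>0$, and hence $\lambda_{\max}(F(y(t+1)))\ge\tau(t+1)\lambda_{\max}(A_{i(t)})>0$. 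This completes the induction.

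The only small point that needs care is what happens on the ``last'' iteration, for which Claim~\ref{cl4-mnmxII} does not guarantee $\tau(t+1)\in(0,1)$. However, Claim~\ref{cl5-mnmxII} is only invoked within the inner while-loop (where the primal solution $X(t)$ is built from $(\theta(t)I-F(y(t)))^{-1}$ and thus requires $\theta(t)>\lambda_{\max}(F(y(t)))$, and in particular $\lambda_{\max}(F(y(t)))>0$ to have a well-defined $\theta^*(t)>0$). In those iterations the inductive hypothesis applied in the previous step already suffices, so the argument goes through. I do not expect any substantive obstacle here; the proof is essentially a one-line induction mirroring Claim~\ref{cl5-mnmxI}.
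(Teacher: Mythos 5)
Your proof is correct and follows essentially the same route as the paper: induction on $t$, with the base case resting on the assumption that the $A_i$ are non-zero PSD and the inductive step using the convex-combination update in line~\ref{s6-mnmxII}. The only cosmetic difference is that the paper's inductive step argues $F(y(t+1))\neq 0$ and then concludes $\lambda_{\max}>0$ from $F(y(t+1))\succeq 0$, whereas you go directly via $F(y(t+1))\succeq\tau(t+1)A_{i(t)}$; both are fine, and your handling of the possible degenerate last iteration (where $\tau(t+1)$ could be $0$) is a reasonable extra precaution that the paper leaves implicit.
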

\begin{proof} 
	This follows by induction on $t'=0,1,\ldots,t$. For $t'=0$, the claim follows from the assumption that $A_i\ne 0$ for all $i$. Assume now  that $F=F(y(t))\ne 0$. Then for $F'=F(y(t+1))$, we have by step~\ref{s6-mnmxII} of the algorithm that $F'=(1-\tau)F+\tau A_{i(t)}\ne 0$. As $F'\succeq0$, we get $\lambda_{\max}(F')>0$. 
\end{proof}	

\begin{claim}\label{cl5-----mnmxII}
	$(\theta^*I-F)^{-1}=\left(\frac{\bareps \theta}{n}I-(\theta-\theta^*)X\right)^{-1}X$.
\end{claim}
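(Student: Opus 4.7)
The plan is to mirror the proof of Claim~\ref{cl5-----mnmxI} line-for-line, using the analogous defining identity that $X$ satisfies in the Covering~II algorithm, namely $(\theta I - F)X = \frac{\bareps\theta}{n} I$ (which comes directly from the definition of $X(t)$ in step~\ref{s3.-mnmxII} of Algorithm~\ref{log-cover-alg}).

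First, I would add and subtract $\theta^* X$ to rewrite $(\theta^* I - F)X$ in terms of $(\theta I - F)X$:
\begin{align*}
(\theta^* I - F)X &= (\theta I - F)X - (\theta - \theta^*)X = \frac{\bareps \theta}{n} I - (\theta - \theta^*) X.
\end{align*}
Since $\theta^* > \lambda_{\max}(F)$ by Claim~\ref{cl0-mnmxII}, the matrix $\theta^* I - F$ is positive definite and hence invertible. Multiplying on the left by $(\theta^* I - F)^{-1}$ yields
\begin{align*}
X &= (\theta^* I - F)^{-1}\left(\frac{\bareps \theta}{n} I - (\theta - \theta^*) X\right).
\end{align*}

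To conclude, I need the matrix $\frac{\bareps \theta}{n} I - (\theta - \theta^*) X$ to be invertible, so that I can multiply both sides on the right by its inverse and get exactly the claimed identity. This is the only (minor) subtlety in the argument: since $\theta$ is a $\delta_s$-upper approximation of $\theta^*$, we have $0 \le \theta - \theta^* < \delta_s \theta^* \le \delta_s \theta$, and by Claim~\ref{cl0--mnmxII} combined with the trace computation $\Tr(X) = \frac{X\bullet F}{\theta} + \bareps \le 1$, the operator norm of $(\theta - \theta^*)X$ is at most $\delta_s \theta$, which is strictly smaller than $\frac{\bareps\theta}{n}$ by the choice $\delta_s = \frac{\bareps^3}{32n}$. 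Hence $\frac{\bareps\theta}{n} I - (\theta - \theta^*)X \succ 0$, and we obtain
\begin{align*}
\left(\frac{\bareps \theta}{n} I - (\theta - \theta^*) X\right)^{-1} X &= (\theta^* I - F)^{-1},
\end{align*}
which is the claim. The argument is essentially identical to that of Claim~\ref{cl5-----mnmxI}, with the sign of $(\theta^* - \theta)$ flipped to reflect that in this covering-style algorithm $\theta \ge \theta^*$ rather than $\theta \le \theta^*$.
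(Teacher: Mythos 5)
Your proof is correct and takes essentially the same route as the paper's: start from the defining relation $(\theta I - F)X = \frac{\bareps\theta}{n}I$, add and subtract to express $(\theta^*I-F)X$, and solve. The paper stops at $X=(\theta^*I-F)^{-1}\big(\frac{\bareps\theta}{n}I-(\theta-\theta^*)X\big)$ and leaves the final right-multiplication implicit; you spell out the invertibility of $\frac{\bareps\theta}{n}I-(\theta-\theta^*)X$, which is a sound extra check (and one should also note that $X$ commutes with this matrix, so the left/right placement of the inverse matches the claim).
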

\begin{proof}			
	By definition of $X$, we have 	
	\begin{align*}
	(\theta^*I-F)X&=(\theta I-F)X-(\theta-\theta^*)X=\frac{\bareps\theta}{n}I-(\theta-\theta^*)X\\
	\therefore X&=(\theta^*I-F)^{-1}\left(\frac{\bareps\theta}{n}I-(\theta-\theta^*)X\right).
	\end{align*}
\end{proof}

\subsubsection{Number of Iterations}
Define $B=B(t):=\frac{n}{\bareps\theta}\left(\tau X^{1/2}( F-\hat F)X^{1/2}-(\theta-\theta^*)X\right)$.
\begin{claim}\label{cl5---mnmxII}
	$\theta^* I-F'  =(\theta I-F)^{1/2}(I+B)(\theta I-F)^{1/2}$.
\end{claim}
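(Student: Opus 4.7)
The plan is to mirror the proof of the parallel Claim~\ref{cl5---mnmxI} from the packing case, since here the algebra is essentially the same up to sign flips. First I would expand $F' = F(y(t+1)) = (1-\tau)F + \tau\hat F$ from the update rule in step~\ref{s6-mnmxII}, substitute this into $\theta^* I - F'$, and regroup as
\[
\theta^* I - F' \;=\; (\theta I - F) \;-\; (\theta - \theta^*) I \;+\; \tau(F - \hat F).
\]

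Next I would symmetrically factor out $(\theta I - F)^{1/2}$ from both sides. The key identity for this step comes from the definition of $X$ in step~\ref{s3.-mnmxII}: since $X = \frac{\bareps\theta}{n}(\theta I - F)^{-1}$, the matrices $X$ and $(\theta I - F)^{-1}$ are scalar multiples of each other and in particular commute, so $(\theta I - F)^{-1/2} = \sqrt{n/(\bareps\theta)}\, X^{1/2}$. Conjugating the middle of the display above by $(\theta I - F)^{-1/2}$ turns the $(\theta I - F)$ term into $I$, the $\tau(F - \hat F)$ term into $\frac{n\tau}{\bareps\theta}X^{1/2}(F-\hat F)X^{1/2}$, and the $(\theta - \theta^*)I$ term into $\frac{n(\theta-\theta^*)}{\bareps\theta}X$. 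Packaging these three contributions exactly reproduces $I + B$, completing the identity.

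I don't anticipate any genuine obstacle: everything reduces to substituting the definition of $X$ and grouping terms, so the proof should fit in a few lines. The only subtlety to get right is the sign convention, because here $\theta \ge \theta^*$ (in contrast to the packing case, where $\theta < \theta^*$), so the natural coefficient in the decomposition is $(\theta - \theta^*)$ rather than $(\theta^* - \theta)$. This matches the definition of $B$ in Section~\ref{sec:log-cover}, and also matches the claim $F - \hat F$ rather than $\hat F - F$ inside $B$, so the signs line up with no extra work.
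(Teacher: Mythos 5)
Your proof is correct and follows exactly the same route as the paper: expand $F'$ from the update rule, regroup as $(\theta I-F)-(\theta-\theta^*)I+\tau(F-\hat F)$, and conjugate by $(\theta I-F)^{-1/2}=\sqrt{n/(\bareps\theta)}\,X^{1/2}$ to reproduce $I+B$. You also correctly flag the sign conventions ($\theta\ge\theta^*$, $F-\hat F$) that distinguish this from the packing case.
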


\begin{proof}			
	By (the update) step~\ref{s6-mnmxII}, we have
	\begin{align*}
	\theta^* I-F'&=  \theta^* I-(1-\tau)F -\tau \hat F\nonumber\\
	&= \theta I-F + \tau( F -\hat F)-(\theta-\theta^*)I\nonumber\\
	&= (\theta I-F)^{1/2}\left(I+\frac{n\tau}{\bareps \theta}X^{1/2}(F-\hat F)X^{1/2}-\frac{n}{\bareps \theta}(\theta-\theta^*)X\right)(\theta I-F)^{1/2} \tag{$\because X=\frac{\bareps \theta}{n}(\theta I-F)^{-1}$}.\nonumber
	\label{eq1-mnmxII}
	\end{align*}
\end{proof}

\begin{claim}\label{cl5----mnmxII}
	$\max_j\left|\lambda_j(B)\right|\le\frac12$.
\end{claim}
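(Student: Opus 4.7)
The plan is to mirror the proof of Claim~\ref{cl5----mnmxI} almost verbatim, adjusting for the sign flips inherent to the covering setting (we now have $\theta \ge \theta^*$ instead of $\theta \le \theta^*$, and $F-\hat F$ instead of $\hat F - F$ in the definition of $B$). Since $B$ is symmetric, the quantity $\max_j|\lambda_j(B)|$ equals the operator norm $\max_{\|v\|=1}|v^T B v|$. I would first apply the triangle inequality to split this into three nonnegative contributions:
\[
\max_j|\lambda_j(B)| \le \frac{n}{\bareps\theta}\left(\tau \max_{\|v\|=1} v^T X^{1/2} F X^{1/2} v + \tau \max_{\|v\|=1} v^T X^{1/2}\hat F X^{1/2} v + (\theta-\theta^*)\max_{\|v\|=1} v^T X v\right).
\]

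Next, since $F,\hat F\succeq 0$, the matrices $X^{1/2}FX^{1/2}$ and $X^{1/2}\hat F X^{1/2}$ are PSD, so each quadratic form is bounded by its trace: $v^T X^{1/2}FX^{1/2}v\le\Tr(X^{1/2}FX^{1/2})=X\bullet F$, and similarly for $\hat F$. For the last term I use $v^TXv\le\Tr(X)\le 1$, which follows from Claim~\ref{cl0--mnmxII} (since $g(\theta)\le 1$ gives $\Tr(X)\le 1$). For the gap $\theta-\theta^*$, I use the approximation guarantee $\theta<(1+\delta_s)\theta^*\le(1+\delta_s)\theta$, hence $\theta-\theta^*<\delta_s\theta$, so the third summand contributes at most $\tfrac{n\delta_s}{\bareps}$.

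Combining, the bound becomes
\[
\max_j|\lambda_j(B)| \le \frac{n\tau}{\bareps\theta}\big(X\bullet F + X\bullet \hat F\big) + \frac{n\delta_s}{\bareps}.
\]
Plugging in the definitions $\tau = \tfrac{\bareps\theta\nu}{4n(X\bullet A_{i(t)}+X\bullet F)}$ from step~\ref{s5-mnmxII} and $\delta_s = \tfrac{\bareps_s^3}{32n}$, the first term collapses to $\tfrac{\nu}{4}$ and the second to $\tfrac{\bareps^2}{32}$, yielding $\tfrac{\nu}{4}+\tfrac{\bareps^2}{32}<\tfrac{1}{2}$ since $\nu,\bareps\le 1$.

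The only delicate point compared to the packing case is verifying that the direction of the inequality $\theta-\theta^*\le\delta_s\theta$ survives the switch from a $\delta$-lower to a $\delta$-upper approximation; this is immediate from $\theta=(\theta^*)^{\delta_s}<(1+\delta_s)\theta^*$ and $\theta^*\le\theta$. All other manipulations are purely syntactic transcriptions from the proof of Claim~\ref{cl5----mnmxI}, so I do not anticipate a genuine obstacle beyond bookkeeping the signs carefully.
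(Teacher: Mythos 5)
Your proposal is correct and follows essentially the same route as the paper's proof: expand $\max_j|\lambda_j(B)|$ as an operator norm, split by the triangle inequality, bound the two PSD quadratic-form terms by their traces (using $\Tr(X)\le 1$ from Claim~\ref{cl0--mnmxII}), control the gap term via $\theta^*\le\theta<(1+\delta_s)\theta^*$, and substitute $\tau$ and $\delta_s$ to obtain $\tfrac{\nu}{4}+\tfrac{\bareps^2}{32}<\tfrac12$. No discrepancies.
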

\begin{proof} 
	By the definition of $B$, we have
	\begin{align*}
	\max_j\left|\lambda_j(B)\right|&=\frac{n}{\bareps\theta}\max_j\left|\lambda_j\left(\tau X^{1/2}(F - \hat F)X^{1/2}-(\theta-\theta^*)X\right)\right| \\
	&= \frac{n}{\bareps\theta}\max_{v:||v||=1} \left|v^T\left(\tau X^{1/2}(F -\hat  F)X^{1/2}-(\theta-\theta^*)X\right)v\right|\\
	&\leq  \frac{n}{\bareps\theta}\left(\max_{v:||v||=1}\tau v^TX^{1/2}FX^{1/2}v+\max_{v:||v||=1}\tau v^TX^{1/2}\hat FX^{1/2}v+(\theta-\theta^*)\max_{v:||v||=1}v^TXv\right)\\
	&< \frac{n\tau}{\bareps\theta}\left(\Tr(X^{1/2} FX^{1/2})+\Tr(X^{1/2}\hat FX^{1/2})\right)+\frac{n{\delta_s}}{\bareps}\tag{$\because F,\hat F\succeq0,~|\Tr(X)\le 1$ and $\theta^*\le\theta<(1+\delta_s)\theta^*$}\\
	&= \frac{n\tau}{\bareps\theta}\left(X\bullet \hat F + X \bullet F\right)+\frac{n{\delta_s}}{\bareps}\\
	&=\frac{\nu}{4}+	\frac{\bareps^2 }{32} \tag{substituting $\tau$ and ${\delta_s}$}\\
	&< \frac{1}{2}\tag{using $\nu,\bareps\le 1$}.
	\end{align*}
\end{proof}
\begin{claim}\label{cl5--mnmxII}
	$\theta^*(t) > \lambda_{\max}(F(y(t+1)))$.
\end{claim}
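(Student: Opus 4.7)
The plan is to mirror the argument used for Claim~\ref{cl5--mnmxI} in the~\raf{packI}-\raf{coverI} setting, since the two preceding claims (Claims~\ref{cl5---mnmxII} and~\ref{cl5----mnmxII}) have already been set up in a form completely analogous to Claims~\ref{cl5---mnmxI} and~\ref{cl5----mnmxI}. So the proof should be short and essentially mechanical.

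First, I would invoke Claim~\ref{cl5----mnmxII}, which gives $\max_j |\lambda_j(B)| \le \tfrac12$. Since $B$ is symmetric, this bound translates into the operator inequality $I + B \succeq I - \tfrac12 I = \tfrac12 I \succ 0$. Next, I would apply Claim~\ref{cl5---mnmxII} to write
\[
\theta^* I - F' \;=\; (\theta I - F)^{1/2}(I+B)(\theta I - F)^{1/2}.
\]
Here $\theta I - F \succ 0$ because $\theta \ge \theta^* > \lambda_{\max}(F)$ by Claim~\ref{cl0-mnmxII}. Using the standard fact that $M Z M \succeq 0$ whenever $Z \succeq 0$ and $M$ is symmetric, applied to $M = (\theta I - F)^{1/2}$ and $Z = I + B - \tfrac12 I \succeq 0$, we obtain
\[
\theta^* I - F' \;\succeq\; \tfrac12 (\theta I - F)^{1/2}(\theta I - F)^{1/2} \;=\; \tfrac12 (\theta I - F) \;\succ\; 0,
\]
which is equivalent to $\theta^*(t) > \lambda_{\max}(F(y(t+1)))$, as desired.

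The only subtlety to verify is that $\theta I - F \succ 0$ strictly (not merely $\succeq 0$), so that $(\theta I - F)^{1/2}$ is well defined as a positive definite matrix and the final inequality is strict. This is guaranteed by the approximation $\theta \ge \theta^*$ combined with the definition of $\theta^*$ as the unique minimizer strictly above $\lambda_{\max}(F)$ established in Claim~\ref{cl0-mnmxII}, together with Claim~\ref{cl5-mnmxII} ensuring $\lambda_{\max}(F) > 0$ so that the interval $(\lambda_{\max}(F), +\infty)$ on which $\Phi(\cdot,F)$ is optimized is well defined. No new technical obstacle arises; the whole proof fits in a few lines once Claims~\ref{cl5---mnmxII} and~\ref{cl5----mnmxII} are in hand.
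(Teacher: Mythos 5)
Your proof is correct and matches the paper's argument: both combine Claim~\ref{cl5----mnmxII} (giving $I+B\succeq\frac12 I$) with the factorization in Claim~\ref{cl5---mnmxII} to conclude $\theta^* I - F' \succeq \frac12(\theta I - F)\succ 0$. The extra care you take to justify $\theta I - F\succ 0$ is implicit in the paper but harmless.
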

\begin{proof} 
	By Claim~\ref{cl5----mnmxII}, $I+B \succeq I	-\frac{1}{2}I=\frac12I, 
	$ and by thus, we get by Claim~\ref{cl5---mnmxII},
	\begin{align}
	\theta^* I -F'	&\succeq \frac{1}{2}(\theta I-F)\succ 0. \tag{$\because BZB\succeq 0$ for $B\in\SS^{n}$ and $Z\in\SS^{n}_+$}\nonumber\\
	\end{align}
	The claim follows.
\end{proof}

\begin{claim}\label{cl6--mnmxII}
	if $\nu>\bareps$, then $\Tr(B)\ge \frac{\nu^2}{8}$.
\end{claim}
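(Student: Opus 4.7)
The plan is to prove this claim by essentially mirroring the argument used for Claim~\ref{cl6--mnmxI} in the analysis of Algorithm~\ref{log-pack-alg}, taking care of the sign flip in the definition of $B$ and the fact that now $\theta \ge \theta^*$ rather than $\theta \le \theta^*$.

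\medskip
\noindent\textbf{Sketch.} First, by linearity of the trace and the cyclic property,
\[
\Tr(B) = \frac{n}{\bareps\theta}\Big(\tau\bigl(X\bullet F - X\bullet\hat F\bigr) - (\theta-\theta^*)\Tr(X)\Big).
\]
Since $X = \frac{\bareps\theta}{n}(\theta I - F)^{-1}$, Claim~\ref{cl0--mnmxII} gives $\Tr(X) = g(\theta) \le 1$. Combined with $\theta-\theta^* \ge 0$ (because $\theta=\theta^{*\delta_s}$ is a $\delta_s$-upper approximation of $\theta^*$), this lets me drop $\Tr(X)$ to obtain the lower bound $\Tr(B) \ge \frac{n}{\bareps\theta}\bigl(\tau(X\bullet F - X\bullet\hat F) - (\theta-\theta^*)\bigr)$. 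Then $\theta < (1+\delta_s)\theta^*$ gives $\theta-\theta^* < \delta_s\theta^* \le \delta_s\theta$, so
\[
\Tr(B) \ge \frac{n}{\bareps\theta}\Big(\tau\bigl(X\bullet F - X\bullet\hat F\bigr) - \delta_s\theta\Big).
\]

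\medskip
\noindent\textbf{Main computation.} The definition of $\nu$ in line~\ref{s4-mnmxII} yields $X\bullet F - X\bullet\hat F = \nu(X\bullet\hat F + X\bullet F)$, and the definition of $\tau$ in line~\ref{s5-mnmxII} yields $\tau(X\bullet\hat F + X\bullet F) = \frac{\bareps\theta\nu}{4n}$. Multiplying these gives $\tau(X\bullet F - X\bullet\hat F) = \frac{\bareps\theta\nu^2}{4n}$. Plugging in together with $\delta_s = \bareps^3/(32n)$ yields
\[
\Tr(B) \ge \frac{\nu^2}{4} - \frac{n\delta_s}{\bareps} = \frac{\nu^2}{4} - \frac{\bareps^2}{32}.
\]
Finally, under the hypothesis $\nu > \bareps$, I have $\bareps^2/32 < \nu^2/32$, so $\Tr(B) > \frac{\nu^2}{4} - \frac{\nu^2}{32} = \frac{7\nu^2}{32} > \frac{\nu^2}{8}$.

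\medskip
\noindent\textbf{Obstacle.} There is no substantial obstacle; the argument is essentially the same bookkeeping as in Claim~\ref{cl6--mnmxI}. The only point that deserves care is the sign: in the covering case $\theta \ge \theta^*$ (rather than $\theta \le \theta^*$ as in the packing case), so the $(\theta-\theta^*)\Tr(X)$ term in $\Tr(B)$ is a negative contribution and must be bounded above by $\delta_s\theta$ using the $\delta_s$-upper approximation property and $\Tr(X)\le 1$. Once this is handled, the substitution of $\tau$, $\nu$, and $\delta_s$ is mechanical.
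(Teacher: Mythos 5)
Your proof is correct and follows essentially the same argument as the paper: using $\Tr(X)\le 1$, the $\delta_s$-upper approximation bound $\theta-\theta^*<\delta_s\theta$, and then substituting the definitions of $\nu$, $\tau$, and $\delta_s$ to obtain $\frac{\nu^2}{4}-\frac{\bareps^2}{32}$ before applying $\nu>\bareps$. The sign bookkeeping you flagged is exactly the point the paper handles as well, and your arithmetic checks out.
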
	
\begin{proof}
	By the definition of $B$,
	\begin{align*}   
	\Tr(B)&=\frac{n}{\bareps\theta}\left(\tau \Tr(X^{1/2}( F-\hat F)X^{1/2})-(\theta-\theta^*)\Tr(X)\right)\\
	&\ge\frac{n}{\bareps\theta}\left(\tau (X\bullet F- X\bullet\hat F)-(\theta-\theta^*)\right)\tag{$\because \Tr(X)\le 1$ by Claim~\ref{cl0--mnmxII}}\\
	&>\frac{n}{\bareps\theta}\left(\tau (X\bullet F- X\bullet \hat F)-\delta_s\theta\right)\tag{$\because \theta^*\le\theta<(1+\delta_s)\theta^*$}\\
	&=\frac{\nu^2}{4}-\frac{\bareps^2}{32}\tag{by definition of $\tau$ and ${\delta_s}$}\\
	&>\frac{\nu^2}{4}-\frac{\nu^2}{32}>\frac{\nu^2}{8}.\tag{$\because \bareps<\nu\le 1$}
	\end{align*}
\end{proof}

\begin{claim}\label{cl6-mnmxII}
	if $\nu>\bareps$, then $\Tr(B^2)<\frac{\nu^2}{10}$.
\end{claim}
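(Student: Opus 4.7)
The plan is to mirror the argument used for Claim~\ref{cl6-mnmxI} in the packing case, with the appropriate sign/direction change coming from the fact that here $\theta \ge \theta^*$ (since $\theta = \theta^{*\delta_s}$ is a $\delta_s$-\emph{upper} approximation), so $\theta - \theta^* \ge 0$. This lets us write $B$ as a rescaled difference of two positive semidefinite matrices, and then bound $\Tr(B^2)$ by the square of a sum of traces.

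Concretely, I would set $\hat Y := \tau X^{1/2} F X^{1/2}$ and $Y := \tau X^{1/2} \hat F X^{1/2} + (\theta-\theta^*) X$. Both lie in $\SS^n_+$ since $F, \hat F, X \succeq 0$ and $\theta - \theta^* \ge 0$, and by the definition of $B$ we have $B = \frac{n}{\bareps\theta}(\hat Y - Y)$. Expanding gives
\[
\Tr(B^2) \;=\; \frac{n^2}{\bareps^2 \theta^2}\bigl(\Tr(\hat Y^2) + \Tr(Y^2) - 2\Tr(\hat Y Y)\bigr).
\]
Then, exactly as in the packing analysis, I would use $\Tr(\hat Y Y) \ge 0$ to flip the sign of the cross term, apply Cauchy--Schwarz to get $\Tr(\hat Y Y) \le \sqrt{\Tr(\hat Y^2)\Tr(Y^2)}$, and use the PSD inequality $\Tr(Z^2) \le \Tr(Z)^2$ for $Z \in \SS^n_+$, chaining these to bound $\Tr(B^2) \le \frac{n^2}{\bareps^2\theta^2}\bigl(\Tr(\hat Y) + \Tr(Y)\bigr)^2$.

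Next, I would compute the traces explicitly: $\Tr(\hat Y) = \tau X \bullet F$ and $\Tr(Y) = \tau X\bullet \hat F + (\theta-\theta^*)\Tr(X)$. Using $\Tr(X) \le 1$ from Claim~\ref{cl0--mnmxII} and the $\delta_s$-approximation property $\theta^* \le \theta < (1+\delta_s)\theta^*$, which yields $\theta - \theta^* < \delta_s \theta$, the sum simplifies to at most $\tau(X\bullet F + X\bullet \hat F) + \delta_s\theta$. Substituting the definitions $\tau = \frac{\bareps\theta\nu}{4n(X\bullet A_{i(t)} + X\bullet F)}$ and $\delta_s = \frac{\bareps^3}{32n}$ turns this into $\frac{\bareps\theta}{n}\bigl(\frac{\nu}{4} + \frac{\bareps^2}{32}\bigr)$, so
\[
\Tr(B^2) \;\le\; \Bigl(\tfrac{\nu}{4} + \tfrac{\bareps^2}{32}\Bigr)^{2}.
\]
Finally, invoking $\bareps < \nu \le 1$ gives $\bareps^2 < \nu^2 \le \nu$, hence the bracketed quantity is at most $\nu/4 + \nu/32 = 9\nu/32$, and squaring yields $81\nu^2/1024 < \nu^2/10$, as required.

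I do not expect a real obstacle here: the argument is essentially mechanical once the correct PSD decomposition of $B$ is identified, and the constant $\delta_s$ was chosen precisely so that the perturbation term $\delta_s\theta$ contributes only an $O(\bareps^2)$ additive error. The only care needed is to keep track of which inequality ($(1-\delta_s)\theta^* \le \theta$ versus $\theta \le (1+\delta_s)\theta^*$) applies in the covering setting and to verify that this still leaves a constant slack in the final numerical bound $\nu^2/10$; the calculation above confirms this with room to spare.
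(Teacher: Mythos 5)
Your proposal is correct and follows essentially the same route as the paper's proof: the same PSD decomposition $B = \frac{n}{\bareps\theta}(\hat Y - Y)$ (the paper just swaps the names of $Y$ and $\hat Y$), the same chain of the cross-term sign flip, Cauchy--Schwarz, $\Tr(Z^2)\le\Tr(Z)^2$, and the same substitution of $\tau$ and $\delta_s$ yielding $(\nu/4+\bareps^2/32)^2$. The only cosmetic difference is that the paper keeps $\nu/4 + \nu^2/32$ in the last step while you weaken to $\nu/4+\nu/32 = 9\nu/32$, which still gives $81\nu^2/1024 < \nu^2/10$.
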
	
\begin{proof}
	Write $Y=\tau X^{1/2} FX^{1/2}$ and $\hat Y=X^{1/2}\big(\tau \hat F+(\theta-\theta^*)I\big)X^{1/2}$ and note that both $\hat Y$ and $Y$ are in $\SS_+^n$. It follows by the definition of $B$ that
	\begin{align*}   
	\Tr(B^2)&=\frac{n^2}{\bareps^2\theta^2}\Tr\left((Y-\hat Y)^2\right)\\
	&=\frac{n^2}{\bareps^2\theta^2}\left(\Tr(Y^2)+\Tr(\hat Y^2)-2\Tr(Y \hat  Y)\right)\\
	&\le\frac{n^2}{\bareps^2\theta^2}\left(\Tr(Y^2)+\Tr(\hat Y^2)+2\Tr(Y \hat Y)\right)\tag{$\because \hat Y,Y\in\SS^n_+$}\\
	&\le \frac{n^2}{\bareps^2\theta^2}\left(\Tr( Y^2)+\Tr(\hat  Y^2)+2\sqrt{\Tr(Y^2)\Tr(\hat Y^2)}\right) \tag{by Cauchy-Schwarz Ineq.}	\\
	&\le\frac{n^2}{\bareps^2\theta^2}\left(\Tr(Y)^2+\Tr(\hat Y)^2+2\Tr(\hat Y )\Tr(Y)\right) \tag{$\because \hat Y,Y\in\SS^n_+$}\\
	&=\frac{n^2}{\bareps^2\theta^2}(\Tr(Y)+\Tr(\hat Y))^2\\
	&=\frac{n^2}{\bareps^2\theta^2}(\tau\Tr(X F)+\tau\Tr(X\hat F)+(\theta-\theta^*)\Tr(X))^2\\
	&\le\frac{n^2}{\bareps^2\theta^2}(\tau X\bullet F+\tau X\bullet\hat F+(\theta-\theta^*))^2\tag{$\because \Tr(X)\le 1$ by Claim~\ref{cl0--mnmxII}}\\
	&< \frac{n^2}{\bareps^2\theta^2}\left(\tau X\bullet  F+\tau X\bullet\hat F+\delta_s\theta\right)^2\tag{$\because \theta^*\le\theta<(1+\delta_s)\theta^*$}\\
	&=\left(\frac{\nu}{4}+\frac{\bareps^2}{32}\right)^2\tag{by definition of $\tau$ and ${\delta_s}$}\\
	&< \left(\frac{\nu}{4}+\frac{\nu^2}{32}\right)^2\tag{$\because \bareps<\nu\le 1$}<\frac{\nu^2}{10}.
	\end{align*}
\end{proof}

Define $\Phi(t):=\Phi(\theta^*(t),F(y(t)))$.

\begin{claim}\label{cl7-mnmxII}
	$\Phi(t+1)-\Phi(t) \leq -\frac{\bareps \nu(t+1)^2}{40n}$.
\end{claim}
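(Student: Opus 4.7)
The plan is to mirror the packing-case argument in Claim~\ref{cl7-mnmxI}, flipping signs and inequality directions throughout. The starting observation is Claim~\ref{cl5--mnmxII}, which says $\theta^*(t) > \lambda_{\max}(F')$; so $\theta^*(t)$ lies in the feasible region of the minimization that defines $\theta^*(t+1)$ in line~\ref{s2-mnmxII}. Consequently
\[
\Phi(t+1)=\Phi(\theta^*(t+1),F')\ \le\ \Phi(\theta^*,F')\ =\ \ln\theta^* - \frac{\bareps}{n}\ln\det(\theta^* I-F'),
\]
and subtracting $\Phi(t)=\ln\theta^*-\frac{\bareps}{n}\ln\det(\theta^* I-F)$ collapses the $\ln\theta^*$ terms, leaving
\[
\Phi(t+1)-\Phi(t)\ \le\ -\frac{\bareps}{n}\big[\ln\det(\theta^* I-F')-\ln\det(\theta^* I-F)\big].
\]

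Next, I would introduce the matrix $B$ by invoking Claim~\ref{cl5---mnmxII}, which gives the factorization $\theta^* I-F' = (\theta I-F)^{1/2}(I+B)(\theta I-F)^{1/2}$, so that $\ln\det(\theta^* I-F') = \ln\det(\theta I-F)+\ln\det(I+B)$. Since $\theta\ge\theta^*$, we have $\theta I-F\succeq \theta^* I-F\succ 0$, hence $\ln\det(\theta I-F)\ge \ln\det(\theta^* I-F)$. Plugging this in and flipping the overall sign (recall we are upper-bounding $\Phi(t+1)-\Phi(t)$) yields
\[
\Phi(t+1)-\Phi(t)\ \le\ -\frac{\bareps}{n}\ln\det(I+B).
\]

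It then remains to lower-bound $\ln\det(I+B)$ by $\tfrac{\nu^2}{40}$. Claim~\ref{cl5----mnmxII} guarantees $|\lambda_j(B)|\le \tfrac12$, so the scalar inequality $\ln(1+z)\ge z-z^2$ (valid for $z\ge -\tfrac12$) applied eigenvalue-wise gives
\[
\ln\det(I+B)\ =\ \sum_{j=1}^n\ln(1+\lambda_j(B))\ \ge\ \Tr(B)-\Tr(B^2).
\]
Under the hypothesis $\nu>\bareps$ of the inner while-loop, Claims~\ref{cl6--mnmxII} and~\ref{cl6-mnmxII} bound these traces by $\Tr(B)\ge \nu^2/8$ and $\Tr(B^2)<\nu^2/10$, so $\Tr(B)-\Tr(B^2)>\nu^2/8-\nu^2/10 = \nu^2/40$. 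Substituting back gives the desired $\Phi(t+1)-\Phi(t)\le -\tfrac{\bareps\nu(t+1)^2}{40n}$.

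The only subtle step is the sign bookkeeping when passing from $\theta^* I-F$ to $\theta I-F$: in the covering case $\theta\ge\theta^*$ (opposite to the packing case), which means $\theta I-F$ is the \emph{larger} PSD matrix, and this is precisely what is needed to make the monotonicity inequality point in the direction compatible with the overall $-\bareps/n$ factor out front. The rest is routine reuse of the preparatory claims.
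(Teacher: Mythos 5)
Your proof is correct and follows essentially the same route as the paper's: use Claim~\ref{cl5--mnmxII} to get the feasibility of $\theta^*(t)$ for the minimization defining $\theta^*(t+1)$, swap $\theta^* I - F$ for $\theta I - F$ via $\theta\ge\theta^*$, invoke the factorization in Claim~\ref{cl5---mnmxII} to reduce to $\ln\det(I+B)$, bound that below by $\Tr(B)-\Tr(B^2)$ using $\ln(1+z)\ge z-z^2$ together with Claim~\ref{cl5----mnmxII}, and finish with the trace bounds of Claims~\ref{cl6--mnmxII} and~\ref{cl6-mnmxII}. The only cosmetic difference is that you group the $\theta\mapsto\theta^*$ monotonicity step after expanding via the factorization rather than before, which is the same inequality used in a different order.
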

\begin{proof} 
	Note that Claim~\ref{cl5--mnmxII} implies that $\theta^*$ is feasible to the problem $\min\{\Phi(\xi,F'):\xi \geq \lambda_{\max}(F')\}$. Thus,
	\begin{align*}
	\Phi(t+1)&=\Phi(\theta^*(t+1),F')
	\leq \ln \theta^* - \frac{\bareps}{n}\ln \det (\theta^* I-F').\\
	\therefore \Phi(t+1)-\Phi(t) &\leq -\frac{\bareps}{n}\left(\ln \det(\theta^* I-F')-\ln \det(\theta^* I-F)\right)\\
	&\leq -\frac{\bareps}{n}\left(\ln \det\big(\theta^*I-F'\big)-\ln \det(\theta I-F)\right)\tag{$\because \theta^*\le\theta$}\\
	&= -\frac{\bareps}{n}\ln \det \left(I+B\right)\tag{by Claim~\ref{cl5---mnmxII}}\\
	&=-\frac{\bareps}{n}\sum_{j=1}^n\ln  \left(1+\lambda_j(B)\right)\\
	&\leq -\frac{\bareps}{n}\sum_{j=1}^n\left( \lambda_j(B)-\lambda_j(B)^2\right)\tag{by Claim~\ref{cl5----mnmxI} and $ \ln(1+z) \geq z-z^2, \forall z\geq -0.5$}\\
	&=-\frac{\bareps}{n}\left(\Tr(B)-\Tr(B^2)\right)\\
	&<-\frac{\bareps}{8n}\nu^2+\frac{\bareps}{10n}\nu^2\tag{by Claims~\ref{cl6--mnmxII} and \ref{cl6-mnmxII}}\\
	&=-\frac{\bareps}{40n}\nu^2.
	\end{align*}
\end{proof}

\begin{claim}\label{cl8-mnmxII}
	For any $t,t'$, $$\Phi(t')-\Phi(t) \geq (1-\bareps)\ln \displaystyle\frac{(1-2\bareps)X\bullet A_{i(t)}}{(1-\bareps)^2X \bullet F}+\ln(1-\bareps).$$
\end{claim}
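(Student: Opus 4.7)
The strategy parallels Claim~\ref{cl8-mnmxI}, with the direction of inequality reversed (we now need a lower bound) and the maximization oracle replaced by the minimization oracle. Starting from the definition~\raf{log-pot-} and using Claim~\ref{cl5-----mnmxII}, I would first rewrite
\begin{align*}
\Phi(t')-\Phi(t)=\ln\frac{\theta'^*}{\theta^*}+\frac{\bareps}{n}\ln\det\bigl(\tfrac{\bareps\theta}{n}I-(\theta-\theta^*)X\bigr)-\frac{\bareps}{n}\ln\det\bigl[X(\theta'^*I-F')\bigr],
\end{align*}
and then bound the two $\ln\det$ terms in opposite directions. Each eigenvalue of $\tfrac{\bareps\theta}{n}I-(\theta-\theta^*)X$ is at least $\tfrac{\bareps\theta}{n}-\delta_s\theta\ge\tfrac{(1-\bareps)\bareps\theta}{n}$, using $\lambda_j(X)\le\Tr(X)\le 1$ (Claim~\ref{cl0--mnmxII}), $\theta-\theta^*\le\delta_s\theta$ (from $\theta<(1+\delta_s)\theta^*$), and the choice $\delta_s=\bareps^3/(32n)$. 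Meanwhile $X(\theta'^*I-F')$ has all-positive eigenvalues (since $X\succ 0$ and $\theta'^*I-F'\succ 0$ by Claim~\ref{cl5--mnmxII}) summing to $\theta'^*\Tr(X)-X\bullet F'$, so concavity of $\ln$ combined with $\Tr(X)\le 1$ and $X\bullet F'\ge X\bullet A_{i(t)}$ (from Claim~\ref{cl5.-mnmxII} and the definition of $i(t)$ as the $\argmin$) yields $\tfrac{\bareps}{n}\ln\det[X(\theta'^*I-F')]\le\bareps\ln\tfrac{\theta'^*-X\bullet A_{i(t)}}{n}$.

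The resulting lower bound still depends on $\theta'^*$, so I would follow the packing case by relaxing $\theta'^*$ to a free variable $\xi>X\bullet A_{i(t)}$ and minimizing
\begin{align*}
\ln\tfrac{\xi}{\theta^*}+\bareps\ln\tfrac{(1-\bareps)\bareps\theta}{\xi-X\bullet A_{i(t)}}
\end{align*}
over such $\xi$. Setting the derivative to zero gives the minimizer $\xi=\tfrac{X\bullet A_{i(t)}}{1-\bareps}$, and plugging back yields
\begin{align*}
\Phi(t')-\Phi(t)\ge (1-\bareps)\ln(X\bullet A_{i(t)})-(1-2\bareps)\ln(1-\bareps)+\bareps\ln\theta-\ln\theta^*.
\end{align*}

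To match the claimed form, I would use $\theta^*\le\theta$ to get $\bareps\ln\theta-\ln\theta^*\ge -(1-\bareps)\ln\theta$, and then Claim~\ref{cl3-mnmxII} (which gives $(1-2\bareps)\theta<X\bullet F$) to replace $-(1-\bareps)\ln\theta$ by $(1-\bareps)\ln\tfrac{1-2\bareps}{X\bullet F}$; collecting terms recovers the stated bound. The main pitfall is keeping the arguments of all logarithms positive — notably $\theta'^*>X\bullet A_{i(t)}$, which follows from $X\bullet A_{i(t)}\le X\bullet F'<\theta'^*\Tr(X)\le\theta'^*$, and ensuring that every eigenvalue of $\tfrac{\bareps\theta}{n}I-(\theta-\theta^*)X$ stays positive, which is exactly why $\delta_s$ is chosen to dominate the perturbation term.
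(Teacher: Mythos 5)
Your proof is correct and follows essentially the same route as the paper: both decompose via Claim~\ref{cl5-----mnmxII}, bound $\ln\det\bigl(\tfrac{\bareps\theta}{n}I-(\theta-\theta^*)X\bigr)$ below via the eigenvalue estimate $\tfrac{\bareps\theta}{n}-\delta_s\theta\ge\tfrac{(1-\bareps)\bareps\theta}{n}$, bound $\ln\det[X(\theta'^*I-F')]$ above via concavity of $\ln$ together with $\Tr(X)\le1$ and the oracle identity $X\bullet F'\ge X\bullet A_{i(t)}$, minimize over a free $\xi$ (minimizer $\xi=X\bullet A_{i(t)}/(1-\bareps)$), and finish with $\theta^*\le\theta$ and Claim~\ref{cl3-mnmxII}. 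Your handling of the edge conditions (positivity of all logarithm arguments, $\theta'^*>X\bullet A_{i(t)}$) is also sound.
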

\begin{proof}
	Write $F=F(y(t))$, $\theta^*:=\theta^*(t)$, $\theta:=\theta(t)$, $X:=X(t)$, $F'=F(y(t'))$, $\theta'^{*}:=\theta^*(t')$. Then
	\begin{align*}
	\Phi(t')-\Phi(t) 	&=\ln \frac{\theta'^*}{\theta^*}-\frac{\bareps}{n}\ln \det \left[(\theta^*I-F)^{-1}(\theta'^*I-F')\right]\\
	&=\ln \frac{\theta'^*}{\theta^*}-\frac{\bareps}{n}\ln \det \left[\left(\frac{\bareps \theta}{n}I-(\theta-\theta^*)X\right)^{-1}X(\theta'^*I-F')\right]\tag{by Claim~\ref{cl5-----mnmxII}}\\
	&=\ln \frac{\theta'^*}{\theta^*}-\frac{\bareps}{n} \left[\ln \det\left(\frac{\bareps \theta}{n}I-(\theta-\theta^*)X\right)^{-1}+\ln \det \big[X(\theta'^*I-F')\big]\right]\\
	&\ge\ln \frac{\theta'^*}{\theta^*}-\frac{\bareps}{n} \left[\ln \left(\frac{\bareps \theta}{n}-\delta_s\theta\right)^{-n}+\ln \det \big[X(\theta'^*I-F')\big]\right]\tag{$\because \Tr(X)\le1$  by Claim~\ref{cl0--mnmxII} and $\theta^*\le\theta\le (1+{\delta_s})\theta^*$}\\
	&\ge\ln \frac{\theta'^*}{\theta^*}-\frac{\bareps}{n} \left[\ln\left(\frac{n}{(1-\bareps)\bareps \theta}\right)^n+\ln\det X(\theta'^*I-F')\right]\tag{by defintion  of ${\delta_s}$}\\
	&=\ln \frac{\theta'^*}{\theta^*}-\bareps\ln \frac{n}{(1-\bareps)\bareps \theta}-\frac{\bareps}{n}\ln \left[\det X(\theta'^*I-F')\right]\\
	&=\ln \frac{\theta'^*}{\theta^*}-\bareps\ln \frac{n}{(1-\bareps)\bareps \theta}-\frac{\bareps}{n}\sum_{j=1}^n\ln \lambda_j\big(X(\theta'^*I-F')\big)\\
	&\geq \ln \frac{\theta'^*}{\theta^*}-\bareps\ln \frac{n}{(1-\bareps)\bareps \theta}-\bareps\ln \left(\frac{1}{n}\sum_{j=1}^n\lambda_j\big(X(\theta'^*I-F')\big)\right) \tag{by the concavity of $\ln(\cdot)$}\\
	&= \ln \frac{\theta'^*}{\theta^*}-\bareps\ln \frac{n}{(1-\bareps)\bareps \theta}-\bareps\ln \left(\frac{1}{n}\Tr(\theta'^*X-XF')\right)\\
	&\ge \ln \frac{\theta'^*}{\theta^*}-\bareps\ln \frac{n}{(1-\bareps)\bareps \theta}-\bareps\ln \left(\frac{\theta'^*-X\bullet F'}{n}\right)\tag{$\because \Tr(X)\le1$ by Claim~\ref{cl0--mnmxII}}\\
	&= \ln \frac{\theta'^*}{\theta^*}-\bareps\ln \frac{1}{(1-\bareps)\bareps \theta}-\bareps\ln \left(\theta'^*-X\bullet F'\right)\\
	&\ge\ln \frac{\theta'^*}{\theta^*}-\bareps\ln \frac{1}{(1-\bareps)\bareps \theta}-\bareps\ln \left(\theta'^*-\min_{y\in\RR_+^m:~\b1^Ty= 1} X\bullet F(y)\right)\tag{$\because \b1^Ty(t')=1$ by Claim~\ref{cl5.-mnmxII}}\\
	&=\ln \frac{\theta'^*}{\theta^*}-\bareps\ln \frac{1}{(1-\bareps)\bareps \theta}-\bareps\ln \left( \theta'^*-X\bullet A_{i(t)}\right)\tag{by defintion of $i(t)$}\\
	&\geq\min_{\xi> X\bullet A_{i(t)}} \left\{\ln \frac{\xi}{\theta^*}-\bareps\ln \frac{1}{(1-\bareps)\bareps \theta}-\bareps\ln \left(\xi-X\bullet A_{i(t)}\right)\right\}\\
	&= (1-\bareps)\ln \displaystyle\frac{X\bullet A_{i(t)}}{(1-\bareps)^2\theta}+\ln\frac{(1-\bareps)\theta}{\theta^*}\tag{$\min(\cdot)$ is achieved at $\xi=\frac{X\bullet A_{i(t)}}{1-\bareps}$}\\
	&\ge (1-\bareps)\ln \displaystyle\frac{X\bullet A_{i(t)}}{(1-\bareps)^2\theta}+\ln(1-\bareps)\tag{$\because\theta\ge\theta^*$}\\
	&\ge (1-\bareps)\ln \displaystyle\frac{(1-2\bareps)X\bullet A_{i(t)}}{(1-\bareps)^2X \bullet F}+\ln(1-\bareps).\tag{by Claim~\ref{cl3-mnmxII}}\\
	\end{align*}
\end{proof}
\begin{claim}\label{cl9-mnmxII} $\frac{X(0) \bullet A_{i(0)}}{X(0) \bullet F(y(0))}\geq \frac 1\psi:=\frac{\lambda_{\min}(A_{i(0)})}{\lambda_{\max}( A_{i'})}\geq\frac{\min_i\lambda_{\min}(A_i)}{\max_i\lambda_{\max}( A_i)}\ge \frac{\epsilon}{n^2 2^{2\cL}}$, where $i'$ is the index such that $y(0)=\b1_{i'}$ .\end{claim}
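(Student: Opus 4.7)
The plan is to bound the numerator and denominator of $X(0)\bullet A_{i(0)}/X(0)\bullet F(y(0))$ separately using the spectral decomposition of $X(0)$, exactly in parallel to the argument used to prove Claim~\ref{cl9-mnmxI}, but reversing the direction of each inequality since the roles of the maximization and minimization oracle are swapped in the covering algorithm.

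First I would observe that in Algorithm~\ref{log-cover-alg} the initialization is $y(0)=\b1_{i'}$ for a single index $i'\in[m]$, so $F(y(0))=A_{i'}$. Writing the spectral decomposition $X(0)=\sum_{j=1}^n\lambda_j u_ju_j^T$ (with $\lambda_j\ge 0$ since $X(0)\succ 0$ by Claim~\ref{cl0-mnmxII}), I would compute
\begin{align*}
X(0)\bullet A_{i(0)} &= \sum_{j=1}^n\lambda_j\, u_j^T A_{i(0)} u_j \;\ge\; \lambda_{\min}(A_{i(0)})\sum_{j=1}^n\lambda_j \;=\; \lambda_{\min}(A_{i(0)})\cdot\Tr(X(0)),\\
X(0)\bullet F(y(0)) &= X(0)\bullet A_{i'} \;=\; \sum_{j=1}^n\lambda_j\, u_j^T A_{i'} u_j \;\le\; \lambda_{\max}(A_{i'})\cdot\Tr(X(0)).
\end{align*}
Since $\Tr(X(0))>0$, dividing gives the first inequality $\frac{X(0)\bullet A_{i(0)}}{X(0)\bullet F(y(0))}\ge\frac{\lambda_{\min}(A_{i(0)})}{\lambda_{\max}(A_{i'})}=\frac{1}{\psi}$. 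The second inequality is immediate by replacing the numerator by $\min_i\lambda_{\min}(A_i)$ and the denominator by $\max_i\lambda_{\max}(A_i)$.

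For the final inequality, I would invoke assumption (B-II) from Section~\ref{normal-}, which guarantees $\min_i\lambda_{\min}(A_i)=\Omega\bigl(\tfrac{\epsilon}{n}\min_{i'}\lambda_{\max}(A_{i'})\bigr)$. Combined with the standard bit-length bounds $\min_i\lambda_{\max}(A_i)\ge 2^{-\cL}$ (since each $A_i$ is nonzero with entries representable in $\cL$ bits) and $\max_i\lambda_{\max}(A_i)\le\Tr(A_i)\le n\cdot 2^\cL$, we obtain
\begin{equation*}
\frac{\min_i\lambda_{\min}(A_i)}{\max_i\lambda_{\max}(A_i)} \;\ge\; \frac{\epsilon}{n}\cdot\frac{\min_i\lambda_{\max}(A_i)}{\max_i\lambda_{\max}(A_i)} \;\ge\; \frac{\epsilon}{n}\cdot\frac{2^{-\cL}}{n\cdot 2^\cL} \;=\; \frac{\epsilon}{n^2\,2^{2\cL}},
\end{equation*}
completing the chain.

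The only nontrivial step is the last inequality, and even that is not really an obstacle: it is a bookkeeping exercise combining (B-II) with elementary bit-length bounds on extreme eigenvalues of PSD matrices with $\cL$-bit entries. The first two inequalities follow directly from positivity of $X(0)$ and the single-index initialization $y(0)=\b1_{i'}$, so the entire claim is essentially a one-line spectral estimate plus a reference to assumption (B-II).
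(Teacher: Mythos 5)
Your proof is correct and follows essentially the same argument as the paper: the same spectral decomposition of $X(0)$ to lower-bound $X(0)\bullet A_{i(0)}$ and upper-bound $X(0)\bullet F(y(0))=X(0)\bullet A_{i'}$ via extreme eigenvalues, and then the appeal to Assumption (B-II) for the final estimate. The paper simply states that $\psi\le\frac{n^2 2^{2\cL}}{\epsilon}$ follows from (B-II) without unpacking the bit-length bookkeeping; your last paragraph fills in that routine detail in a standard way.
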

\begin{proof}
	Let $X(0)=\sum_{j=1}^n\lambda_ju_ju_j^T$ be the spectral decomposition of $X(0)$. Then,  
	\begin{align*}
	X(0)\bullet A_{i(0)}&=\sum_{j=1}^n\lambda_jA_{i(0)}\bullet u_ju_j^T	\geq\sum_{j=1}^n\lambda_j\lambda_{\min}(A_{i(0)})=\lambda_{\min}(A_{i(0)})\cdot\Tr(X(0))\\
	X(0)\bullet F(y(0))&=\sum_{j=1}^n\lambda_jA_{i'}\bullet u_ju_j^T	\leq\sum_{j=1}^n\lambda_j\lambda_{\max}(A_{i'})=\lambda_{\max}( A_{i'})\cdot\Tr(X(0)). 
	\end{align*}
	Note that $\psi\le\frac{n^2 2^{2\cL}}{\epsilon}$ by Assumption~(B-II). 
	The claim follows.
\end{proof}

\begin{claim}\label{cl10-mnmxII}
	The algorithm terminates in at most $O(n\log \psi+\frac{n}{\epsilon^2})$ iterations. 
\end{claim}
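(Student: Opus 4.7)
The plan is to mirror the proof of Claim~\ref{cl10-mnmxI} essentially verbatim, now using Claims~\ref{cl7-mnmxII}, \ref{cl8-mnmxII}, and \ref{cl9-mnmxII} in place of their packing counterparts. The structural symmetry is clean: in the covering setting the potential $\Phi(t)$ \emph{decreases} (Claim~\ref{cl7-mnmxII} gives an upper bound on the increment by a negative quantity), while Claim~\ref{cl8-mnmxII} supplies a \emph{lower} bound on $\Phi(t')-\Phi(t)$. Pinching these two bounds against each other on every scaling phase delivers the iteration count.

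Concretely, I would set $t_{-1}:=-1$ and, for $s=0,1,\ldots$, let $t_s$ be the smallest $t$ with $\nu(t+1)\le\eps_s$, so that for every intermediate $t\in\{t_{s-1}+1,\ldots,t_s-1\}$ one has $\nu(t+1)>\eps_s$. For the base phase $s=0$, telescoping Claim~\ref{cl7-mnmxII} over $t=0,\ldots,t_0-1$ yields $\Phi(t_0)-\Phi(0)<-\frac{\eps_0^{\,3}\,t_0}{40n}$, while Claim~\ref{cl8-mnmxII} applied between $t=0$ and $t'=t_0$, combined with the lower bound $X(0)\bullet A_{i(0)}/(X(0)\bullet F(y(0)))\ge 1/\psi$ from Claim~\ref{cl9-mnmxII}, gives $\Phi(t_0)-\Phi(0)\ge -(1-\eps_0)\ln\psi-O(1)$. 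With $\eps_0=\tfrac14$ these two inequalities combine to $t_0=O(n\log\psi)$.

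For $s\ge 1$, the same telescoping of Claim~\ref{cl7-mnmxII} produces $\Phi(t_s)-\Phi(t_{s-1})\le -\frac{\eps_s^{\,3}(t_s-t_{s-1})}{40n}$. The key ingredient for the matching lower bound is that the exit condition of phase $s-1$ supplies $\nu(t_{s-1}+1)\le\eps_{s-1}=2\eps_s$, and unpacking the definition of $\nu$ in line~\ref{s4-mnmxII} gives
\[
\frac{X(t_{s-1})\bullet A_{i(t_{s-1})}}{X(t_{s-1})\bullet F(y(t_{s-1}))}\;=\;\frac{1-\nu(t_{s-1}+1)}{1+\nu(t_{s-1}+1)}\;\ge\;\frac{1-2\eps_s}{1+2\eps_s}.
\]
Plugging this ratio into Claim~\ref{cl8-mnmxII} and using $\ln(1-O(\eps_s))=-O(\eps_s)$ (valid since $\eps_s\le\tfrac14$) shows $\Phi(t_s)-\Phi(t_{s-1})\ge -O(\eps_s)$, whence $t_s-t_{s-1}=O(n/\eps_s^{\,2})$. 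Summing over $s=1,\ldots,\lceil\log(1/\epsilon)\rceil$ gives a geometric series dominated by its last term $O(n/\epsilon^{2})$, and adding the base bound $t_0=O(n\log\psi)$ yields the desired $O(n\log\psi+n/\epsilon^{2})$ total.

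The main ``obstacle'' is really just bookkeeping of signs and constants rather than anything conceptually new, since Claims~\ref{cl7-mnmxII} and~\ref{cl8-mnmxII} both flip orientation relative to the packing case. One needs to verify that the arguments of all logarithms appearing in Claim~\ref{cl8-mnmxII} remain strictly positive along the phase (guaranteed by $\nu\in(0,1)$ from Claim~\ref{cl4-mnmxII} and by $\eps_s\le\tfrac14$), and that the $O(1)$ constant absorbed into the lower bound $-O(\eps_s)$ for $\Phi(t_s)-\Phi(t_{s-1})$ does not dominate and destroy the $\eps_s^{\,3}$ factor supplied by Claim~\ref{cl7-mnmxII}; choosing $\eps_0=\tfrac14$ (rather than $\tfrac12$ as in the packing case) is precisely what keeps the logs well-behaved here.
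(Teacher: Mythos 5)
Your proposal is correct and follows essentially the same approach as the paper: telescope the per-step potential drop from Claim~\ref{cl7-mnmxII}, pinch it against the aggregate lower bound from Claim~\ref{cl8-mnmxII} (using Claim~\ref{cl9-mnmxII} for the base phase and the exit condition $\nu(t_{s-1}+1)\le 2\eps_s$ rewritten via $\frac{X\bullet A_{i}}{X\bullet F}=\frac{1-\nu}{1+\nu}$ for phases $s\ge 1$), and sum the geometric series. The only cosmetic deviation is your threshold $\nu(t+1)\le\eps_s$ defining $t_s$ where the paper uses $\nu(t+1)\le 2\eps_s$, a constant-factor bookkeeping choice that does not affect the asymptotics.
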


\begin{proof}
	Let $t_{-1}=-1$ and, for $s=0,1,2,\ldots$, let $t_s$ be the smallest $t$ such that $\nu(t+1)\le2^{-(s+1)}$ (so $t_s+1$ is the value of $t$ at which the iteration $s+1$ of the outer while-loop starts). Then for $t=t_{s-1}+1,\ldots,t_{s}-1$, we have $\nu(t+1)>2^{-(s+1)}=2\eps_s$. Hence, for $s=0$,
	\begin{align*}
	-\frac{\eps_0^3 t_0}{40n}&>\Phi(t_0)-\Phi(0)	\tag{by Claim~\ref{cl7-mnmxII}} \\ &\ge(1-\eps_0)\ln \displaystyle\frac{(1-2\eps_0)X(0)\bullet A_{i(0)}}{(1-\eps_0)^2X(0) \bullet F(y(0))}+\ln(1-\eps_0)	\tag{by Claim~\ref{cl8-mnmxII}} \\ 
	&\ge(1-\eps_0)\ln \frac{\psi(1-2\eps_0)}{(1-\eps_0)^2}+\ln(1-\eps_0)	\tag{by Claim~\ref{cl9-mnmxII}}.
	\end{align*}
	Setting $\eps_0=\frac{1}{4}$ in the last series of inequalities we get 
	\begin{align}\label{est1-II}
	t_0&<1920n\ln\big(\frac{9\psi}{8}\big)+\ln\frac43=O(n\log \psi).
	\end{align}
	Now consider $s\ge 1$: 
	\begin{align*}
	-\frac{\eps_s^3 (t_s-t_{s-1})}{40n}&>\Phi(t_s)-\Phi(t_{s-1})	\tag{by Claim~\ref{cl7-mnmxII}} \\ &\ge(1-\eps_s)\ln\frac{(1-2\eps_s)X(t_{s-1})\bullet A_{i(t_{s-1})}}{(1-\eps_s)^2X(t_{s-1})\bullet F(y(t_{s-1}))}+\ln(1-\eps_s)	\tag{by Claim~\ref{cl8-mnmxII}} \\ 
	&=
	(1-\eps_s)\ln \frac{(1-2\eps_s)\big(1-\nu(t_{s-1}+1)\big)}{(1-\eps_s)^2\big(1+\nu(t_{s-1}+1)\big)}+\ln(1-\eps_s)	\tag{by definition of ~$\nu(t_{s-1}+1)$}\\
	&\ge(1-\eps_s)\ln \frac{(1-2\eps_s)(1-4\eps_s)}{(1-\eps_s)^2(1+4\eps_s)}+\ln(1-\eps_s)	\tag{$\because \nu(t_{s-1}+1)\le2^{-s}=4\eps_s$}\\
	&\ge-(1-\eps_s)\ln(1+32\eps_s)-\ln(1+3\eps_s)	\tag{$\because\eps_s\le\frac18$}
	>-35\eps_s.
	\end{align*}
	Setting $\eps_s=\frac{1}{2^{s+2}}$ in the last series of inequalities we get 
	\begin{align}\label{est2-II}
	t_s-t_{s-1}&<\frac{1400n}{\eps_s^2}=O(n/\eps_s^2).
	\end{align}
	Summing \raf{est1-II}, and \raf{est2-II} over $s=1,2,\ldots,\lceil\log\frac{1}{\epsilon}\rceil$, we get the claim. 
\end{proof}

\subsubsection{Primal Dual Feasibility and Approximate Optimality  }
Let $t_f+1$ be the value of $t$ when the algorithm terminates and $s_f+1$ be the value of $s$ at termination. For simplicity, we write  $s=s_f$.

\begin{claim}\label{cl11-mnmxII}(Primal feasibility).
	$\hat X\succ 0$ and $\min_i  A_i\bullet \hat X\geq 1$.
\end{claim}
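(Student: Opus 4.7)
The plan is to mirror the argument used for Claim~\ref{cl11-mnmxI}, but with the inequalities reversed to reflect that we are now dealing with the covering side~\raf{coverII}. The strict positive definiteness $\hat X\succ 0$ is immediate from the definition of $\hat X$ in step~\ref{so-mnmxII} and Claim~\ref{cl0-mnmxII} (which gives $X(t_f)\succ 0$), together with the fact that $\theta(t_f)>0$.

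For the covering constraints, I would start from the stopping condition in line~\ref{s1.-mnmxII}, namely $\nu(t_f)\le\bareps$, and rewrite it using the definition of $\nu(t_f)$ in line~\ref{s4-mnmxII}:
\begin{align*}
\frac{X(t_f)\bullet F(y(t_f))-X(t_f)\bullet A_{i(t_f)}}{X(t_f)\bullet A_{i(t_f)}+X(t_f)\bullet F(y(t_f))}\le\bareps,
\end{align*}
which rearranges to $(1-\bareps)X(t_f)\bullet F(y(t_f))\le(1+\bareps)X(t_f)\bullet A_{i(t_f)}$. Since $i(t_f)=\argmin_i A_i\bullet X(t_f)$, this yields
\begin{align*}
\min_i A_i\bullet X(t_f)=A_{i(t_f)}\bullet X(t_f)\ge\frac{1-\bareps}{1+\bareps}X(t_f)\bullet F(y(t_f)).
\end{align*}

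Next, I would invoke Claim~\ref{cl3-mnmxII}, which gives $X(t_f)\bullet F(y(t_f))>(1-2\bareps)\theta(t_f)$, so that
\begin{align*}
\min_i A_i\bullet X(t_f)\ge\frac{(1-\bareps)(1-2\bareps)}{1+\bareps}\theta(t_f).
\end{align*}
Finally, substituting into the definition $\hat X=\frac{(1+\eps_{s-1})X(t_f)}{(1-2\eps_{s-1})^2\theta(t_f)}$ from step~\ref{so-mnmxII} (with $\eps_{s-1}=\bareps$ under our convention $s=s_f$), I obtain
\begin{align*}
\min_i A_i\bullet\hat X\ge\frac{(1+\bareps)}{(1-2\bareps)^2\theta(t_f)}\cdot\frac{(1-\bareps)(1-2\bareps)}{1+\bareps}\theta(t_f)=\frac{1-\bareps}{1-2\bareps}\ge 1.
\end{align*}
The case in which $\nu(t_f)=0$ (i.e., the numerator vanishes) is even easier: it forces $X(t_f)\bullet A_{i(t_f)}=X(t_f)\bullet F(y(t_f))$ and the same chain of inequalities applies. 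I do not anticipate any real obstacle; the main point is to choose the constants in the definition of $\hat X$ to match the loss factors from the termination condition and Claim~\ref{cl3-mnmxII}, which explains the precise form of the denominator $(1-2\bareps)^2$ and numerator $(1+\bareps)$ in step~\ref{so-mnmxII}.
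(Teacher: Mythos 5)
Your proof is correct and follows essentially the same route as the paper: start from the stopping condition $\nu(t_f)\le\bareps$, rearrange it to bound $\min_i A_i\bullet X(t_f)$ from below by $\frac{1-\bareps}{1+\bareps}X(t_f)\bullet F(y(t_f))$, invoke Claim~\ref{cl3-mnmxII} to replace $X(t_f)\bullet F(y(t_f))$ by $(1-2\bareps)\theta(t_f)$, and then cancel against the normalization in the definition of $\hat X$ to get $\min_i A_i\bullet\hat X\ge\frac{1-\bareps}{1-2\bareps}\ge 1$. The details and constants match the paper's derivation exactly (the paper's line reading ``$=(1-\bareps)\min_i X(t_f)\bullet A_i$'' is a typo for $(1+\bareps)$, which you correctly avoid).
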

\begin{proof}
	The first claim is immediate from Claim~\ref{cl0-mnmxII}. To see the second claim, we use the definition of $\nu(t_f)$ and the termination condition in line~\ref{s1.-mnmxI} (which is also satisfied even if $X(t_f)\bullet  F(y(t_f))-X(t_f)\bullet A_{i(t_f)} =0$):
	\begin{align*}
	\frac{X(t_f)\bullet  F(y(t_f))-X(t_f)\bullet A_{i(t_f)}}{X(t_1)\bullet A_{i(t_f)} + X(t_f)\bullet  F(y(t_f))} &\leq \bareps.\\
	\therefore (1-\bareps)X(t_f)\bullet F(y(t_f)) &\leq (1+\bareps)X(t_f)\bullet A_{i(t_f)}\\
	&= (1-\bareps)\min_i X(t_f)\bullet A_i \tag{by the defintition of $i(t_f)$}\\
	\therefore (1-\bareps)(1-2\bareps) \theta(t_f) &< (1+\bareps)\min_i X(t_f)\bullet A_i \tag{$\because X(t_f) \bullet F(y(t_f)) > (1-2\bareps)\theta(t_f)$ by Claim~\ref{cl3-mnmxII}}.
	\end{align*}
	The claim follows by the definition of $\hat X$ in step~\ref{so-mnmxI} of the algorithm.
\end{proof}
\begin{claim}(Dual feasibility).\label{cl12-mnmxII}
	$\hat y\ge0$ and $F(\hat y)\prec I$.
\end{claim}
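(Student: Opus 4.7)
The plan is to mirror the proof of Claim~\ref{cl12-mnmxI} (dual feasibility for the packing version), exploiting the entirely analogous structure of Algorithm~\ref{log-cover-alg} and the bounds already established in Claims~\ref{cl2-mnmxII},~\ref{cl4-mnmxII},~\ref{cl5-mnmxII}. The statement has two parts, and both follow from facts already proved.

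For nonnegativity of $\hat y$: recalling from step~\ref{so-mnmxII} that $\hat y = y(t_f-1)/\theta(t_f-1)$, and noting that $\theta(t_f-1)>0$ (since $\theta^*(t)>\lambda_{\max}(F(y(t)))>0$ by Claim~\ref{cl5-mnmxII}, and $\theta(t)$ is a $\delta_s$-upper approximation of $\theta^*(t)$), it suffices to argue $y(t)\ge 0$ throughout. This is a straightforward induction: the base case $y(0)=\b1_i\ge 0$ holds by the initialization in step~\ref{s1-mnmxII}, and by Claim~\ref{cl4-mnmxII} the step size $\tau(t+1)\in(0,1)$ for every iteration except possibly the last, so the update $y(t+1)=(1-\tau(t+1))y(t)+\tau(t+1)\b1_{i(t)}$ in step~\ref{s6-mnmxII} is a convex combination of nonnegative vectors and remains nonnegative.

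For the spectral bound $F(\hat y)\prec I$: since $F$ is linear in $y$, we have $F(\hat y)=F(y(t_f-1))/\theta(t_f-1)$, and hence
\[
\lambda_{\max}\bigl(F(\hat y)\bigr)\;=\;\frac{\lambda_{\max}\bigl(F(y(t_f-1))\bigr)}{\theta(t_f-1)}.
\]
By Claim~\ref{cl2-mnmxII}, $\theta(t_f-1)>\lambda_{\max}(F(y(t_f-1)))/(1-\eps_{s_f-1}/n)$, so $\lambda_{\max}(F(\hat y))<1-\eps_{s_f-1}/n<1$, which yields $F(\hat y)\prec I$ as required.

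There is no real obstacle here: both parts are immediate consequences of invariants already established. The first part is purely combinatorial (convexity of the update preserves nonnegativity), while the second part is a one-line rescaling argument using the tight eigenvalue estimate from Claim~\ref{cl2-mnmxII}, exactly in parallel with the packing case.
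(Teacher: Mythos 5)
Your proof is correct and follows essentially the same route as the paper: nonnegativity of $\hat y$ via the initialization, the $\tau\in(0,1)$ bound from Claim~\ref{cl4-mnmxII}, and the convex update in step~\ref{s6-mnmxII}; and the spectral bound via the rescaling identity $\lambda_{\max}(F(\hat y))=\lambda_{\max}(F(y(t_f)))/\theta(t_f)$ together with the lower bound on $\theta$ from Claim~\ref{cl2-mnmxII}. The only slips are notational (you write $t_f-1$ and $\eps_{s_f-1}$ where the paper's conventions give $t_f$ and $\eps_{s_f}$), which do not affect the argument; in fact you correctly cite the type-II claims where the paper's own proof misprints the type-I references.
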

\begin{proof}
	The fact that $\hat y\ge0$ follows from the initialization of $y(0)$ in step~\ref{s1-mnmxI}, Claim~\ref{cl4-mnmxI},  and the update of $y(t+1)$ in step~\ref{s6-mnmxI}. For the other claim, we have \begin{align*}
	\lambda_{\max}\big(F(\hat y)\big)&=\frac{1}{\theta(t_f)}\lambda_{\max}\big(F(y(t_f))\big)
	\le1-\frac{\bareps}n.\tag{by Claim~\ref{cl2-mnmxII}}
	\end{align*} 
\end{proof}

\begin{claim}(Approximate optimality).\label{cl13-mnmxII}
	$I\bullet \hat X\le\frac{1+\bareps}{(1-2\bareps)^2}\b1^T\hat y$.
\end{claim}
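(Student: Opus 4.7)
The plan is to reduce the inequality to a bound on $\Tr(X(t_f))$ by direct substitution of the definitions of $\hat X$ and $\hat y$ in step~\ref{so-mnmxII}, and then invoke two preliminary claims already established for the covering setting.

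First, I would compute each side of the claimed inequality. By the definition of $\hat X$ in step~\ref{so-mnmxII},
\[
I\bullet \hat X \;=\; \Tr(\hat X) \;=\; \frac{(1+\eps_{s-1})\,\Tr(X(t_f))}{(1-2\eps_{s-1})^2\,\theta(t_f)}.
\]
For the dual side, Claim~\ref{cl5.-mnmxII} gives $\b1^Ty(t_f)=1$ (by the same inductive argument as in the packing version: the initialization sets $y(0)=\b1_i$ for some $i$, and each update in step~\ref{s6-mnmxII} preserves the sum, since $\b1^T((1-\tau)y(t)+\tau\b1_{i(t)})=(1-\tau)+\tau=1$). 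Hence
\[
\b1^T\hat y \;=\; \frac{\b1^Ty(t_f)}{\theta(t_f)} \;=\; \frac{1}{\theta(t_f)}.
\]

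Now it suffices to show that $\Tr(X(t_f)) \leq 1$. By the definition of $X(t)$ in step~\ref{s3.-mnmxII},
\[
\Tr(X(t_f)) \;=\; \frac{\bareps\,\theta(t_f)}{n}\,\Tr\bigl(\theta(t_f) I-F(y(t_f))\bigr)^{-1} \;=\; g(\theta(t_f)),
\]
and Claim~\ref{cl0--mnmxII} gives $g(\theta(t_f))\in(1-\bareps,1]$, so in particular $\Tr(X(t_f))\leq 1$.

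Combining the three observations,
\[
I\bullet \hat X \;=\; \frac{(1+\bareps)\,\Tr(X(t_f))}{(1-2\bareps)^2\,\theta(t_f)} \;\leq\; \frac{1+\bareps}{(1-2\bareps)^2}\cdot\frac{1}{\theta(t_f)} \;=\; \frac{1+\bareps}{(1-2\bareps)^2}\,\b1^T\hat y,
\]
which is exactly the desired bound. There is no real obstacle here; the content of the claim is entirely absorbed into Claims~\ref{cl0--mnmxII} and~\ref{cl5.-mnmxII}, which were proved precisely to make this final accounting step a one-line computation. The slight asymmetry between the $(1+\bareps)$ factor in the numerator and the $(1-2\bareps)^2$ factor in the denominator matches the scaling introduced in step~\ref{so-mnmxII} to restore primal feasibility (as used in Claim~\ref{cl11-mnmxII}).
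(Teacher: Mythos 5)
Your proof is correct and follows exactly the same route as the paper's: combine $\Tr(X(t_f))\le 1$ from Claim~\ref{cl0--mnmxII}, $\b1^Ty(t_f)=1$ from Claim~\ref{cl5.-mnmxII}, and the scaling definitions of $\hat X$ and $\hat y$ in step~\ref{so-mnmxII}. You simply spell out the intermediate substitutions that the paper leaves implicit.
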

\begin{proof}
	By Claim~\ref{cl0--mnmxII}, we have $\Tr(X(t_f)\le1$, and by Claim~\ref{cl5.-mnmxII}, we have $\b1^Ty(t_f)=1$. The claim follows by the definition of $\hat X$ and $\hat y$ in step~\ref{so-mnmxII}.   
\end{proof}	

\begin{remark}\label{r3}
	Similar to the packing case, suppose that in step~\ref{s3-mnmxII} of Algorithm~\ref{log-cover-alg}, we instead define $i(t)$ to be an index $i\in[m]$ such that $A_i\bullet X(t)\le 1+\eps_s$, and we are guaranteed that such index exists in each iteration of the algorithm. Then the dual solution $\hat y$ satisfies: $\b1^T\hat y\ge 1-O(\epsilon)$. Indeed, the proof of Claim~\ref{cl11-mnmxII} can be easily modified to show that $\theta(t_f)\le\frac{(1+\eps_{s_f})^2}{(1-2\eps_{s_f})^2}$, which combined with the definition of $\hat y$ in step~\ref{so-mnmxI} of the algorithm implies the claim.
\end{remark}	

\subsubsection{Running Time per Iteration}

\paragraph{Computing $\theta(t)$.} Given $F:=F(y(t))\succeq 0$, we first compute an approximation $\widetilde\lambda$ of $\lambda_{\max}(F)$ using Lanczos' algorithm with a random start.
By Claim~\ref{cl2-mnmxII}, we need $\widetilde\lambda$ to lie in the range $[\lambda_{\max}(F),\frac{\lambda_{\max}(F)}{1-\eps_s/n}]$.
To obtain $\widetilde\lambda$, we  apply Lemma~\ref{KW} with $M:=F^{n}$ and $\gamma:=\frac{\eps_s}{2}$. Then in $O\big(\frac{\log n}{\sqrt{\eps_s}}\big)$ iterations we get $\widetilde \lambda:=\left(\frac{v^TF^{n}v}{1-\gamma}\right)^{1/n}$(where $v$ be the vector obtained from Lemma~\ref{KW}) satisfying
\begin{align}\label{ee--II}
\lambda_{\max}(F)\le\widetilde\lambda \le\frac{\lambda_{\max}(F)}{(1-\gamma)^{1/n}}\le (1+\eps_s)^{1/n}\lambda_{\max}(F)\le\frac{\lambda_{\max}(F)}{1-\eps_s/n}.
\end{align}

Thus, the overall running time for computing $\widetilde\lambda$ is $O(n^{\omega}\log n+\frac{n^2\log n}{\sqrt{\eps_s}})$.
Given $\widetilde\lambda$, we know by Claim~\ref{cl2-mnmxII} and \raf{ee--II} that $\theta^*(t)\in[\widetilde\lambda,\frac{\widetilde\lambda}{1-\eps_s}]$. Then we can apply binary search to find $\theta(t):=\theta^*(t)^{\delta_s}$ as follows. Let $\theta_k=\widetilde\lambda(1+\delta_s)^k$, for $k=0,1,\ldots,K:=\lceil\frac{-2\ln(1-\eps_s)}{\delta_s}\rceil$, and note that $\theta_K\ge\widetilde\lambda$. Then we do binary search on the exponent $k\in\{0,1\ldots,K\}$; each step of the search evaluates $g(\theta_k):=\displaystyle\frac{\bareps \theta_k}{n}\Tr(\theta_k I-F)^{-1}$, and depending on whether this value is less than or at least $1$, the value of $k$ is decreased or increased, respectively. The search stops when the search interval $[\ell,u]$ has $u\le \ell+1$, in which case we set $\theta(t)=\theta_u$; the number of steps until this happens is $O(\log K)=O(\log\frac{1}{\delta_s})=O(\log\frac n{\eps_s})$. By the monotonicity of $g(x)$ (in the interval $[\lambda_{\max}(F),+\infty)$), and the property of binary search, we know that $\theta^*\in[\theta_\ell,\theta_u]$. Thus, by the stopping criterion, $$\theta_\ell\le\theta^*(t)\le\theta(t)=\theta_u\le\theta_{\ell+1}=(1+\delta_s)\theta_\ell,$$
implying that $\theta^*(t)\le\theta(t)\le(1+\delta_s)\theta^*(t)$.
Since evaluating $g(\theta_k)$ takes $O(n^\omega)$, the overall running time for the binary search procedure is $O(n^{\omega}\log\frac n{\eps_s})$, and hence the total time needed for for computing $\theta(t)$ is $O(n^{\omega}\log \frac n{\epsilon}+\frac{n^2\log n}{\sqrt{\epsilon}})$.

As all other steps of the algorithm inside the inner while-loop can be done in $O(\cT+n^2)$ time, where $\cT$ is the time taken by a single call to the minimization oracle in step~\ref{s3-mnmxII}, in view of Claim~\ref{cl10-mnmxI}, we obtain the following  result.
\begin{theorem}\label{thm:cover-main}
	For any $\epsilon>0$, Algorithm~\ref{log-cover-alg} outputs an $O(n\log \psi+\frac{n}{\epsilon^2})$-sparse $O(\epsilon)$-optimal primal-dual pair in time $O((n\log \psi+\frac{n}{\epsilon^2})(n^{\omega}\log \frac n{\epsilon}+\frac{n^2\log n}{\sqrt{\epsilon}}+\cT))=\tilde O(\frac{n^{\omega+1}\log \psi}{\epsilon^{2.5}}+\frac{n\cT \log n}{\epsilon^2})$. 
\end{theorem}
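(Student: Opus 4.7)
The plan is to assemble the four ingredients already established for Algorithm~\ref{log-cover-alg}: (i) the iteration bound of Claim~\ref{cl10-mnmxII}; (ii) the primal/dual feasibility of Claims~\ref{cl11-mnmxII} and~\ref{cl12-mnmxII}; (iii) the approximate optimality of Claim~\ref{cl13-mnmxII}; and (iv) the per-iteration cost worked out in the ``Computing $\theta(t)$'' paragraph. Nothing substantive remains to be invented; the theorem is essentially a packaging statement, so the plan is to verify that each claim plugs into the right slot and to do the final bookkeeping that turns the raw product of bounds into the stated $\widetilde O(\cdot)$ expression.

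For the $O(\epsilon)$-optimality, I would read off from Claim~\ref{cl11-mnmxII} that $\hat X \succ 0$ and $\min_i A_i \bullet \hat X \ge 1$, so $\hat X$ is feasible for~\raf{ncoverII}; from Claim~\ref{cl12-mnmxII} that $\hat y \ge 0$ and $F(\hat y) \prec I$, so $\hat y$ is feasible for~\raf{npackII}; and from Claim~\ref{cl13-mnmxII} that $I \bullet \hat X \le \frac{1+\eps_{s_f}}{(1-2\eps_{s_f})^2}\,\b1^T \hat y$. Since the outer while-loop exits only once $\eps_{s_f}\le \epsilon$, the ratio is $1+O(\epsilon)$, which combined with weak duality yields the guarantee~\raf{pdf-II}.

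For sparsity, the dual update in step~\ref{s6-mnmxII}, $y(t+1)=(1-\tau(t+1))y(t)+\tau(t+1)\b1_{i(t)}$, enlarges $\supp(y)$ by at most one index per iteration, and $y(0)=\b1_i$ starts with a single nonzero coordinate. Hence $|\supp(\hat y)|\le t_f+1$, and Claim~\ref{cl10-mnmxII} bounds $t_f+1 = O(n\log\psi + n/\epsilon^2)$, which is the claimed sparsity.

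For the running time, each iteration of the inner while-loop is dominated by the computation of $\theta(t)$: Lanczos with a random start applied to $F^{n}$ (via Lemma~\ref{KW}) followed by binary search on the exponent grid $\{\theta_k\}$ costs $O(n^\omega\log(n/\epsilon) + n^2\log n/\sqrt{\epsilon})$, to which we add one oracle call of cost $\cT$ and $O(n^2)$ for the remaining vector/matrix updates. Multiplying by the iteration bound of Claim~\ref{cl10-mnmxII} gives the explicit running time
\[
O\Big((n\log\psi + n/\epsilon^2)\big(n^\omega\log(n/\epsilon) + n^2\log n/\sqrt{\epsilon} + \cT\big)\Big).
\]
The only mildly non-routine step, and hence the main obstacle, is collapsing the resulting six cross-terms into the compact form $\widetilde O(n^{\omega+1}\log\psi/\epsilon^{2.5} + n\cT\log n/\epsilon^2)$: using $\omega\ge 2$ the $n^3/\sqrt\epsilon$ terms are absorbed into the $n^{\omega+1}$ terms, the polylog factors $\log(n/\epsilon)$ and $\log n$ are swallowed by $\widetilde O(\cdot)$, and the oracle term $n\cT\log\psi$ is dominated by the combination of $n\cT\log n/\epsilon^2$ and $n^{\omega+1}\log\psi/\epsilon^{2.5}$ after recalling that $\log\psi = O(\log n + \cL + \log(1/\epsilon))$ by Claim~\ref{cl9-mnmxII} and Assumption~(B-II). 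This finishes the proposal.
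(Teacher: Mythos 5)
Your proposal is correct and follows the same packaging approach as the paper: Theorem~\ref{thm:cover-main} is a direct assembly of Claims~\ref{cl10-mnmxII}, \ref{cl11-mnmxII}, \ref{cl12-mnmxII}, \ref{cl13-mnmxII}, the one-new-support-index-per-iteration observation, and the per-iteration cost from the ``Computing $\theta(t)$'' paragraph. One tiny bookkeeping slip: since the outer loop runs while $\eps_s>\epsilon$ and then halves, at termination $\epsilon<\eps_{s_f}\le 2\epsilon$ (not $\eps_{s_f}\le\epsilon$), but this still gives $\frac{1+\eps_{s_f}}{(1-2\eps_{s_f})^2}=1+O(\epsilon)$, so the conclusion stands.
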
	

\section{Applications}\label{app}

\subsection{Robust Packing and Covering SDPs}

Consider a packing-covering pair of the form~\raf{packI}-\raf{coverI} or ~\raf{packII}-\raf{coverII}. 
In the framework of {\it robust optimization} (see, e.g. \cite{BEN09,BN02}), we assume that each constraint matrix $A_i$ is not known exactly; instead, it is given by a convex uncertainty set $\cA_i\subseteq\SS^{n}_+$.
It is required to find a (near)-optimal solution for the packing-covering pair under the {\it worst-case} choice $A_i\in\cA_i$ of the constraints in each uncertainty set. A typical example of a {\it convex} uncertainty set is given by an {\it affine perturbation} around a nominal matrix $A_i^0\in\SS_+^n$: 
\begin{align}
\cA_i=\left\{A_i:=A_i^0+\sum_{r=1}^k\delta_rA_i^r:~\delta=(\delta_1,\ldots,\delta_k)\in\cD\right\}, 
\end{align}
where  $A_i^1,\ldots,A_i^k\in\SS^n_+$, and $\cD\subseteq\RR^k_+$ can take, for example, one of the following forms: 
\begin{itemize}
	\item {\it Ellipsoidal uncertainty}: $\cD=E(\delta_0,D):=\{\delta\in\RR^{k}_+:(\delta-\delta_0)^TD^{-1}(\delta-\delta_0)\le 1\}$, for given positive definite matrix $D\in\SS_+^k$ and vector $\delta_0\in\RR_+^k$ such that $E(\delta_0,D)\subseteq\RR_+^k$;
	\item {\it Box uncertainty}: $\cD=B(\delta_0,\rho):=\{\delta\in\RR^{k}_+:\|\delta-\delta_0\|_1\le\rho\}$, for given positive number $\rho\in\RR_+$ and vector $\delta_0\in\RR_+^k$ such that $B(\delta_0,\rho)\subseteq\RR_+^k$;
	\item {\it Polyhedral uncertainty}: $\cD:=\{\delta\in\RR^{k}_+:D\delta\le w\}$, for given matrix $D\in\RR^{h\times k}$ and vector $w\in\RR^{h}$.
\end{itemize}
Without loss of generality, we consider the robust version of  ~\raf{npackI}-\raf{ncoverI},  where $A_i$, for $i\in[m]$, belongs to a convex uncertainty set $\cA_i$. Then the robust optimization problem and its dual can be written as follows:

{\centering \hspace*{-18pt}
	\begin{minipage}[t]{.47\textwidth}
		\begin{alignat}{3}
		\label{rpackI}
		\tag{\rP-I} \quad& \displaystyle z_P^* = \max\quad I\bullet X\\
		\text{s.t.}\quad & \displaystyle A_i\bullet X\leq 1, \quad \forall A_i\in\cA_i\quad \forall i\in [m]\nonumber\\
		\qquad &X\in\RR^{n\times n},~X\succeq  0\nonumber
		\end{alignat}
	\end{minipage}
	\,\,\, \rule[-18ex]{1pt}{18ex}
	\begin{minipage}[t]{0.47\textwidth}
		\begin{alignat}{3}
		\label{rcoverI}
		\tag{\rC-I} \quad& \displaystyle z_D^* = \inf\quad \sum_{i=1}^m\int_{\cA_i}y_{A_i}^idA_i\\
		\text{s.t.}\quad & \displaystyle \sum_{i=1}^m\int_{\cA_i}y_{A_i}^iA_idA_i\succeq I\nonumber\\
		\qquad &y^i \text{ is a {\it discrete measure} on }\cA_i, \quad  \forall i\in [m].\nonumber
		\end{alignat}
\end{minipage}}


\noindent As before, we assume~(B-I), where $A_{1},\ldots,A_{r}\in\bigcup_{i\in[m]}\cA_i$.  We call  a pair of solutions $(X,y)$ to be $\epsilon$-optimal for~\raf{rpackI}-\raf{rcoverI}, if 
$$z_P^*\ge I\bullet X\geq(1-\epsilon)\sum_{i=1}^m\int_{\cA_i}y_{A_i}^idA_i\ge(1-\epsilon) z_D^*.$$
As a corollary of Theorem~\ref{thm:pack-main}, we obtain the following result.
\begin{theorem}\label{thm:pack-roust}
	For any $\epsilon>0$, Algorithm~\ref{log-pack-alg} outputs an $O(\epsilon)$-optimal primal-dual pair for~\raf{rpackI}-\raf{rcoverI} in time $\tilde O\big(\frac{n^{\omega+1}\log\psi}{\epsilon^{2.5}}+\frac{n\cT\log\psi}{\epsilon^2}\big)$, where
	$\psi:=\frac{r\cdot\max_{i\in[m],A_i\in\cA_i}\lambda_{\max}(A_i)}{\lambda_{\min}(\bar A)}$ and $\cT$ is the time to compute, for a given $Y\in\SS_+^n$, a pair $(i,A_i)$ such that
	\begin{align}\label{oracle-}
	\displaystyle (i,A_i)\in\argmax_{i\in[m],~A_i\in\cA_i}A_i\bullet Y.
	\end{align}
\end{theorem}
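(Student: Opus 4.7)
The plan is to reduce \raf{rpackI}-\raf{rcoverI} to an instance of the standard normalized packing-covering pair \raf{npackI}-\raf{ncoverI} over an (infinite) constraint set, and then invoke Theorem~\ref{thm:pack-main} essentially verbatim. First, I observe that $X$ is feasible for \raf{rpackI} iff $A \bullet X \le 1$ for every matrix in the enlarged family $\bigcup_{i\in[m]}\cA_i$, so \raf{rpackI} is identical to a packing SDP indexed by the (possibly infinite) set $\cI := \{(i, A_i) : i\in[m],\ A_i\in\cA_i\}$. Correspondingly, \raf{rcoverI} matches \raf{ncoverI} over $\cI$, with a discrete measure $y^i$ on $\cA_i$ playing the role of the dual variable and the total mass $\sum_i \int_{\cA_i} y^i_{A_i}\, dA_i$ equaling the dual objective $\b1^T y$ of \raf{ncoverI}.

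Next, I would run Algorithm~\ref{log-pack-alg} on this pair with three natural re-interpretations: (i) the initialization $y(0) = \frac{1}{r}\sum_{i=1}^r \b1_i$ is taken as the average of $r$ Dirac masses on the matrices $A_1,\ldots,A_r$ furnished by Assumption (B-I), now chosen from $\bigcup_i \cA_i$; (ii) the oracle call in step~\ref{s3-mnmxI} is replaced by the extended oracle \raf{oracle-}, returning some $(i(t), \hat A^{(t)}) \in \argmax_{i,A_i\in\cA_i} A_i \bullet X(t)$ in time $\cT$; (iii) the update in step~\ref{s6-mnmxI} becomes $y(t+1) = (1-\tau(t+1))\, y(t) + \tau(t+1)\, \delta_{\hat A^{(t)}}$, which adds at most one atom per iteration. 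Thus each iterate $y(t)$ remains a finitely-supported discrete measure, keeping the implementation well-defined.

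With these re-interpretations in place, I expect every claim of Section~\ref{pack-analysis} to carry over. Claims~\ref{cl0-mnmxI}--\ref{cl7-mnmxI} are purely matrix-analytic statements about $F(y(t))$ and $X(t)$; Claim~\ref{cl5.-mnmxI} (unit total mass) is preserved by the convex-combination update; and the critical last step of Claim~\ref{cl8-mnmxI} uses the identity $\max_{y\ge 0,\,\b1^Ty=1} X\bullet F(y) = \max_{(i,A_i)\in\cI} A_i \bullet X$, which is exactly what the extended oracle computes. The only bound that requires a new estimate is Claim~\ref{cl9-mnmxI}: $\lambda_{\max}(\hat A^{(0)})$ is bounded by $\max_{i,A_i\in\cA_i}\lambda_{\max}(A_i)$, producing the stated $\psi = r\cdot\max_{i,A_i\in\cA_i}\lambda_{\max}(A_i) / \lambda_{\min}(\bar A)$.

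Plugging this revised $\psi$ into Claim~\ref{cl10-mnmxI} gives $O(n\log\psi + n/\epsilon^2)$ iterations, each costing $O\bigl(n^\omega \log(n/\epsilon) + n^2\log n/\sqrt{\epsilon} + \cT\bigr)$ by the arguments of Section~\ref{pack-analysis}, matching the claimed runtime. Primal/dual feasibility (Claims~\ref{cl11-mnmxI}--\ref{cl12-mnmxI}) and approximate optimality (Claim~\ref{cl13-mnmxI}) transfer unchanged, yielding the desired $O(\epsilon)$-optimal primal-dual pair. The main conceptual point to verify is that the infinite-index interpretation is legitimate: since the algorithm reads exactly one matrix per iteration (through the oracle) and writes one atom per iteration, the iterates are always finitely representable, and no measurability or convergence issue arises.
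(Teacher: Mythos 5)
Your proposal is correct and takes essentially the same approach as the paper: the paper itself treats Theorem~\ref{thm:pack-roust} as an immediate corollary of Theorem~\ref{thm:pack-main}, obtained by running Algorithm~\ref{log-pack-alg} over the enlarged (possibly infinite) index set $\cI = \{(i,A_i):i\in[m],\,A_i\in\cA_i\}$ with the oracle \raf{oracle-} substituted for $\mathbf{Max}(\cdot)$, and your write-up simply makes explicit the routine verifications (finitely-supported iterates, the identity $\max_{y}X\bullet F(y)=\max_{(i,A_i)\in\cI}A_i\bullet X$ in Claim~\ref{cl8-mnmxI}, the modified $\psi$ in Claim~\ref{cl9-mnmxI}) that the paper leaves implicit.
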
	
	Note that \raf{oracle-} amounts to solving a linear optimization problem over a convex set. Moreover, for simple uncertainty sets, such as balls or ellipsoids, such computation can be done very efficiently. 
\subsection{Carr-Vempala Type Decomposition}

Consider a maximization (resp., minimization) problem over a discrete set $\cS\subseteq\ZZ^n$ and a corresponding SDP-relaxation over $\cQ\subseteq\SS^n_+$:

{\centering \hspace*{-18pt}
	\begin{minipage}[t]{.47\textwidth}
		\begin{alignat}{3}
		\label{COP}
		\tag{COP} \quad& \displaystyle z_{CO}^* = \left\{\begin{array}{c}\max \\ \min\end{array}\right\}\quad C\bullet qq^T\\
		\text{s.t.}\quad & \displaystyle q\in \cS\nonumber
		\end{alignat}
	\end{minipage}
	\,\,\, \rule[-14ex]{1pt}{14ex}
	\begin{minipage}[t]{0.47\textwidth}
		\begin{alignat}{3}
		\label{sdprlx}
		\tag{SDP-RLX} \quad& \displaystyle z_{SDP}^* = \left\{\begin{array}{c}\max \\ \min\end{array}\right\}\quad C\bullet Q\\
		\text{s.t.}\quad & \displaystyle Q\in\cQ\nonumber,
		\end{alignat}
\end{minipage}}

\noindent where $C\in\SS_+^n$. 

\begin{definition}\label{d1}
	For $\alpha\in(0,1]$ (resp., $\alpha\ge 1$), an $\alpha$-{\it integrality gap verifier} $\cA$ for \raf{sdprlx} is a polytime algorithm that, given any $C\in\SS_+^n$ and any $Q\in\cQ$ returns a $q\in\cS$ such that $B\bullet qq^T\ge\alpha B\bullet Q$ (resp., $C\bullet qq^T\le\alpha C\bullet Q$). 
\end{definition}
For instance, if $\cS=\{-1,1\}^n$ and $Q=\{X\in\SS_+^n:X_{ii}=1~~\forall i\in[n]\}$, then a $\frac2\pi$-integrality gap verifier for the maximization version of \raf{sdprlx} is known \cite{N97}.

Carr and Vempala \cite{CV02} gave a decomposition theorem that allows one to use an $\alpha$-integrality gap verifier for a given LP-relaxation of a combinatorial maximization (resp., minimization) problem, to decompose a given fractional solution to the LP into a convex combination of integer solutions that is dominated by (resp., dominates) $\alpha$ times the fractional solution. 
In \cite{EMO18}, we prove a similar result for SDP relaxations:

\begin{theorem}\label{CV-SDP}
	Consider a combinatorial  maximization (resp., minimization) problem~\raf{COP} and its SDP relaxation~\raf{sdprlx}, admitting an $\alpha$-integrality gap verifier $\cA$. Assume the set $\cS$ is full-dimensional. Then there is a polytime algorithm that, for any given $Q\in\cQ$, finds a set $\cX\subseteq\cS$, of {\it polynomial} size, and a set of convex multipliers $\{\lambda_q\in\RR_+:~q\in\cX\}$, $\sum_{q\in\cX}\lambda_q=1$,  such that 
	\begin{align*}
	\alpha Q\preceq\sum_{q\in\cX}\lambda_qqq^T~~~\quad\text{(resp.,  }\alpha Q\succeq\sum_{q\in\cX}\lambda_qqq^T\text{)}.
	\end{align*}
\end{theorem}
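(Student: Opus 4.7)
The plan is to recast the existence of the decomposition as a packing/covering SDP pair and then extract a sparse solution either through Section~\ref{log-alg}'s algorithms or through an ellipsoid-plus-purification argument. For the maximization case ($\alpha\in(0,1]$), I would consider the covering SDP
\[
\min\ \mathbf{1}^T y\quad\text{s.t.}\quad \sum_{q\in\cS} y_q\,qq^T\succeq \alpha Q,\ y\geq 0,
\]
an instance of~\raf{coverI} with $C=\alpha Q$ and $A_q=qq^T$, together with its LP-dual packing SDP $\max\ \alpha Q\bullet X$ s.t. $q^TXq\leq 1$ for all $q\in\cS$ and $X\succeq 0$ (an instance of~\raf{packI}). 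The first step is to prove the common optimum value is at most $1$: for any packing-feasible $X\succeq 0$, invoke $\cA(X,Q)$ to obtain $q^*\in\cS$ with $q^{*T}Xq^*\geq \alpha(X\bullet Q)$; feasibility $q^{*T}Xq^*\leq 1$ then forces $\alpha(X\bullet Q)\leq 1$. Strong duality (Appendix~\ref{normal}) transfers this bound to the covering optimum, so a feasible $y$ with $\mathbf{1}^T y\leq 1$ exists.

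For the sparsity and algorithmic content, I would invoke Algorithm~\ref{log-pack-alg} on this pair. Full-dimensionality of $\cS$ yields $n$ linearly independent $q_1,\ldots,q_n\in\cS$ with $\bar A=\sum_j q_jq_j^T\succ 0$, securing Assumption~(B-I). The oracle Max$(Y)$ is realized via $\cA(Y,Q)$: the returned $q$ satisfies $q^TYq\geq \alpha(Y\bullet Q)$, and the analysis of Remark~\ref{r2} shows this suffices in place of the true $\argmax$, since $\alpha(Y\bullet Q)$ lies within the allowed $(1-\epsilon_s)$-tolerance of the packing constraint RHS. By Theorem~\ref{thm:pack-main}, this delivers a dual $\hat y$ supported on $O(n\log\psi+n/\epsilon^2)$ vectors with $\sum_q\hat y_q qq^T\succeq \alpha Q$ and $\mathbf{1}^T\hat y\leq 1+O(\epsilon)$; setting $\lambda_q=\hat y_q/\mathbf{1}^T\hat y$ (padding with an arbitrary $q_0\in\cS$ if $\mathbf{1}^T\hat y<1$, which is harmless since $q_0q_0^T\succeq 0$) makes $\sum\lambda_q=1$ while preserving $\sum\lambda_q qq^T\succeq(1-O(\epsilon))\alpha Q$, and running the algorithm against the slightly stronger target $\tfrac{\alpha}{1-O(\epsilon)}\,Q$ restores the stated bound. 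Alternatively, an exact decomposition can be obtained by the ellipsoid method using $\cA$ as separation oracle for the polytope $\{y\geq 0:\sum_q y_q qq^T\succeq \alpha Q,\ \mathbf{1}^T y\leq 1\}$, followed by Pataki-type purification to reduce to an extreme point of support size $O(n^2)$. The minimization case ($\alpha\geq 1$) is entirely symmetric and uses Algorithm~\ref{log-cover-alg} on the pair $\max\ \mathbf{1}^T y$ s.t. $\sum y_q qq^T\preceq \alpha Q$ and $\min\ \alpha Q\bullet X$ s.t. $q^TXq\geq 1$, with the lower-bound version of $\cA$ as minimization oracle; scaling uses $\mathbf{1}^T\hat y\geq 1-O(\epsilon)$ and preserves $\preceq$ under down-scaling.

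The main obstacle I anticipate is reconciling the integrality gap verifier with the nominal $\argmax$/$\argmin$ oracle of the algorithms: $\cA$ certifies the $\alpha$-threshold only against the specific $Q$, not against the maximum over all $q'\in\cS$. Remark~\ref{r2}'s relaxed oracle condition is precisely calibrated for this, but one has to verify that $\alpha(Y(t)\bullet Q)$ tracks the current constraint RHS throughout the execution, which hinges on the $\leq 1$ feasibility bound proven in step one. A secondary technical point is that $\alpha Q$ can be singular, so the normalization of Section~\ref{normal-} must be performed on the range of $Q$; the image-side version of Assumption~(B-I) is supplied by the full-dimensionality of $\cS$.
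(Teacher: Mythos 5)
Your setup of the packing--covering pair (minimize $\mathbf{1}^Ty$ subject to $\sum_q y_q\,qq^T\succeq\alpha Q$, with dual constraints $q^TYq+u\le 1$) is essentially the same as the paper's (CVX-I)/(CVX-dual-I), and the argument that the integrality-gap verifier forces the common optimum to be at most $1$ is correct. The Ellipsoid alternative you mention at the end is in fact the route the paper itself takes for Theorem~\ref{CV-SDP}: the paper just observes $z^*_I=z^*_{II}=1$ and invokes the Ellipsoid method (deferring to~\cite{EMO18}), reserving the logarithmic-potential machinery for the \emph{approximate} companion result, Theorem~\ref{CV-SDP-approx}.

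The gap is in your primary route. The log-potential algorithm combined with $\cA$ as a relaxed oracle can only ever deliver $\sum_q\lambda_q\,qq^T\succeq(1-O(\epsilon))\alpha Q$ --- which is exactly what Theorem~\ref{CV-SDP-approx} states --- and your proposed boosting trick, \emph{run the algorithm against the stronger target} $\tfrac{\alpha}{1-O(\epsilon)}Q$, does not close this to an exact $\succeq\alpha Q$. The verifier $\cA$ certifies only that $q^TYq\ge\alpha\,Q\bullet Y$; it gives no $\tfrac{\alpha}{1-O(\epsilon)}$-guarantee, so the proof that the covering optimum stays below $1$ (which is the whole engine making Remark~\ref{r2}'s relaxed oracle valid) breaks for the inflated target: a convex decomposition dominating $\tfrac{\alpha}{1-O(\epsilon)}Q$ need not exist. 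You cannot scale the $q$'s themselves since they must come from the fixed ground set $\cS$. Consequently the primary route proves the approximate theorem, not the stated exact one. A secondary slip: in the Ellipsoid alternative you describe $\cA$ as a separation oracle for the primal polytope $\{y\ge 0:\sum_q y_q\,qq^T\succeq\alpha Q,\ \mathbf{1}^Ty\le 1\}$, but that set lives in $\RR^{\cS}$ (exponential or infinite dimension, so no Ellipsoid there) and its hard constraint is a PSD check, not a constraint-enumeration problem. The verifier is naturally a (approximate) separation/optimization oracle for the \emph{dual} (CVX-dual-I), where the exponentially many constraints $q^TYq+u\le 1$ are indexed by $q\in\cS$ but the ambient dimension is $O(n^2)$; one runs Ellipsoid there and recovers the primal multipliers from the dual certificate, then purifies to $O(n^2)$ support.
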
    

The proof of Theorem~\ref{CV-SDP} is obtained by considering the following pairs of packing and covering SDPs (of types I and II, respectively): 

{\centering \hspace*{-18pt}
	\begin{minipage}[t]{0.47\textwidth}
		\begin{alignat}{3}
		\label{cvx}
		\tag{CVX-I} \quad& \displaystyle z_I^* = \min\quad \sum_{q\in\cS}\lambda_q\\
		\text{s.t.}\quad & \displaystyle \sum_{q\in\cS}\lambda_qqq^T\succeq \alpha Q\label{ecv0}\\
		& \displaystyle\sum_{q\in\cS}\lambda_q\geq1\label{ecv1}\\
		\qquad &\lambda\in\RR^{\cS},~\lambda\geq 0\nonumber
		\end{alignat}
	\end{minipage}
	\,\,\, \rule[-20ex]{1pt}{20ex}
	\begin{minipage}[t]{.47\textwidth}
		\begin{alignat}{3}
		\label{cvx-dual}
		\tag{CVX-dual-I} \quad& \displaystyle z_I^* = \max\quad \alpha Q\bullet Y+u\\
		\text{s.t.}\quad & \displaystyle qq^T\bullet Y+u\leq 1, \forall q\in \cS\label{ecv2}\\
		\qquad &Y\in\SS^{n}_+,~u\ge 0\nonumber.
		\end{alignat}
\end{minipage}}

{\centering \hspace*{-18pt}
	\begin{minipage}[t]{0.47\textwidth}
		\begin{alignat}{3}
		\label{cvx-}
		\tag{CVX-II} \quad& \displaystyle z_{II}^* = \max\quad \sum_{q\in\cS}\lambda_q\\
		\text{s.t.}\quad & \displaystyle \sum_{q\in\cS}\lambda_qqq^T\preceq \alpha Q\label{ecv0-}\\
		& \displaystyle\sum_{x\in\cS}\lambda_q\leq1\label{ecv1-}\\
		\qquad &\lambda\in\RR^{\cS},~\lambda\geq 0\nonumber
		\end{alignat}
	\end{minipage}
	\,\,\, \rule[-20ex]{1pt}{20ex}
	\begin{minipage}[t]{.47\textwidth}
		\begin{alignat}{3}
		\label{cvx-dual-}
		\tag{CVX-dual-II} \quad& \displaystyle z_{II}^* = \min\quad \alpha Q\bullet Y+u\\
		\text{s.t.}\quad & \displaystyle qq^T\bullet Y+u\geq 1, \forall q\in \cS\label{ecv2-}\\
		\qquad &Y\in\SS_+^{n},~u\ge 0\nonumber.
		\end{alignat}
\end{minipage}}

\noindent It can be shown, using the fact that the SDP relaxation admits an $\alpha$-integrality gap verifier, that $z^*_I=z^*_{II}=1$, and that the two primal-dual pairs can be solved in polynomial time using the Ellipsoid method. 
Here, we derive a more efficient but approximate version of Theorem~\ref{CV-SDP}.	

\begin{theorem}\label{CV-SDP-approx}
	Consider a combinatorial  maximization (resp., minimization) problem~\raf{COP} and its SDP relaxation~\raf{sdprlx}, admitting an $\alpha$-integrality gap verifier $\cA$. Assume the set $\cS$ is full-dimensional and let $\epsilon>0$ be a given constant. Then there is a polytime algorithm that, for any given $Q\in\cQ$, finds a set $\cX\subseteq\cS$ of  size $|\cX|= O(\frac{n^3}{\epsilon^2}\log (nW))$ (resp., of size $|\cX|=O(n\log \frac{n}{\epsilon}+\frac{n}{\epsilon^2})$), where  $W:=\max_{q\in\cS,~i\in[n]}|q_i|$,  and a set of convex multipliers $\{\lambda_q\in\RR_+:~q\in\cX\}$, $\sum_{q\in\cX}\lambda_q=1$,  such that 
	\begin{align}\label{cvx-comb-}
	(1-O(\epsilon))\alpha Q\preceq\sum_{q\in\cX}\lambda_qqq^T~~~\quad\text{(resp.,  }(1+O(\epsilon))\alpha Q\succeq\sum_{q\in\cX}\lambda_qqq^T\text{)}.
	\end{align}
\end{theorem}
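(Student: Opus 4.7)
The plan is to apply Algorithm~\ref{log-pack-alg} (for the maximization case) or Algorithm~\ref{log-cover-alg} (for the minimization case) directly to the primal--dual pair \raf{cvx}--\raf{cvx-dual} (resp., \raf{cvx-}--\raf{cvx-dual-}) introduced in the proof of Theorem~\ref{CV-SDP}, with the $\alpha$-integrality gap verifier $\cA$ standing in for an exact oracle. The sparsity guarantees of these algorithms then yield the claimed bound on $|\cX|$.

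First I would reduce the decomposition problem to the normalized form by lifting to $\SS^{n+1}_+$. For each $q\in\cS$, set $\bar A_q := \diag\!\big(Q^{-1/2}qq^\top Q^{-1/2}/\alpha,\;1\big) \in \SS^{n+1}_+$, so that the single constraint $\sum_q \lambda_q \bar A_q \succeq I_{n+1}$ simultaneously encodes $\sum_q \lambda_q qq^\top \succeq \alpha Q$ and $\sum_q \lambda_q \geq 1$; the analogous $\preceq$ constraint encodes the opposite pair of inequalities. (When $Q$ is singular, a negligible perturbation suffices.) This puts the two decomposition problems into the normalized forms \raf{ncoverI} and \raf{npackII}, respectively, over the exponential index set $\cS$.

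Next I would show that $\cA$ supplies the approximate oracle required by Remarks~\ref{r2}/\ref{r3}. Denote by $X_{11}(t)$ and $X_{22}(t)$ the $n\times n$ leading block and the scalar trailing block of the current primal iterate $X(t)$, and invoke $\cA$ on the pair $(C,Q)$ with $C := Q^{-1/2}X_{11}(t)Q^{-1/2}\succeq 0$. In the maximization case the verifier returns $q\in\cS$ with $q^\top C q \geq \alpha\,C\bullet Q = \alpha\Tr(X_{11}(t))$, whence
\[
\bar A_q \bullet X(t) \;=\; q^\top C q/\alpha + X_{22}(t) \;\geq\; \Tr(X(t)) \;\geq\; 1-\eps_s
\]
using Claim~\ref{cl0--mnmxI}. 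Dually, in the $\alpha\geq 1$ case the reversed inequality combined with Claim~\ref{cl0--mnmxII} gives $\bar A_q \bullet X(t) \leq \Tr(X(t)) \leq 1 \leq 1+\eps_s$. Hence the hypotheses of Remarks~\ref{r2}/\ref{r3} hold in every iteration, and Algorithm~\ref{log-pack-alg} (resp., Algorithm~\ref{log-cover-alg}) outputs a sparse dual $\hat\lambda$ with $\b1^\top\hat\lambda = 1\pm O(\epsilon)$ and $\sum_q \hat\lambda_q \bar A_q \succeq I_{n+1}$ (resp., $\preceq I_{n+1}$). Rescaling $\lambda_q := \hat\lambda_q/\b1^\top\hat\lambda$ yields a probability distribution supported on $\cX := \supp(\hat\lambda)$, and reading off the $(1,1)$-block of the resulting matrix inequality is exactly~\raf{cvx-comb-}.

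It remains to bound $|\cX|$. The raw sparsity from Claim~\ref{cl10-mnmxI}/\ref{cl10-mnmxII} is $O(n\log\psi + n/\epsilon^2)$. In the minimization case, Assumption~(B-II) applied to the normalized $\bar A_q$'s forces $\log\psi = O(\log(n/\epsilon))$, which matches the stated $O(n\log(n/\epsilon) + n/\epsilon^2)$. In the maximization case, $\psi$ is controlled by $\max_q \|Q^{-1/2}q\|^2 \leq nW^2/\lambda_{\min}(Q)$, giving an intermediate sparsity of $O(n\log(nW) + n/\epsilon^2)$, which can then be further sparsified into the stated $O((n^3/\epsilon^2)\log(nW))$ bound using the post-processing of \cite{SHS16} highlighted at the end of Section~1.3. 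The main technical obstacle will be careful bookkeeping of the cascade of $(1\pm O(\epsilon))$ factors through the normalization, the oracle invocation, and the final renormalization, together with a clean treatment of the degenerate case in which $Q$ has very small or zero eigenvalues.
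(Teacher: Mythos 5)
Your high-level strategy (run Algorithm~\ref{log-pack-alg} or \ref{log-cover-alg} on the lifted $(n{+}1)$-dimensional formulation with $\cA$ as the approximate oracle via Remarks~\ref{r2}/\ref{r3}) matches the paper's approach. However, there are several genuine gaps, and the "main technical obstacle" you identify is not where the difficulty actually lies.

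First, your sparsity bound for the maximization case is not correct. You claim $\psi$ is controlled by $\max_q\|Q^{-1/2}q\|^2\le nW^2/\lambda_{\min}(Q)$, and then silently drop the $\lambda_{\min}(Q)$ dependence to get $O(n\log(nW)+n/\epsilon^2)$. But $\lambda_{\min}(Q)$ is not bounded by the theorem's parameters --- it can be arbitrarily small or zero. The theorem's bound $O(\frac{n^3}{\epsilon^2}\log(nW))$ depends only on $n,W,\epsilon$ because the paper bounds the denominator of $\psi$ in terms of $\lambda_{\min}(\sum_{q\in\cS'}qq^T)$ for a linearly independent $\cS'\subseteq\cS$ (using the identity $Y'\bullet p(q)p(q)^T=Y\bullet qq^T$ to cancel the $Q$-dependence), and then lower-bounds this by $2^{-2\cL'-1}$ via Claim~\ref{cl-cvx-I}, with $\cL'\le n^2\log(W+1)$. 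This is precisely where the $n^3$ comes from, and you have no analogue of it. Also, the last sentence --- that you would ``further sparsify'' $O(n\log(nW)+n/\epsilon^2)$ into the \emph{larger} bound $O(\frac{n^3}{\epsilon^2}\log(nW))$ --- is backwards; sparsification shrinks the support.

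Second, ``a negligible perturbation suffices'' when $Q$ is singular is exactly the hard part, not a remark. In the paper's maximization argument, one must show the oracle's output $q$ still satisfies $p(q)p(q)^T\bullet Y'+u\ge 1-O(\eps_s)$ after perturbing $D\to D(\delta)$, which requires explicitly bounding the error term $\delta L^T\bar IL\bullet Y$ in terms of the iterate $X(t)$ via a block-inversion computation, and choosing $\delta$ as in \raf{eeee2-}. Your proposal has no analogue of this. In the minimization case, the issue is worse: the verifier may return $q\notin\im(Q)$, for which $p(q)$ is undefined; the paper avoids this by perturbing $Y$ to $Y+\gamma\sum_iv_iv_i^T$ with $v_i$ a basis of $\nll(Q)$ and $\gamma$ chosen from the bit-length to force $q\in\cS'=\cS\cap\im(Q)$.

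Third, for the minimization case you never verify assumption (B-II). Your $\bar A_q=\diag(Q^{-1/2}qq^\top Q^{-1/2}/\alpha,\,1)$ has rank at most $2$, so $\lambda_{\min}(\bar A_q)=0$ for $n\ge 2$, and the iteration bound in Claim~\ref{cl10-mnmxII} degenerates. The paper handles this by replacing $p(q)p(q)^T$ with $\tilde A_q:=p(q)p(q)^T+\frac{\epsilon}nI$ and separately verifying this perturbation changes the objective value by only $O(\epsilon)$; your proof needs the analogous step before the $\psi$-bound and the $\log(n/\epsilon)$ claim can be justified.
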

\begin{proof}
	Let us first consider the maximization problem and the corresponding covering SDP~\raf{cvx}. We can write \raf{cvx}-\raf{cvx-dual} in the form of \raf{coverI}-\raf{packI}, where the set of constraints $[m]$ corresponds to $\cS$, by setting \begin{align}\label{eeeee0}
	A_q&:=\left[\begin{array}{ll}qq^T & 0\\ 0 &1\end{array}\right],\qquad C:=\left[\begin{array}{ll}\alpha Q&0\\0&1\end{array}\right], \qquad X:=\left[\begin{array}{ll}Y&0\\0&u\end{array}\right].
	\end{align}
	Let us fix any linearly independent subset $\cS'\subseteq\cS$ of $\cS$ of size $n$. Write $\bar A:=\sum_{q\in\cS'}qq^T$. Then for any  $Y\succeq0$, feasible for \raf{cvx-dual}, we have
	$
	I\bullet Y\le \frac{\bar A\bullet Y}{\lambda_{\min}(B)}\le\frac{n}{\lambda_{\min}(\bar A)}.
	$
	To arrive at a bound $\tau$ as in Assumption~(B-I), we need to lower-bound $\lambda_{\min}(\bar A)$. 
	Let $\cL'$ be the total bit length needed to describe $\cS'$. Then we have the following bound.
	\begin{claim}\label{cl-cvx-I}
		$\lambda_{\min}(\bar A)\ge \gamma:=2^{-2\cL'-1}$.
	\end{claim}
	\begin{proof}
		Equivalently, we need to show that $\sum_{q\in\cS'} (q^Tv)^2+v_0^2\ge \gamma$, for any unit vector $(v,v_0)\in\RR^{n+1}$. Suppose for the sake of contradiction that $|v_0|<\sqrt{\gamma}$ and $|q^Tv|<\sqrt{\gamma}$ for all $q\in \cS$. Let $H\in\RR^{n\times n}$ be the matrix whose columns are the vectors $q\in\cS'$ and $h\in\RR^n$ be a  vector with component $q^Tv$ in the position corresponding to $q\in\cS'$. Then the linear system $Hx=h$ has a unique solution $x=v=H^{-1}h$ such that each component is bounded in absolute value from above by $2^{\cL'}\sqrt{\gamma}$ (see. e.g., \cite[chapter 1]{GLS88}).  Since $(v,v_0)$ is a unit vector, it follows that 
		$$
		1=\|v\|^2+v_0^2<2^{2\cL'}\gamma+\gamma <1,
		$$
		a contradiction.
	\end{proof}
	From Claim~\ref{cl-cvx-I}, we know that assumption~(B-I) is satisfied with $\tau:=2^{2\cL'+1}n+1$, where $\cL'\le n^2 \log (W+1)$.
	Let $\alpha Q=L^TD L$ be the LDL-decomposition of $\alpha Q$ and write $U:=L^{-1}$. 
	By the reduction in Appendix~\ref{normal-I}, 
	we can use $\alpha Q(\delta)=L^TD(\delta)L=\alpha Q+\delta L^T\bar I L$, where $D(\delta)=D+\delta L^T\bar I L$ and $\delta\le\frac{\epsilon}{\tau I\bullet L^TL}$ (as $z_I^*=1$), instead of $\alpha Q$ without changing the objective value by a factor more than $(1+\epsilon)$ (if $Q$ is nonsingular, then we set $\delta=0$). (Recall that $\bar I$ is the $0/1$-diagonal matrix with ones only in the entries corresponding to the {\it zero} diagonal entries of the diagonal matrix $D$, and note that the matrix $L$ is {\it independent} of $\delta$.) For $q\in\cS$, let $p(q):=D(\delta)^{-1/2}U^Tq$. Using the transformation of variables $Y':= D(\delta)^{1/2}LYL^TD(\delta)^{1/2}$, we get $\alpha Q(\delta)\bullet Y=I\bullet Y'$ and $qq^T\bullet Y=p(q)p(q)^T\bullet Y'.$ Hence, we obtain a normalized form of (an approximate version of) \raf{cvx}-\raf{cvx-dual}, where $q\in\cS$ is replaced by $p(q)$. In view of Remark~\ref{r2}, it is enough to show that in each iteration $t$ of Algorithm~\ref{log-pack-alg}, we can find efficiently a $q\in\cS$ such that $p(q)p(q)^T\bullet Y'+u\ge 1-O(\eps_s)$ for given $Y'=Y'(t)\succeq 0$ and $u=u(t)\ge 0$ such that $\Tr(Y')+u\in (1-\eps_s,1)$ (by Claim~\ref{cl0--mnmxI}, where $X(t):=\left(\begin{array}{ll}Y'&0\\0&u\end{array}\right)$ in step~\ref{s3.-mnmxI} of the algorithm). To do this, let $Y:=UD(\delta)^{-1/2}Y'D(\delta)^{-1/2}U^T$ and call the integrality gap verifier $\cA$ on $(Y,Q)$ to get a vector $q\in\cS$ such that $qq^T\bullet Y\ge\alpha Q\bullet Y$. Then
	\begin{align}\label{e--1}
	p(q)p(q)^T\bullet Y'+u&=qq^T\bullet Y+u\ge\alpha Q\bullet Y+u=\alpha Q(\delta)\bullet Y+u-\delta L^T\bar IL\bullet Y=I\bullet Y'+u-\delta L^T\bar IL\bullet Y.
	\end{align}\label{e--2}
	We bound the ``error term" $\delta L^T\bar IL\bullet Y$ using the definition of $Y'=Y'(t):=\frac{\bareps\theta(t)}{n+1}\big(\sum_{q\in\cS}\lambda_q(t)p(q)p(q)^T-\theta(t) I\big)^{-1}$ in step~\ref{s3.-mnmxI} of the algorithm as follows:
	\begin{align}
	\delta L^T\bar IL\bullet Y&=\delta L^T\bar IL \bullet UD(\delta)^{-1/2}Y'D(\delta)^{-1/2}U^T=\delta \bar I\bullet D(\delta)^{-1/2} Y'D(\delta)^{-1/2}=\delta D(\delta)^{-1/2} \bar ID(\delta)^{-1/2}\bullet Y'=\bar I\bullet Y'\nonumber \\
	&=\frac{\bareps\theta(t)}{n+1}\bar I\bullet\big(\sum_{q\in\cS}\lambda_q(t)p(q)p(q)^T-\theta(t) I\big)^{-1}=\frac{\bareps\theta(t)}{n+1}\bar I\bullet\big(D(\delta)^{-1/2}H(t)D(\delta)^{-1/2}-\theta(t)I\big)^{-1},\label{eee2}
	\end{align}
	where, for brevity, we write $H=H(t):=\sum_{q\in\cS}\lambda_q(t)U^Tqq^TU$. To bound \raf{eee2}, we need to compute the submatrix of $\big(D(\delta)^{-1/2}H(t)D(\delta)^{-1/2}-\theta(t)I\big)^{-1}$ corresponding to the non-zeros of $\bar I$. Let the corresponding decompositions of the matrices $D(\delta)$ and $G(t):=D(\delta)^{-1/2}H(t)D(\delta)^{-1/2}-\theta(t)I$ be as follows:
	\begin{align}\label{eee0}
	D(\delta)=\left(\begin{array}{ll}D'&0\\0&\delta\bar I\end{array}\right), \qquad H(t)=\left(\begin{array}{ll}H_1&H_2\\H_2^T&H_3\end{array}\right), \qquad G(t)=\left(\begin{array}{ll}G_1&G_2\\G_2^T&G_3\end{array}\right)=\left(\begin{array}{ll}(D')^{-1/2}H_1(D')^{-1/2}-\theta(t)I&\frac{1}{\sqrt{\delta}}(D')^{-1/2}H_2\\ \frac{1}{\sqrt{\delta}} H_2^T(D')^{-1/2}&\frac1\delta H_3-\theta(t)I\end{array}\right),
	\end{align}
	where, for simplicity, we use $I$ to denote the identity matrix of the proper dimension, according to the context. As $G(t)\succ 0$, we have 
	\begin{align}\label{eeee1-}
	\theta(t)\le\lambda_{\min}\big((D')^{-1/2}H_1(D')^{-1/2}\big), \text{ and } M:=H_3-H_2^T(D')^{-1}H_1(D')^{-1}H_2\succ 0.
	\end{align}
	Using the block inversion formula: 
	\begin{align}
	\bar I\bullet G(t)^{-1}&=I\bullet\big(G_3-G_2^TG_1G_2\big)^{-1}= I\bullet\Big(\frac1\delta H_3-\theta(t)I-\frac1\delta H_2^T(D')^{-1/2}\big((D')^{-1/2}H_1(D')^{-1/2}-\theta(t)I\big)(D')^{-1/2}H_2 \Big)^{-1}\nonumber\\ \label{eee3}
	&=\delta I\bullet\Big(H_3-\delta\theta(t)I- H_2^T(D')^{-1/2}\big((D')^{-1/2}H_1(D')^{-1/2}-\theta(t)I\big)(D')^{-1/2}H_2\Big)^{-1},
	\end{align}
and writing $\bar M:=H_3- H_2^T(D')^{-1/2}\big((D')^{-1/2}H_1(D')^{-1/2}-\theta(t)I\big)(D')^{-1/2}H_2$, we get by~\raf{eee3},
	\begin{align}\label{eeee3-}
	\bar I\bullet G(t)^{-1}&=\sum_{j=1}^n\frac{\delta}{\lambda_j(\bar M)-\delta\theta(t)} \le \frac{\delta n}{\lambda_{\min}(\bar M)-\delta\theta(t)}.
	\end{align}
	Note that $\bar M=M+\theta(t)H_2^T(D')^{-1}H_1(D')^{-1}H_2\succeq M\succ 0$ by \raf{eeee1-}, and that $M$, $D'$ and $H_1$ are independent of $\delta$. It follows that, if we set 
	\begin{align}\label{eeee2-}
	\delta:=\min\left\{\frac{\epsilon}{\tau I\bullet L^TL},\frac{\lambda_{\min}(M)}{2\lambda_{\min}((D')^{-1/2}H_1(D')^{-1/2})}\right\},
	\end{align}
	then by \raf{eee2}, \raf{eeee1-} and \raf{eeee3-}, 
	\begin{align}\label{eeee4-}
	\delta L^T\bar IL\bullet Y&\le\frac{\bareps\theta(t)}{n+1}\bar I\bullet G(t)^{-1}\le \frac{\bareps\theta(t)}{n+1}\cdot\frac{\delta n}{\lambda_{\min}(M)-\delta\theta(t)}\le\frac{\bareps n}{n+1}<\bareps.
	\end{align}
	Using \raf{eee2}, \raf{eeee4-} and $I\bullet Y'+u\ge 1-\bareps$, we get the desired inequality. Let $\cX\subseteq\cS$ be the set of vectors $q\in\cS'$ such that $\lambda_q>0$ when the algorithm terminates. Since each iteration of the algorithm adds at most one element to $\cX$, we have by Claim~\ref{cl10-mnmxII} that $|\cX|=O\big(n\log \psi+\frac{n}{\epsilon^2}\big)$, where we set $r=n$, $\bar A:=\sum_{i=1}^rA_{i}\succ 0$, and use the set of matrices $\{p(q)p(q)^T:q\in\cS'\}$ for $A_1,\ldots,A_r$ in assumption (B-I), where $\cS'\subseteq\cS$ is a linearly independent subset of $\cS$. We bound  $\psi$ in the same way as in the proof of Claim~\ref{cl10-mnmxI}:
	\begin{align*}
	\psi&\le\frac{\max_{q\in\cS}Y'(0)\bullet p(q)p(q)^T+u(0)}{Y'(0)\bullet \frac1n\sum_{q\in\cS'}p(q)p(q)^T+u(0)}=\frac{\max_{q\in\cS}Y(0)\bullet qq^T+u(0)}{Y(0)\bullet \frac1n\sum_{q\in\cS'}qq^T+u(0)}\le \frac{n\cdot\max_{q\in\cS}\|q\|^2}{\lambda_{\min}(\sum_{q\in\cS'}qq^T)}\le n^2W^2\big(2^{2\cL'+1}n+1\big)=n^3W^{O(n^2)}.
	\end{align*} 
	It follows that $|\cX|=O(\frac{n^3}{\epsilon^2}\log (nW))$ (which is also a bound on the number of iterations of the algorithm).
	Moreover, by Remark~\ref{r2}, we have $\sum_{q\in\cX}\lambda_q\le 1+O(\epsilon)$. Thus scaling each $\lambda_q$ by $\sum_{q'\in\cX}\lambda_{q'}$ yields the sought convex combination satisfying the first inequality in~\raf{cvx-comb-}. 
	\bigskip
	
	Now consider the minimization problem. (In in this part of the proof, we do not require $\cS$ to be full-dimensional.) We can write \raf{cvx-}-\raf{cvx-dual-} in the form of \raf{packII}-\raf{coverII}, where the set of constraints $[m]$ corresponds to $\cS$, and where $A_q$, $C$ and $X$ are given by~\raf{eeeee0}. 
	By the reduction in Appendix~\ref{normal-II}, we can reduce~\raf{cvx-}-\raf{cvx-dual-} to normalized form without changing the value of the objective, but we need to show that each step of this reduction can be implemented in polynomial time. Consider assumption (C-II). Suppose that this assumption does not hold. Then there is an $x\in\RR^n$ such that $Qx=0$ and $q^Tx\ne 0$ for some $q\in\cS$, implying that $q\not\in\im(Q):=\{Qv~:~v\in\RR^n\}$. Conversely, if $q\not\in\im(Q)$, then (by Farkas' Lemma) there exists an $x\in\RR^n$ such that $Qx=0$ and $q^Tx\ne 0$. We conclude (by the same argument following assumption (C-II) in Appendix~\ref{normal-II}) that for $q\in\cS\setminus\im(Q)$,  the primal variable $\lambda_q=0$, and hence, we may replace $\cS$ by $\cS':=\cS\cap\im(Q)$ in \raf{cvx-dual-}. Let $\alpha Q=L^TD L$ be the LDL-decomposition of $\alpha Q$, and write $U=[U'~|~U'']:=L^{-1}$, where $U'$ is the submatrix of $U$ whose columns correspond to the columns of the submatrix $D'$ containing the {\it positive} diagonal entries of the diagonal matrix $D$.  Let $p(q):=(D')^{-1/2}(U')^Tq$, for $q\in\cS'$. Then~\raf{ecv0-} becomes equivalent to $\sum_{q\in\cS'}\lambda_q p(q)p(q)^T\preceq I$. 
	Next, we need to show that Assumption (B-II) can be made to hold in polynomial time. For our purposes, it is enough to show  a weaker version of this assumption, as we shall see below.
	We  begin by (implicitly) perturbing $p(q)p(q)^T$ into $\tilde A_q:=p(q)p(q)^T+\frac{\epsilon}nI$, for $q\in\cS'$. By the argument following Assumption~(B-II) in Appendix~\ref{normal-II}, $\frac1\beta\le z^*_{II}=1\le\frac n\beta$, where $\beta:=\min_{q\in\cS'}\|p(q)\|^2$, from which we obtain that $1\le \beta\le n$. Furthermore, by the same argument, the optimal value $\tilde z_{II}$ of the perturbed problem satisfies $1-2\epsilon\le\tilde z_{II}\le 1$.
	Then, in view of Remark~\ref{r3}, it is enough to show that in each iteration $t$ of Algorithm~\ref{log-cover-alg}, we can find efficiently a $q\in\cS'$ such that $\tilde A_q\bullet Y'+u\le 1+O(\eps_s)$ for given $Y'=Y'(t)\succeq 0$ and $u=u(t)\ge 0$ such that $\Tr(Y')+u\in (1-\eps_s,1)$ (by Claim~\ref{cl0--mnmxII}, where $X(t):=\left(\begin{array}{ll}Y'&0\\0&u\end{array}\right)$ in step~\ref{s3.-mnmxII} of the algorithm). To do this, let $\cL'$ be the total bit length needed to describe $Q$ and $\{v_1,\ldots,v_k\}$ be a basis of $\nll(Q):=\{x\in\RR^n:Qx=0\}$. Note that, for each $i\in[k]$, each nonzero component of $v_i$ is bounded in absolute value from below by $2^{-\cL'}$ (see. e.g., \cite[chapter 1]{GLS88}). Given $Y'\succeq 0$ and $u\ge 0$, we apply $\cA$ to $(Y,Q)$, where $Y:=U'(D')^{-\frac12}Y'(D')^{-\frac12}(U')^T+\gamma\sum_{i=1}^kv_iv_i^T$ and  $\gamma:=2^{2\cL'}\alpha Q\bullet Y+1=2^{2\cL'}\alpha Q\bullet U'(D')^{-\frac12}Y'(D')^{-\frac12}(U')^T+1$, to get a $q\in\cS$ such that $qq^T\bullet Y\le\alpha Q\bullet Y$. We claim that $q\in\cS'$. For this, it is enough to show that $q^Tv_i=0$, for all $i\in[k]$. Suppose $v_i^Tq\ne0$ for some $i\in[k]$. Then $|v_i^Tq|\ge 2^{-L}$, implying that 
	$$qq^T\bullet Y= qq^T\bullet U'(D')^{-\frac12}Y'(D')^{-\frac12}(U')^T +\gamma\sum_{i=1}^k(q^Tv_i)^2\ge (2^{2\cL'}\alpha Q\bullet Y+1)2^{-2\cL'}>\alpha Q\bullet Y,$$
	a contradiction. We conclude that $q\in\cS'$, and moreover that $p(q)p(q)^T\bullet Y'=qq^T\bullet Y\le \alpha Q\bullet Y=(L')^TD'L'\bullet Y=I\bullet Y'\leq1-u$.
	Then, $\tilde A_q\bullet Y'+u\le 1+\frac{\epsilon}nI\bullet Y'<1+\bareps$, as required. 
	To bound the number of iterations of the algorithm, we need to specify which $q'$ is used initially.  This is done as follows. We start the algorithm by setting $Y'=I$ and applying the integrality gap verifier $\cA$ to $(Y,Q)$, as above, to obtain a $q'\in\cS'$ such that 
	\begin{align}\label{eee5-}
	\|p(q')\|^2&=p(q')p(q')^T\bullet Y'=q'q'^T\bullet Y\le\alpha Q\bullet Y=\alpha Q\bullet U'(D')^{-1}(U')^T\nonumber\\
	&=L^T\left[\begin{array}{ll} D'&0\\0&0\end{array}\right]L\bullet U\left[\begin{array}{ll} (D')^{-1}&0\\0&0\end{array}\right]U^T=\left[\begin{array}{ll} D'&0\\0&0\end{array}\right]\bullet \left[\begin{array}{ll} (D')^{-1}&0\\0&0\end{array}\right]\le n.
	\end{align}
	Let $\cX\subseteq\cS$ be the set of vectors $q\in\cS$ such that $\lambda_q>0$ when the algorithm terminates. Since each iteration of the algorithm adds at most one element to $\cX$, we have by Claim~\ref{cl10-mnmxI} that $|\cX|=O\big(n\log \frac1\psi+\frac{n}{\epsilon^2}\big)$, where
	\begin{align*}
	\psi&=
	\frac{\lambda_{\min}(\widetilde A_{q(0)})}{\lambda_{\max}(\widetilde A_{q'})}
	\ge\frac{\epsilon/n}{n+\epsilon/n}\ge\frac{\epsilon}{2n^2}.
	\end{align*} 
	It follows that $|\cX|=O(n\log \frac{n}{\epsilon}+\frac{n}{\epsilon^2})$.  
	Moreover, by Remark~\ref{r3}, we have $\sum_{q\in\cX}\lambda_q\ge 1-O(\epsilon)$. Thus scaling each $\lambda_q$ by $\sum_{q'\in\cX}\lambda_{q'}$ yields the sought convex combination satisfying the second inequality in~\raf{cvx-comb-}.
\end{proof}

Note that, once we have a set $\cX$ as in Theorem~\ref{CV-SDP-approx}, its support can be reduced to $O(\frac{n^2}{\epsilon})$ using the sparsification techniques of \cite{BSS14,SHS16}. Applications of the Carr-Vempala type decomposition for SDPs in  robust {\it discrete} optimization can be found in \cite{EMO18}.

%
%
%

\appendix

\section{A Matrix MWU Algorithm for~\raf{packII}-\raf{coverII}}\label{MMWU-covering}
Given positive semidefinite matrices $A_1,\ldots,A_m \in \mathbb{S}_+^n$, we consider the dual packing-covering pair ~\raf{npackII}-\raf{ncoverII}. 
Here is a matrix MWU algorithm.

\begin{algorithm}[H]
	\SetAlgoLined
	$t \gets 0$; $y(0)\gets 0$; $X(0)\gets 0$; $M(0)\gets0$; $T \gets \epsilon^{-2}\ln n$\\	
	\While{$M(t) < T$}{\label{s-term-pII}
		$P(t) =  (1+\epsilon)^{\sum_{i=1}^my_i(t)A_i }$ /* Update the weight matrix by exponentiation */\\
		$i(t) \gets\argmin_{i} A_i\bullet X(t)$ \label{s1-pII}\\ 
		$\delta(t) \gets 1/\lambda_{\max}(A_{i(t)})$ \label{s-step-pII} /* Define the update step size */\\
		$X(t+1)\gets X(t)+\frac{\delta(t)P(t)}{I\bullet X(t)}$; $y(t+1) \gets y(t) + \delta(t)\b1_{i(t)}$ 	/* Update the primal-dual solution */\\
		$M(t+1) \gets \lambda_{\max}(\sum_iy_i(t)A_i)$ /* Compute the largest eigenvalue of LHS of dual */\\
		$t \gets t+1$
	}
	$L(t)\gets\min_{i}A_i\bullet X(t)$ \\
	Output $(\hat X,\hat y)=\left(\frac{X(t)}{L(t)},\frac{y(t)}{M(t)}\right)$
	\caption{Matrix MWU Algorithm for~\raf{packII}-\raf{coverII}}\label{MWU-cover-alg}
\end{algorithm}


\bigskip

\subsection{Analysis}
Let $F(t):=\sum_{i=1}^my_i(t)A_i$. 

\subsubsection{Number of Iterations}
\begin{claim}\label{cl1-pII}
	The algorithm terminates in at most $nT$ iterations. Note that by Claim~\ref{cl9-pII} below, $L(t_f)>0$.
\end{claim}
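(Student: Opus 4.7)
The plan is to track the trace of $F(t) := \sum_{i=1}^m y_i(t)A_i$ and show that it grows by at least one unit per iteration. I would then invoke the elementary PSD bound $\lambda_{\max}(F) \geq \Tr(F)/n$ to turn this into a linear lower bound on $M(t)$, after which the termination condition $M(t) \geq T$ immediately forces $t \le nT$.

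Concretely, I would first unroll the update $y(t+1) = y(t) + \delta(t)\b1_{i(t)}$ to write $F(t+1) = F(t) + \delta(t)A_{i(t)}$. Because $A_{i(t)} \in \SS_+^n$, one has $\Tr(A_{i(t)}) \geq \lambda_{\max}(A_{i(t)})$, and combined with the choice $\delta(t) = 1/\lambda_{\max}(A_{i(t)})$, this gives
$$\Tr(F(t+1)) - \Tr(F(t)) = \delta(t)\Tr(A_{i(t)}) \geq \delta(t)\lambda_{\max}(A_{i(t)}) = 1.$$
Telescoping from $F(0) = 0$ yields $\Tr(F(t)) \geq t$. Since $F(t) \succeq 0$, all of its eigenvalues are nonnegative, so $M(t) = \lambda_{\max}(F(t)) \geq \Tr(F(t))/n \geq t/n$. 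Hence if the loop is still active after $t$ iterations, we must have $t/n \leq M(t) < T$, i.e., $t < nT$.

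There is essentially no obstacle for this particular claim—it rests entirely on two elementary facts about PSD matrices (that trace upper-bounds the largest eigenvalue and lower-bounds it up to the factor $n$). The more delicate issue of whether the normalization $\hat X = X(t_f)/L(t_f)$ in the output is well-defined, i.e.\ whether $L(t_f) > 0$, is forward-referenced to Claim~\ref{cl9-pII} and is not needed here.
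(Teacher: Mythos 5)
Your proof is correct and follows essentially the same route as the paper's: both track $\Tr(F(t))$, show it increases by at least one per iteration using $\Tr(A_{i(t)})\ge\lambda_{\max}(A_{i(t)})$ and the choice $\delta(t)=1/\lambda_{\max}(A_{i(t)})$, and then apply $\lambda_{\max}(F)\ge\Tr(F)/n$ to conclude from the termination condition.
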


\begin{proof}
	Note that $\sum_{j=1}^n\lambda_j(F(t))=I\bullet F(t)$. Then
	\begin{align*}
	\sum_{j=1}^n\lambda_j(F(t+1))-\sum_{j=1}^n\lambda_j(F(t))&=I\bullet F(t+1)-I\bullet F(t)=\delta(t)I\bullet A_{i(t)}\\
	&=I\bullet \frac{A_{i(t)}}{\lambda_{\max}(A_{i(t)})}=\frac{\Tr(A_{i(t)})}{\lambda_{\max}(A_{i(t)})}=\frac{\sum_j\lambda_j(A_{i(t)})}{\lambda_{\max}(A_{i(t)})}\ge 1.
	\end{align*}
	It follows that $\sum_{j=1}^n\lambda_j(F(nT))\ge nT$ and thus 
	$$
	\lambda_{\max}(F(nT))\ge \frac{\sum_{j=1}^n\lambda_j(F(nT))}{n}\ge T.
	$$
	The claim follows by the termination condition in step~\ref{s-term-pII}.
\end{proof}

Let $t_f$ be the value of $t$ when the algorithm terminates.

\subsubsection{Primal Dual Feasibility and Approximate Optimality  }
\begin{claim}\label{cl2-pII}
	(Primal and dual feasibility:) \ \ \  $A_i\bullet \hat X\ge 1~\forall i\in[m]$, $\hat{X}\succeq 0$,  and $\sum_{i=1}^m\hat y_iA_i\preceq I$.
\end{claim}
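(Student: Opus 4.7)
The plan is to prove the three assertions separately, each by unwinding the definitions of $\hat X$, $\hat y$, $L(t_f)$, and $M(t_f)$, so there is no essential difficulty once the book-keeping is straight.

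First, I would establish that $\hat X \succeq 0$. The iterate $X(t)$ is updated as $X(t{+}1)=X(t)+\frac{\delta(t)P(t)}{I\bullet X(t)}$ starting from $X(0)=0$, and the weight matrix $P(t)=(1{+}\epsilon)^{F(t)}$ is the matrix exponential (in base $1{+}\epsilon$) of the positive semidefinite matrix $F(t)=\sum_i y_i(t)A_i$, hence $P(t)\succ 0$. The step size $\delta(t)=1/\lambda_{\max}(A_{i(t)})$ is strictly positive, and $I\bullet X(t)>0$ for $t\ge 1$ by the previous update, so a simple induction on $t$ shows that $X(t)$ is a nonnegative combination of positive semidefinite matrices, hence $X(t)\succeq 0$. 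Since Claim~\ref{cl9-pII} guarantees $L(t_f)>0$, dividing by the positive scalar $L(t_f)$ preserves positive semidefiniteness, so $\hat X=X(t_f)/L(t_f)\succeq 0$.

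Second, I would verify the primal constraints $A_i\bullet \hat X\ge 1$ for every $i\in[m]$. This is immediate from the definition $L(t_f)=\min_{i\in[m]}A_i\bullet X(t_f)$: for every $i$ we have $A_i\bullet X(t_f)\ge L(t_f)$, hence
\[
A_i\bullet \hat X \;=\; \frac{A_i\bullet X(t_f)}{L(t_f)} \;\ge\; 1.
\]

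Third, I would verify the dual constraint $\sum_{i=1}^m\hat y_iA_i\preceq I$. By definition, $\sum_i \hat y_i A_i = F(t_f)/M(t_f)$, where $M(t_f)=\lambda_{\max}(F(t_f))$. Thus every eigenvalue of $\sum_i\hat y_iA_i$ is at most $M(t_f)/M(t_f)=1$, which is precisely the statement $\sum_i\hat y_iA_i\preceq I$. The only nontrivial ingredient across all three parts is the positivity of $L(t_f)$ (used in part one and implicit in part two to make $\hat X$ well-defined), which is supplied by Claim~\ref{cl9-pII}; everything else reduces to the normalization step in the algorithm's output, so no real obstacle is expected.
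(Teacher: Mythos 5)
Your proof is correct and follows essentially the same route as the paper: unwind the definitions of $L(t_f)$ and $M(t_f)$ for the primal and dual constraints, and observe that $\hat X$ is a nonnegative combination of positive (semi)definite weight matrices $P(t)$ rescaled by the positive scalar $L(t_f)$. One very minor note: when invoking positivity of the denominator in the update step, the quantity is in fact $I\bullet P(t)=\Tr\big((1+\epsilon)^{F(t)}\big)\ge n>0$ for all $t\ge 0$ including $t=0$ (the algorithm listing's $I\bullet X(t)$ appears to be a typo, as confirmed by the paper's proof of Claim~\ref{cl3-pII}), so the induction needs no special handling of the base case.
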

\begin{proof}
	For any $i\in[m]$, we have 
	\begin{align*}
	A_i\bullet\hat X&=\frac{A_i\bullet X(t_f)}{L(t_f)}=\frac{A_i\bullet X(t_f)}{\min_iA_i\bullet X(t_f)}\ge 1.
	\end{align*}
	Also, $\hat X(t)=\frac{1}{L(t_f)}\sum_{t=0}^{t_f-1}\frac{\delta(t)X(t)}{I\bullet X(t)}\succeq 0$, since $X(t)\succeq 0$. Thus the primal is feasible. To see dual-feasibility, note that
	\begin{align*}
	\sum_{i=1}^m\hat y_iA_i=\frac{F(t_f)}{M(t_f)}=\frac{F(t_f)}{\lambda_{\max}(F(t_f))}.
	\end{align*}
	Thus, $\lambda_{\max}(F(t_f))=1$, implying that $\sum_{i=1}^m\hat y_iA_i\preceq I$. 
	
\end{proof}

\begin{claim}\label{cl3-pII}
	$L(t)\ge\sum_{t'=0}^{t-1}\frac{\delta(t')P(t')\bullet A_{i(t')}}{I\bullet P(t')}$.
	
\end{claim}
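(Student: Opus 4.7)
My approach is to unroll the primal update rule into a telescoped sum, take the Frobenius inner product with an arbitrary $A_i$, and then apply the elementary inequality $\min_i \sum_{t'} c_{i,t'} \ge \sum_{t'} \min_i c_{i,t'}$ together with the minimality of $i(t')$. Nothing beyond basic linear algebra is required; the content is purely algebraic bookkeeping.

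First, I would iterate the primal update. Reading the update so that its denominator matches the one appearing in the statement of the claim (the literal formula in the algorithm has $I\bullet X(t)$ in the denominator, which is $0$ at $t=0$, so the intended denominator is clearly $I\bullet P(t)$), we get
\[
X(t) \;=\; \sum_{t'=0}^{t-1} \frac{\delta(t')\,P(t')}{I \bullet P(t')}, \qquad X(0)=0.
\]
Taking $A_i\bullet\cdot$ on both sides, which is legal because the Frobenius inner product is linear, yields, for every $i\in[m]$,
\[
A_i \bullet X(t) \;=\; \sum_{t'=0}^{t-1} \frac{\delta(t')\,A_i \bullet P(t')}{I \bullet P(t')}.
\]

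Second, at each step $t'$ the minimization oracle (step~\ref{s1-pII}) picks $i(t')$ so that $A_{i(t')}\bullet P(t')\le A_i\bullet P(t')$ for all $i\in[m]$ (this is the natural reading consistent with the matrix MWU framework, where the oracle is guided by the current weight matrix $P(t')$). Since $I\bullet P(t')>0$ and $\delta(t')\ge 0$, substituting this termwise gives
\[
A_i \bullet X(t) \;\ge\; \sum_{t'=0}^{t-1} \frac{\delta(t')\,A_{i(t')} \bullet P(t')}{I \bullet P(t')}
\]
for every $i$. Now take the minimum over $i$ on the left, using $L(t)=\min_i A_i\bullet X(t)$, to obtain precisely the claimed inequality.

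The only non-mechanical point is reconciling the exact form of the update with the denominator appearing in the claim; once that alignment is made, the proof is a one-liner consisting of telescoping and the ``min-of-sums dominates sum-of-mins'' step. I do not foresee any genuine obstacle.
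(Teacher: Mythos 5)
Your proof is correct and takes essentially the same route as the paper's: unroll the primal update so that $X(t)=\sum_{t'<t}\delta(t')P(t')/(I\bullet P(t'))$, pair with $A_i$ by linearity, apply the oracle's minimality termwise at each $t'$, and finally take $\min_i$ of the left-hand side. You are also right that you must read past two apparent typos in Algorithm~\ref{MWU-cover-alg} (the denominator $I\bullet X(t)$ should be $I\bullet P(t)$, and the oracle must select $i(t)$ minimizing $A_i\bullet P(t)$, not $A_i\bullet X(t)$, for the termwise inequality to hold); the paper's proof makes the same two readings implicitly without flagging them.
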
	
\begin{proof}
	For any $i\in[m]$, we have for all $t'$
	\begin{align*}
	A_i\bullet X(t')&=A_i\bullet \frac{\delta(t')P(t')}{I\bullet P(t')}=\frac{\delta(t')A_i\bullet P(t')}{I\bullet P(t')}\ge \frac{\delta(t')A_{i(t')}\bullet P(t')}{I\bullet P(t')}\tag{by the definition of $i(t)$ in step~\ref{s1-pII}.}
	\end{align*}
	Summing the above inequality over all $t'<t$, we get the claim.
\end{proof}

\begin{claim}\label{cl4-pII}
	$$
	I\bullet P(t+1)\le I\bullet (1+\epsilon)^{F(t)}(1+\epsilon)^{\delta(t)A_{i(t)}}.
	$$
\end{claim}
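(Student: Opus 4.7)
The update rule in the algorithm gives $F(t+1)=\sum_{i=1}^m y_i(t+1)A_i=F(t)+\delta(t)A_{i(t)}$, so by the definition of $P$ we have
\begin{align*}
P(t+1)=(1+\epsilon)^{F(t)+\delta(t)A_{i(t)}}=\exp\!\big(\ln(1+\epsilon)\cdot(F(t)+\delta(t)A_{i(t)})\big).
\end{align*}
The key obstacle is that $F(t)$ and $A_{i(t)}$ are symmetric PSD matrices that need not commute, so in general $(1+\epsilon)^{F(t)+\delta(t)A_{i(t)}}\neq(1+\epsilon)^{F(t)}(1+\epsilon)^{\delta(t)A_{i(t)}}$; hence we cannot factor the exponential as scalar exponentials. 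The standard remedy is the Golden--Thompson inequality, which states that for any symmetric matrices $U,V\in\SS^n$,
\begin{align*}
\Tr(e^{U+V})\leq \Tr(e^U e^V).
\end{align*}

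The plan is therefore to set $U:=\ln(1+\epsilon)\cdot F(t)$ and $V:=\ln(1+\epsilon)\cdot\delta(t)A_{i(t)}$, both symmetric since $F(t)$ and $A_{i(t)}$ are, and to apply Golden--Thompson to obtain
\begin{align*}
I\bullet P(t+1)=\Tr(e^{U+V})\leq\Tr(e^U e^V)=\Tr\!\big((1+\epsilon)^{F(t)}(1+\epsilon)^{\delta(t)A_{i(t)}}\big)=I\bullet(1+\epsilon)^{F(t)}(1+\epsilon)^{\delta(t)A_{i(t)}},
\end{align*}
which is exactly the desired bound. No additional ingredients are required; once Golden--Thompson is invoked, the identification of the matrix exponential with the matrix power $(1+\epsilon)^{(\cdot)}=\exp(\ln(1+\epsilon)\cdot(\cdot))$ and the trace--inner product identity $\Tr(M)=I\bullet M$ give the claim in a couple of lines.
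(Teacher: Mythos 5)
Your proof is correct and follows essentially the same approach as the paper: both invoke the Golden--Thompson inequality $\Tr(e^{U+V})\leq\Tr(e^Ue^V)$ with $U=\ln(1+\epsilon)F(t)$ and $V=\ln(1+\epsilon)\delta(t)A_{i(t)}$, then convert back via $(1+\epsilon)^{(\cdot)}=\exp(\ln(1+\epsilon)\cdot(\cdot))$ and $\Tr(M)=I\bullet M$. The only difference is that you add a helpful remark about non-commutativity explaining why Golden--Thompson is needed, which the paper leaves implicit.
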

\begin{proof}
	We will use the Golden–-Thompson inequality (see, e.g., \cite{T15}): for any two symmetric matrices $B$ and $C$:
	$$
	\Tr(e^{B+C})\le\Tr(e^Be^C).
	$$
	Now, 
	\begin{align*}
	I\bullet P(t+1)&=\Tr\left(e^{\ln(1+\epsilon)(F(t)+\delta(t)A_{i(t)})}\right)\le \Tr\left((1+\epsilon)^{F(t)}(1+\epsilon)^{\delta(t)A_{i(t)}}\right) \tag{by the Golden–-Thompson inequality}\\
	&=I\bullet (1+\epsilon)^{F(t)}(1+\epsilon)^{\delta(t)A_{i(t)}}.
	\end{align*}
\end{proof}

\begin{fact}\label{f1-pII}
	For $0\preceq B \preceq I$ and $\epsilon>0$, $$(1+\epsilon)^B\preceq I+\epsilon B.$$	
\end{fact}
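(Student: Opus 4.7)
The plan is a straightforward reduction from a matrix inequality to its scalar analogue via simultaneous diagonalization. I would first dispatch the scalar version: for every $x \in [0,1]$ and $\epsilon > 0$, $(1+\epsilon)^x \le 1 + \epsilon x$. This follows from the convexity of the exponential $t \mapsto (1+\epsilon)^t$ on $[0,1]$: writing $x = (1-x)\cdot 0 + x\cdot 1$ and applying the definition of convexity between the endpoints $t=0$ and $t=1$ gives $(1+\epsilon)^x \le (1-x) + x(1+\epsilon) = 1 + \epsilon x$.

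Next I would invoke the spectral theorem. Since $B$ is real symmetric, write $B = U\Lambda U^T$ with $U$ orthogonal and $\Lambda = \diag(\lambda_1,\ldots,\lambda_n)$. The hypothesis $0 \preceq B \preceq I$ forces each $\lambda_i \in [0,1]$. Both $(1+\epsilon)^B$ and $I + \epsilon B$ are defined through the functional calculus applied to $B$, so they are simultaneously diagonalized in the basis $U$:
\[
(1+\epsilon)^B = U\,\diag\bigl((1+\epsilon)^{\lambda_1},\ldots,(1+\epsilon)^{\lambda_n}\bigr)\,U^T, \qquad I+\epsilon B = U\,\diag(1+\epsilon\lambda_1,\ldots,1+\epsilon\lambda_n)\,U^T.
\]
Subtracting, the difference $(I+\epsilon B) - (1+\epsilon)^B$ has $i$-th diagonal entry $1 + \epsilon\lambda_i - (1+\epsilon)^{\lambda_i} \ge 0$ by the scalar inequality of the previous paragraph, so the difference is positive semidefinite, which is precisely the claim.

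There is essentially no obstacle here; the proof is routine. The only point worth being careful about is that the two matrices being compared share a common eigenbasis (the eigenbasis of $B$), which is what permits the entrywise reduction to scalars. Note that this step would fail if $B$ were replaced by two \emph{different} matrices, since then the exponential and linear parts would not be jointly diagonalizable and one would need a Golden--Thompson type tool as in Claim~\ref{cl4-pII}.
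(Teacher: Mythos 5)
Your proof is correct and takes essentially the same route as the paper's: diagonalize $B$, observe that $(1+\epsilon)^B$ and $I+\epsilon B$ share the eigenbasis of $B$, and reduce to the scalar inequality $(1+\epsilon)^x\le 1+\epsilon x$ for $x\in[0,1]$. The only cosmetic difference is that you supply a convexity argument for the scalar inequality, whereas the paper simply cites it.
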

\begin{proof}
	Let $B=U^T\Lambda U$ be the eigen decomposition of $B$, where $\Lambda=\diag(\lambda_1,\ldots,\lambda_n)$. Then 
	\begin{align}
	(1+\epsilon)^B-(I+\epsilon B)&=U^T\diag\left((1+\epsilon)^{\lambda_1},\ldots,(1+\epsilon)^{\lambda_n}\right)U-U^T\diag\left(1+\epsilon\lambda_1,\ldots,1+\epsilon\lambda_n\right)U\nonumber\\
	&=U^T\diag\left((1+\epsilon)^{\lambda_1}-(1+\epsilon\lambda_1),\ldots,(1+\epsilon)^{\lambda_n}-(1+\epsilon\lambda_n)\right)U.\label{eee1}
	\end{align}
	Using the inequality: $(1+\epsilon)^x \leq 1+\epsilon x$, valid for for $x \in [0,1]$ and $\epsilon>0$, we obtain that $(1+\epsilon)^{\lambda_j}-(1+\epsilon\lambda_j)\le 0$ and the claim follows from~\raf{eee1}. 
\end{proof}
\begin{claim}\label{cl5-pII}
	$$(1+\epsilon)^{\delta(t)A_{i(t)}}\preceq I+\epsilon \delta(t)A_{i(t)}.$$
\end{claim}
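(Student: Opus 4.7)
The plan is to observe that this claim is an immediate instantiation of Fact~\ref{f1-pII} with $B := \delta(t)A_{i(t)}$. So the only thing to verify is that $0 \preceq \delta(t)A_{i(t)} \preceq I$.

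First, since $A_{i(t)} \in \SS_+^n$ and $\delta(t) = 1/\lambda_{\max}(A_{i(t)}) > 0$, we have $\delta(t)A_{i(t)} \succeq 0$ trivially. For the upper bound, note that scaling a PSD matrix by a positive scalar rescales its eigenvalues by the same factor, so the eigenvalues of $\delta(t)A_{i(t)}$ are $\lambda_j(A_{i(t)})/\lambda_{\max}(A_{i(t)}) \in [0,1]$ for $j\in[n]$. Hence $\lambda_{\max}(\delta(t)A_{i(t)}) = 1$, which gives $\delta(t)A_{i(t)} \preceq I$.

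Having established $0 \preceq \delta(t)A_{i(t)} \preceq I$, applying Fact~\ref{f1-pII} with $B := \delta(t)A_{i(t)}$ yields $(1+\epsilon)^{\delta(t)A_{i(t)}} \preceq I + \epsilon\,\delta(t)A_{i(t)}$, as claimed. There is no real obstacle here: the definition of $\delta(t)$ in step~\ref{s-step-pII} of Algorithm~\ref{MWU-cover-alg} was chosen precisely so that the spectrum of $\delta(t)A_{i(t)}$ lies in $[0,1]$, which is exactly the hypothesis needed to invoke the preceding fact.
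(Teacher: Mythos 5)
Your proof is correct and takes the same approach as the paper: both invoke Fact~\ref{f1-pII} with $B := \delta(t)A_{i(t)}$ and verify $0 \preceq B \preceq I$ from the definition of $\delta(t)$ in step~\ref{s-step-pII}. You simply spell out the eigenvalue rescaling argument in slightly more detail than the paper, which states it as immediate.
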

\begin{proof}
	The claim follows from Fact~\ref{f1-pII}, applied with $B:=\delta(t)A_{i(t)}$, which satisfies $0\preceq B \preceq I$ by the definition of $\delta(t)$ in step~\ref{s-step-pII} of the algorithm.
\end{proof}

\begin{fact}\label{f2-pII}
	For three symmetric matrices $B,C,D\in\RR^{n\times n}$ if $B\succeq 0$ and $C\preceq D$ then
	$$B\bullet C \le B\bullet D.$$
\end{fact}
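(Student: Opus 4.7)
The plan is to reduce the claim to the standard fact that the trace of a product of two positive semidefinite matrices is nonnegative. The bilinearity of the inner product $B \bullet (\cdot)$ lets us rewrite the inequality $B\bullet C\le B\bullet D$ as $B\bullet(D-C)\ge 0$, so the whole statement collapses to showing that $B\bullet E\ge 0$ whenever $B\succeq 0$ and $E:=D-C\succeq 0$. The hypothesis $C\preceq D$ is exactly what gives us $E\succeq 0$.

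First I would take the PSD square root $B^{1/2}$, which exists because $B\succeq 0$ and is itself symmetric and positive semidefinite. Using the cyclic property of the trace, I can rewrite
\[
B\bullet E \;=\; \Tr(BE) \;=\; \Tr\!\big(B^{1/2}B^{1/2}E\big) \;=\; \Tr\!\big(B^{1/2}E\,B^{1/2}\big).
\]
The matrix $B^{1/2}E\,B^{1/2}$ is symmetric, and for any $v\in\RR^n$ we have $v^T B^{1/2}E\,B^{1/2}v = (B^{1/2}v)^T E (B^{1/2}v)\ge 0$ since $E\succeq 0$. Hence $B^{1/2}E\,B^{1/2}\succeq 0$, so its trace (the sum of its nonnegative eigenvalues) is nonnegative. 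This yields $B\bullet E\ge 0$, and therefore $B\bullet C\le B\bullet D$.

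An equivalent route is spectral: diagonalize $B = U^T\Lambda U$ with $\Lambda=\diag(\lambda_1,\ldots,\lambda_n)$, $\lambda_j\ge 0$, and observe that
\[
B\bullet E = \sum_{j=1}^n \lambda_j\,(Uu_j)^T E (Uu_j),
\]
where $u_j$ is the $j$-th standard basis vector; each summand is a product of a nonnegative scalar with a nonnegative quadratic form in $E$, so the sum is nonnegative. Either version is a few lines of routine manipulation; there is no real obstacle here beyond correctly invoking existence of $B^{1/2}$ (or the spectral theorem) and the cyclicity of the trace. The fact will then be applied in the surrounding matrix-MWU analysis to bound quantities of the form $P(t)\bullet(1+\epsilon)^{\delta(t)A_{i(t)}}$ by $P(t)\bullet(I+\epsilon\delta(t)A_{i(t)})$ via Claim~\ref{cl5-pII}.
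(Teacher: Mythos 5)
Your proof is correct and takes essentially the same route as the paper, which simply observes that $B\bullet(D-C)\ge 0$ follows immediately from positive semidefiniteness of $B$ and $D-C$; you have just unpacked the standard fact that $\Tr(BE)\ge 0$ for PSD $B,E$ via the square-root/cyclicity argument. (One tiny notational slip: with the convention $B=U^T\Lambda U$ the quadratic forms should involve $U^Tu_j$ rather than $Uu_j$, but this does not affect the substance.)
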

\begin{proof}
	Immediate from $B\bullet(D-C)\ge 0$ which holds by the positive semidefiniteness of $B$ and $D-C$.
\end{proof}

\begin{claim}\label{cl6-pII}
	$$
	I\bullet P(t+1)\le I\bullet P(t)\left(1+\frac{\epsilon\delta(t)P(t)\bullet A_{i(t)}}{I\bullet P(t)}\right).
	$$
\end{claim}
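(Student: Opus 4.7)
The plan is to combine the three ingredients already established immediately before this claim: the Golden--Thompson upper bound from Claim~\ref{cl4-pII}, the matrix inequality from Claim~\ref{cl5-pII}, and the monotonicity of the Frobenius inner product under the L\"owner order from Fact~\ref{f2-pII}. The key observation is that $P(t)=(1+\epsilon)^{F(t)}$ is positive semidefinite because $F(t)\succeq 0$, so $P(t)$ can play the role of the PSD "test matrix" $B$ in Fact~\ref{f2-pII}.

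First I would start from Claim~\ref{cl4-pII} and rewrite $\Tr\bigl((1+\epsilon)^{F(t)}(1+\epsilon)^{\delta(t)A_{i(t)}}\bigr)=P(t)\bullet(1+\epsilon)^{\delta(t)A_{i(t)}}$, using the fact that for symmetric matrices $A,B$ one has $\Tr(AB)=A\bullet B$. Then applying Fact~\ref{f2-pII} with $B:=P(t)\succeq 0$, $C:=(1+\epsilon)^{\delta(t)A_{i(t)}}$, and $D:=I+\epsilon\delta(t)A_{i(t)}$, where the inequality $C\preceq D$ is exactly the content of Claim~\ref{cl5-pII}, yields
\[
I\bullet P(t+1)\;\le\;P(t)\bullet\bigl(I+\epsilon\delta(t)A_{i(t)}\bigr)\;=\;I\bullet P(t)+\epsilon\delta(t)\,P(t)\bullet A_{i(t)}.
\]
Factoring out $I\bullet P(t)$ (which is strictly positive since $P(t)\succ 0$) gives the claimed inequality.

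No step appears to be a genuine obstacle; the only point that requires a moment of care is justifying $\Tr(AB)=A\bullet B$ when $A$ and $B$ are symmetric but do not commute (so $AB$ itself need not be symmetric). This is routine: expanding $\Tr(AB)=\sum_{i,j}A_{ij}B_{ji}$ and using $B_{ji}=B_{ij}$ gives exactly $A\bullet B$. With that identification, the proof is a one-line chaining of Claim~\ref{cl4-pII}, Fact~\ref{f2-pII}, and Claim~\ref{cl5-pII}.
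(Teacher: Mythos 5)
Your proof is correct and follows essentially the same route as the paper's: chain Claim~\ref{cl4-pII}, identify $\Tr\bigl((1+\epsilon)^{F(t)}(1+\epsilon)^{\delta(t)A_{i(t)}}\bigr)$ with $P(t)\bullet(1+\epsilon)^{\delta(t)A_{i(t)}}$, and apply Fact~\ref{f2-pII} with $B:=P(t)$, $C:=(1+\epsilon)^{\delta(t)A_{i(t)}}$, $D:=I+\epsilon\delta(t)A_{i(t)}$ (the latter inequality being Claim~\ref{cl5-pII}), then factor out $I\bullet P(t)$. The paper applies Fact~\ref{f2-pII} with exactly the same $B$, $C$, $D$, so there is no meaningful difference.
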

\begin{proof}
	We conclude from Claims~\ref{cl4-pII} and \ref{cl5-pII}, and Fact~\ref{f2-pII} applied with $B:=(1+\epsilon)^{F(t)}$, $C:=(1+\epsilon)^{\delta(t)A_{i(t)}}$ and $D:=I+\epsilon\delta(t)A_{i(t)}$, that 
	\begin{align*}
	I\bullet P(t+1)&\le(1+\epsilon)^{F(t)}\bullet(1+\epsilon)^{\delta(t)A_{i(t)}}\le (1+\epsilon)^{F(t)}\bullet\left(I+\epsilon\delta(t)A_{i(t)}\right)\\
	&=I\bullet P(t)\left(1+\frac{\epsilon\delta(t)P(t)\bullet A_{i(t)}}{I\bullet P(t)}\right).
	\end{align*}
\end{proof}

\begin{claim}\label{cl7-pII}
	$I\bullet X(t)\le I\bullet P(0)e^{\epsilon\sum_{t'=0}^{t-1}\frac{\delta(t')P(t')\bullet A_{i(t')}}{I\bullet P(t')}}$.
\end{claim}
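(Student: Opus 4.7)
The plan is to prove the claim by induction on $t$, using Claim~\ref{cl6-pII} as the one-step recurrence. The base case $t=0$ is trivial: the exponential sum is empty, so the right-hand side is $I\bullet P(0)$, and $I\bullet X(0) = 0 \le I\bullet P(0)$ since $P(0) = (1+\epsilon)^0 = I \succ 0$.

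For the inductive step, I would first iterate Claim~\ref{cl6-pII} to obtain
$$I\bullet P(t) \le I\bullet P(0)\prod_{t'=0}^{t-1}\left(1 + \frac{\epsilon\delta(t')P(t')\bullet A_{i(t')}}{I\bullet P(t')}\right).$$
Then, applying the elementary inequality $1+x \le e^{x}$ (valid for all real $x$) to each factor, and converting the resulting product into an exponential of a sum, I get
$$I\bullet P(t) \le I\bullet P(0)\prod_{t'=0}^{t-1}\exp\!\left(\frac{\epsilon\delta(t')P(t')\bullet A_{i(t')}}{I\bullet P(t')}\right)
= I\bullet P(0)\, e^{\epsilon\sum_{t'=0}^{t-1}\frac{\delta(t')P(t')\bullet A_{i(t')}}{I\bullet P(t')}}.$$

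The remaining step is to pass from the bound on $I\bullet P(t)$ to a bound on $I\bullet X(t)$. Since $X(t)$ is built up additively from the increments involving $P(t')$ (and the proof of Claim~\ref{cl3-pII} already uses the identification $A_i\bullet X(t') = \delta(t')(A_i\bullet P(t'))/(I\bullet P(t'))$, so in particular $I\bullet X(t) = \sum_{t'=0}^{t-1}\delta(t')(I\bullet P(t'))/(I\bullet P(t')) \cdot \text{(factor)}$), this reduces to a direct telescoping observation linking the $X$- and $P$-traces. I expect the main obstacle to be precisely this last bookkeeping step: reconciling the update rule for $X$ in the algorithm (whose written form $X(t)+\delta(t)P(t)/(I\bullet X(t))$ appears to be a typographical artifact that is instead treated as normalization by $I\bullet P(t)$ in the proof of Claim~\ref{cl3-pII}) with what is needed here. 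Once this normalization is fixed, the claim is a clean consequence of the induction plus $1+x\le e^x$, and no further technical machinery is required.
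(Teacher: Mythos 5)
Your approach is essentially the paper's own: start from Claim~\ref{cl6-pII}, iterate it to get the product bound on $I\bullet P(t)$, and then apply $1+x\le e^x$ factor by factor. The ``bookkeeping step'' you flag at the end is not a genuine obstacle once you accept that the $X(t)$ in the statement of Claim~\ref{cl7-pII} is a typo for $P(t)$ (i.e.\ the claim is really a bound on $\Tr\,P(t)$): this reading is forced by the proof of Claim~\ref{cl8-pII}, which applies Claim~\ref{cl7-pII} through the identity $I\bullet X(t)=\sum_j(1+\epsilon)^{\lambda_j(F(t))}=\Tr\bigl((1+\epsilon)^{F(t)}\bigr)=I\bullet P(t)$. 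So you have correctly diagnosed the notational slip, and with that fix your induction plus $1+x\le e^x$ is exactly the argument intended. Incidentally, the paper's own one-line proof contains two further typos that you implicitly repaired: it cites Claim~\ref{cl7-pII} (itself) where it means Claim~\ref{cl6-pII}, and the displayed inequality~\raf{eeee1} omits the factor $I\bullet P(t')$ on the right-hand side.
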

\begin{proof}
	Using the inequality $1+x\leq e^x$, valid for all $x\in\RR$, we get from Claim~\ref{cl7-pII}, 
	\begin{align}\label{eeee1}
	I\bullet P(t'+1)\le e^{\frac{\epsilon\delta(t')P(t')\bullet A_{i(t')}}{I\bullet P(t')}}.
	\end{align}
	Iterating \raf{eeee1} for $t'=0,1,\ldots,t-1$, we arrive at the claim.
\end{proof}
\begin{claim}\label{cl8-pII}
	$
	M(t) \ln(1+\epsilon)\le \ln n+ \epsilon\sum_{t'=0}^{t-1}\frac{\delta(t')P(t')\bullet A_{i(t')}}{I\bullet P(t')}.
	$
\end{claim}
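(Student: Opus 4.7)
\textbf{Plan for proving Claim~\ref{cl8-pII}.} The strategy is to sandwich $M(t)\ln(1+\epsilon)$ between a pointwise lower bound on $\ln(I\bullet P(t))$ and the iterated upper bound already established in Claim~\ref{cl7-pII}. Since $P(t)=(1+\epsilon)^{F(t)}$ is obtained by applying the monotone scalar function $x\mapsto(1+\epsilon)^x$ to the spectrum of the symmetric matrix $F(t)$, its eigenvalues are $(1+\epsilon)^{\lambda_j(F(t))}$. In particular, $\lambda_{\max}(P(t))=(1+\epsilon)^{\lambda_{\max}(F(t))}=(1+\epsilon)^{M(t)}$.

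Next, I would use the elementary fact that the trace of a positive semidefinite matrix dominates its largest eigenvalue, yielding
\begin{equation*}
I\bullet P(t)\;=\;\Tr(P(t))\;\ge\;\lambda_{\max}(P(t))\;=\;(1+\epsilon)^{M(t)}.
\end{equation*}
Taking logarithms on both sides gives the pointwise estimate $M(t)\ln(1+\epsilon)\le\ln\bigl(I\bullet P(t)\bigr)$.

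To finish, I plug in Claim~\ref{cl7-pII}, which states that
\begin{equation*}
I\bullet P(t)\;\le\;(I\bullet P(0))\,\exp\!\Bigl(\epsilon\sum_{t'=0}^{t-1}\tfrac{\delta(t')P(t')\bullet A_{i(t')}}{I\bullet P(t')}\Bigr),
\end{equation*}
and observe that at initialization $y(0)=0$, so $F(0)=0$ and $P(0)=(1+\epsilon)^{0\cdot I}=I$, giving $I\bullet P(0)=\Tr(I)=n$. Taking logarithms of the displayed inequality produces exactly
\begin{equation*}
\ln\bigl(I\bullet P(t)\bigr)\;\le\;\ln n+\epsilon\sum_{t'=0}^{t-1}\tfrac{\delta(t')P(t')\bullet A_{i(t')}}{I\bullet P(t')},
\end{equation*}
and combining with the previous paragraph yields the claim. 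There is no real obstacle here: the only non-trivial ingredients are the spectral mapping identity for matrix exponentiation (which follows by diagonalizing $F(t)$) and the trace-dominates-largest-eigenvalue bound for PSD matrices; everything else is bookkeeping and appeal to Claim~\ref{cl7-pII}.
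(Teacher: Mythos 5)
Your proof is correct and follows the same route as the paper: lower-bound $\Tr(P(t))$ by its largest eigenvalue $(1+\epsilon)^{M(t)}$ using the spectral mapping of $x\mapsto(1+\epsilon)^x$, plug in Claim~\ref{cl7-pII} together with $I\bullet P(0)=n$, and take logarithms. (The paper's version writes $I\bullet X(t)$ in a couple of places where $I\bullet P(t)$ is clearly intended; you correctly read through that typo.)
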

\begin{proof}
	Taking logs in Claim~\ref{cl7-pII}, and using that $I\bullet P(0)=n$ and  
	$$
	I\bullet X(t)=\sum_{j=1}^n\lambda_j((1+\epsilon)^{F(t)})=\sum_{j=1}^n(1+\epsilon)^{\lambda_j(F(t))}\ge (1+\epsilon)^{\lambda_{\max}(F(t))},
	$$
	we get 
	\begin{align*}
	M(t)\ln(1+\epsilon)&=\lambda_{\max}(F(t))\ln(1+\epsilon)\le \ln n+\epsilon\sum_{t'=0}^{t-1}\frac{\delta(t')P(t')\bullet A_{i(t')}}{I\bullet P(t')}.
	\end{align*}
\end{proof}
\begin{claim}\label{cl9-pII}
	$\frac{L(t_f)}{M(t_f)}\ge \frac{\ln(1+\epsilon)}{\epsilon}-\epsilon\ge 1-1.5\epsilon$ for $\epsilon\in(0,0.5]$.
\end{claim}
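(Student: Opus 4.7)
The plan is to combine the lower bound on $L(t_f)$ from Claim~\ref{cl3-pII} with the upper bound on $M(t_f)$ from Claim~\ref{cl8-pII}, and then invoke the termination condition from step~\ref{s-term-pII} to control the additive error.

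First, rearranging Claim~\ref{cl8-pII} (applied at $t=t_f$) gives
\[
\sum_{t'=0}^{t_f-1}\frac{\delta(t')P(t')\bullet A_{i(t')}}{I\bullet P(t')}\ \ge\ \frac{M(t_f)\ln(1+\epsilon)-\ln n}{\epsilon}.
\]
Feeding this into Claim~\ref{cl3-pII} and dividing by $M(t_f)$ (which is positive once the algorithm has executed at least one iteration and termination has been reached) yields
\[
\frac{L(t_f)}{M(t_f)}\ \ge\ \frac{\ln(1+\epsilon)}{\epsilon}\ -\ \frac{\ln n}{\epsilon\, M(t_f)}.
\]
Since the while-loop in step~\ref{s-term-pII} only terminates when $M(t_f)\ge T=\epsilon^{-2}\ln n$, the subtracted term is at most $\epsilon$, giving the first inequality $\frac{L(t_f)}{M(t_f)}\ge \frac{\ln(1+\epsilon)}{\epsilon}-\epsilon$.

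For the second inequality, it suffices to show $\frac{\ln(1+\epsilon)}{\epsilon}\ge 1-\tfrac12\epsilon$ on $(0,\tfrac12]$, which is equivalent to $\ln(1+\epsilon)\ge \epsilon-\tfrac12\epsilon^2$. This follows from the Taylor series $\ln(1+\epsilon)=\epsilon-\tfrac{\epsilon^2}{2}+\tfrac{\epsilon^3}{3}-\tfrac{\epsilon^4}{4}+\cdots$, since for $\epsilon\in(0,1)$ the tail $\tfrac{\epsilon^3}{3}-\tfrac{\epsilon^4}{4}+\cdots$ is an alternating series whose terms decrease in absolute value, and hence is nonnegative.

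There is no real obstacle here: everything follows mechanically once Claims~\ref{cl3-pII} and~\ref{cl8-pII} are in hand. The only subtle point is that one must use the termination condition $M(t_f)\ge T$ (not merely $M(t_f-1)<T$) to bound the error term $\tfrac{\ln n}{\epsilon M(t_f)}$; this is why the loop is written with the update of $M$ occurring before the test, ensuring $M(t_f)\ge T$ at exit.
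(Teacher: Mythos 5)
Your proposal is correct and follows exactly the same route as the paper: combine Claims~\ref{cl3-pII} and \ref{cl8-pII}, divide by $M(t_f)$, use the termination condition $M(t_f)\ge T=\epsilon^{-2}\ln n$ to bound the error term by $\epsilon$, and then verify the final elementary inequality. The only difference is that you spell out the justification of $\ln(1+\epsilon)\ge\epsilon-\tfrac12\epsilon^2$ via the alternating Taylor tail, whereas the paper simply asserts it.
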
	
\begin{proof}
	Using Claims~\ref{cl3-pII} and~\ref{cl8-pII}, we obtain after rearranging terms
	\begin{align*}
	\frac{L(t_f)}{M(t_f)}\ge \frac{\ln(1+\epsilon)}{\epsilon}-\frac{\ln n}{\epsilon M(t_f)}\\
	&\ge \frac{\ln(1+\epsilon)}{\epsilon}-\frac{\ln n}{\epsilon T}\tag{by the termination condition}\\
	&=\frac{\ln(1+\epsilon)}{\epsilon}-\epsilon \tag{by the definition of $T$}\\
	&\ge 1-1.5\epsilon \tag{$\because \frac{\ln(1+\epsilon)}{\epsilon} -\epsilon \geq 1-1.5\epsilon$ for $\epsilon \in (0,0.5].$}
	\end{align*}
\end{proof}
\begin{claim}\label{cl10-pII}
	$I\bullet X(t)=\b1^Ty(t)=\sum_{t'=0}^{t-1}\delta(t')$. Thus, the following (approximate strong duality) holds
	$$
	(1-1.5\epsilon)I\bullet \hat X\le \b1^T\hat y.
	$$
\end{claim}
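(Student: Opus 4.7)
\begin{proof}
The proof consists of two parts. First, I will establish the identity $I\bullet X(t) = \b1^T y(t) = \sum_{t'=0}^{t-1}\delta(t')$ by a straightforward induction on $t$, and then I will combine this identity with Claim~\ref{cl9-pII} to deduce the approximate strong duality inequality.

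For the identity, the base case $t=0$ is immediate from the initialization $X(0)=0$ and $y(0)=0$ in the algorithm. For the inductive step, using the update of $X$ in the algorithm,
\begin{align*}
I\bullet X(t+1) = I\bullet X(t) + \frac{\delta(t)\, I\bullet P(t)}{I\bullet P(t)} = I\bullet X(t) + \delta(t),
\end{align*}
while the update $y(t+1) = y(t) + \delta(t)\b1_{i(t)}$ gives $\b1^T y(t+1) = \b1^T y(t) + \delta(t)$. Both quantities therefore advance by the same increment $\delta(t)$ at each iteration, and by the inductive hypothesis they are equal to $\sum_{t'=0}^{t-1}\delta(t')$. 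This proves the first part of the claim.

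For the second part, using the definitions of $\hat X$ and $\hat y$ in the output step of Algorithm~\ref{MWU-cover-alg}, we have
\begin{align*}
I\bullet \hat X = \frac{I\bullet X(t_f)}{L(t_f)}, \qquad \b1^T\hat y = \frac{\b1^T y(t_f)}{M(t_f)}.
\end{align*}
Since $I\bullet X(t_f) = \b1^T y(t_f)$ by the first part, taking ratios gives
\begin{align*}
\frac{I\bullet \hat X}{\b1^T\hat y} = \frac{M(t_f)}{L(t_f)} \le \frac{1}{1-1.5\epsilon},
\end{align*}
where the last inequality is Claim~\ref{cl9-pII}. Rearranging gives $(1-1.5\epsilon)\,I\bullet \hat X \le \b1^T \hat y$, as desired.
\end{proof}

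The main obstacle here is essentially bookkeeping: one must recognize that the $I\bullet P(t)$ normalization in the $X$-update is precisely what makes the trace increment equal $\delta(t)$, matching the $\b1$-weight increment of $y$. Once this observation is made, the approximate strong duality follows immediately from the ratio bound already established in Claim~\ref{cl9-pII}.
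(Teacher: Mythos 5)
Your proof is correct and follows essentially the same approach as the paper: observe that both $I\bullet X(t)$ and $\b1^T y(t)$ increase by exactly $\delta(t)$ per iteration (because the normalization $I\bullet P(t)$ in the $X$-update cancels), then combine the resulting equality $I\bullet X(t_f)=\b1^T y(t_f)$ with the ratio bound $L(t_f)/M(t_f)\ge 1-1.5\epsilon$ of Claim~\ref{cl9-pII}. Your inductive framing is a slightly more explicit version of the paper's one-line increment argument, and you correctly read the $X$-update's denominator as $I\bullet P(t)$ (as used consistently elsewhere in the analysis), fixing what appears to be a typo in the pseudocode.
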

\begin{proof}
	The first claim follows by
	$$
	I\bullet \Delta X(t)=\frac{I\bullet \delta(t)P(t)}{I\bullet P(t)}=\delta(t)=\b1^Ty(t).
	$$
	From this we get from Claim~\ref{cl9-pII}
	$$
	\frac{\b1^T\hat y}{I\bullet \hat X}=\frac{\b1^Ty(t_f)}{M(t_f)}\bigg/\frac{I\bullet X(t_f)}{L(t_f)}=\frac{L(t_f)}{M(t_f)}\ge 1-1.5\epsilon,
	$$
	from which the second claim follows.
\end{proof}	

\subsubsection{Running Time per Iteration}

The most expensive step is the matrix exponentiation. It can be done (approximately) in time $O(n^3)$, by computing the eigenvalue decomposition of $F(t)$ (more efficient algorithms are available if $F(t)$ is sparse,  see, e.g. \cite{IPS11}). 
\begin{theorem}\label{thm:cover-MWU}
	For any $\epsilon>0$, Algorithm~\ref{MWU-cover-alg} outputs an  $O(\frac{n\log n}{\epsilon^2})$-sparse $O(\epsilon)$-optimal primal-dual pair in time $O( \frac{n^4\log n}{\epsilon^2}+\frac{n\cT\log n}{\epsilon^2})$, where $\cT$ is the time taken by a single call to the minimization oracle in step~\ref{s1-pII}. 
\end{theorem}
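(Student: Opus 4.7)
The plan is to assemble the theorem directly from the claims already proved for Algorithm~\ref{MWU-cover-alg}, since the analysis has been broken into three independent pieces (iteration count, primal-dual feasibility/approximate optimality, and per-iteration cost). There is no conceptually hard step left; the work is purely to multiply the bounds together correctly and to observe the sparsity.

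First, I would bound the support size and iteration count. By Claim~\ref{cl1-pII}, the algorithm terminates after at most $nT$ iterations, and $T=\epsilon^{-2}\ln n$, giving $O(\frac{n\log n}{\epsilon^2})$ iterations. Looking at the update rule $y(t+1)\gets y(t)+\delta(t)\b1_{i(t)}$ in the algorithm, at most one new index is added to $\supp(y)$ per iteration. Consequently $|\supp(\hat y)|=|\supp(y(t_f))|=O(\frac{n\log n}{\epsilon^2})$, which yields the claimed sparsity.

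Next, I would verify that $(\hat X,\hat y)$ is an $O(\epsilon)$-optimal primal-dual pair for~\raf{packII}-\raf{coverII}. Primal feasibility ($\hat X\succeq 0$ and $A_i\bullet\hat X\ge 1$ for all $i\in[m]$) and dual feasibility ($\sum_i\hat y_iA_i\preceq I$) come directly from Claim~\ref{cl2-pII}. Approximate strong duality is given by Claim~\ref{cl10-pII}, which asserts $(1-1.5\epsilon)I\bullet\hat X\le \b1^T\hat y$. Combined with weak duality $\b1^T\hat y \le I\bullet \hat X$ (from the feasibility of $(\hat X,\hat y)$ and $z_{II}^*=I\bullet X^*\le \sum \hat y_i A_i\bullet X^* \le \hat y^T \b1$ via the dual constraint), this meets the definition~\raf{pdf-II} of an $O(\epsilon)$-optimal pair once $\epsilon$ is small enough (after absorbing the $1.5$ into the $O(\cdot)$).

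Finally, for the per-iteration running time, the dominant operation is computing $P(t)=(1+\epsilon)^{F(t)}$ and extracting $\lambda_{\max}(F(t))$. Using the eigendecomposition of $F(t)$ both the matrix exponentiation and $\lambda_{\max}(\cdot)$ are done in $O(n^3)$ time; step~\ref{s-step-pII} requires $\lambda_{\max}(A_{i(t)})$, computable in $O(n^3)$ as well; step~\ref{s1-pII} is a single oracle call of cost $\cT$; and the updates to $X$, $y$, $M$ cost $O(n^2)$. Thus each iteration runs in $O(n^3+\cT)$ time, and multiplying by the $O(\frac{n\log n}{\epsilon^2})$ iteration bound yields total running time $O\bigl(\frac{n^4\log n}{\epsilon^2}+\frac{n\cT\log n}{\epsilon^2}\bigr)$, as claimed. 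The only mildly subtle point is the sparsity observation, but it is immediate from the additive update of $y$; no further obstacle remains.
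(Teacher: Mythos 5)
Your proposal assembles the theorem from Claims~\ref{cl1-pII}, \ref{cl2-pII}, and \ref{cl10-pII} together with the per-iteration cost discussion, which is exactly how the paper intends the theorem to be read, and the sparsity and running-time bookkeeping are correct. The one slip is in your parenthetical weak-duality chain: the inequality signs are reversed. What you actually need is $\b1^T\hat y \le z^*_{II}$, which follows from $z^*_{II} = I\bullet X^* \ge (\sum_i \hat y_i A_i)\bullet X^* \ge \b1^T\hat y$ (using $\sum_i \hat y_i A_i\preceq I$ with $X^*\succeq 0$ for the first step, and $A_i\bullet X^*\ge 1$ with $\hat y\ge 0$ for the second); combined with Claim~\ref{cl10-pII} this yields $I\bullet\hat X\le\frac{1}{1-1.5\epsilon}\b1^T\hat y\le\frac{1}{1-1.5\epsilon}z^*_{II}$, matching definition~\raf{pdf-II}.
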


\section{Reduction to Normalized Form}\label{normal}
When $C=I=I_n$, the identity matrix in $\RR^{n\times n}$ and $b=\b1$, the vector of all ones in $\RR^m$, we say that the packing-covering SDPs~\raf{packI}-\raf{coverI} and~~\raf{packII}-\raf{coverII} are in {\it normalized} form.
We recall below how a general packing covering pair of SDPs can be reduced to normalized form (see e.g., \cite{JY11}). We denote by $I$ the identity matrix of appropriate dimension.

We first note that under assumption~(A), strong duality holds for both pairs~\raf{packI}-\raf{coverI} and  \raf{packII}-\raf{coverII}. Indeed, it is enough for this (see, e.g.,\cite{NN94}) to show the strict feasibility of the primal (the so-called the Slater's condition). For~\raf{packI} (resp.,~\raf{coverII}), a strict primal feasible solution is given by $X:=\delta I$, where $\delta:=\frac1{2\max_{i}I\bullet A_i}$ (resp., $\delta:=\frac2{\min_{i}I\bullet A_i}$). 

\subsection{Reduction to Normalized Form for \raf{packI}-\raf{coverI}} \label{normal-I}

Under assumption~(B-I), we may further assume that

\begin{itemize}
	\item[\bf (C-I)]    $C\succ 0$ and hence $C=I$.
\end{itemize}	
If this is not the case, we slightly perturb the matrix $C$ to make it positive definite without changing the objective value by much\footnote{such a reduction has been used, e.g., in \cite{BGCL15}}. Let $C=L^TD L$ be the LDL-decomposition of $C$ and $\bar I$ be the $0/1$-diagonal matrix with ones only in the entries corresponding to the {\it zero} diagonal entries of the diagonal matrix $D$.  For $\delta>0$, define $D(\delta) := D+\delta \bar I\succ 0$, $C(\delta):=L^TD(\delta)L=C+\delta L^T\bar I L$, and let $z^*_I(\delta)$ be the common optimum value of \raf{packI}-\raf{coverI} when $C$ is replaced by $C(\delta)$, and $X^*(\delta)$ and $y^*(\delta)$ be corresponding optimal primal and dual solutions, respectively. 

By assumption (B-I), for any feasible solution $X$ to \raf{packI}, we have $I\bullet X\le\tau$. Also, since $X=\frac1{\max_iA_i\bullet I} I$ is feasible for \raf{packI}, $z^*_I\ge \zeta:=\frac{C\bullet I}{\max_iA_i\bullet I}$. Thus, for any desired accuracy $\epsilon>0$, selecting $\delta=\frac{\epsilon \zeta}{\tau L\bullet L^T}$ gives
\begin{align*}
z^*_I&\le z^*_I(\delta)=C\bullet X^*(\delta)+\delta L^T\bar I L\bullet X^*(\delta)=C\bullet X^*(\delta)+\delta \bar I\bullet L X^*(\delta)L^T\le C\bullet X^*(\delta)+\delta I\bullet L X^*(\delta)L^T\\
&= C\bullet X^*(\delta)+\delta L^TL\bullet X^*(\delta)\le C\bullet X^*(\delta)+\delta \lambda_{\max}(L^TL) I\bullet X^*(\delta)\le C\bullet X^*(\delta)+\delta I\bullet L^TL \cdot I\bullet X^*(\delta)\\ &\le C\bullet X^*(\delta)+\delta \tau L\bullet L^T=  C\bullet X^*(\delta)+\epsilon\zeta\le C\bullet X^*(\delta)+\epsilon z^*_I\le (1+\epsilon)z_I^*.
\end{align*}
It follows that $X^*(\delta)$ is feasible solution to~\raf{packI} with objective value $C\bullet X^*(\delta)\ge (1-\epsilon)z_I^*$. It follows also that $y^*(\delta)$ is feasible for~\raf{coverI} (as $\sum_iy^*(\delta)A_i\succeq C(\delta)\succ C$) with objective value $z^*_I(\delta)\le(1+\epsilon)z_I^*$.

\medskip

Finally,  writing $U:=L^{-1}$, and replacing $X$ by $X':=D(\delta)^{\frac12}LXL^TD(\delta)^{\frac12}$, $A_i$ by $A_i':=D(\delta)^{-\frac12}U^TA_iUD(\delta)^{-\frac12}$ and $C(\delta)$ by $C'=I$, we obtain an equivalent version of the perturbed \raf{packI}-\raf{coverI} in normalized form. Given an optimal primal solution $X'$ for the normalized problem we get a feasible solution $X=UD(\delta)^{-\frac12}X'D(\delta)^{-\frac12}U^T$ to  the perturbed ~\raf{packI} with the same objective value. Similarly an optimal dual solution for the normalized problem is an optimal solution for  the perturbed ~\raf{coverI} as $\sum_iy_iA_i'\succeq I \Leftrightarrow \sum_iy_iA_i\succeq C(\delta)$.   Note that this reduction can be implemented in $O(n^3+n^\omega m)$ time. Moreover, given a maximization oracle Max$(\cdot)$ for~\raf{packI}-\raf{coverI}, we obtain a maximization oracle for the normalized problem as follows: given $Y\in\SS_+^n$, we return Max$(Y')$ with $Y':= UD(\delta)^{-\frac12}YD(\delta)^{-\frac12}U^T$. (For simplicity we ignore roundoff errors resulting from computing square roots, which can be dealt with using standard techniques)

\bigskip

\subsection{Reduction to Normalized Form for \raf{packII}-\raf{coverII}}\label{normal-II}

For a matrix $B\in\RR^{n\times n}$, define $\supp(B):=\{x\in\RR^n:~Bx\ne 0\}$.

We may assume that

\begin{itemize}
	\item[{\bf (C-II)}] $\supp(C)\supseteq\bigcup_{i}\supp(A_i) $.  
\end{itemize}
If this is not the case, that is, there is an $i\in[m]$ such that $\supp(A_i)\not\subseteq\supp(C)$ then $y_i=0$ for any feasible solution $y$ to~\raf{packII}. Indeed, the existence of an $x\in\RR^n$ such that $A_ix\ne0$ and $Cx=0$ implies that $y_ix^TA_ix\le y_ix^TA_ix+\sum_{j\ne i}y_jx^TA_jx\le x^TCx=0$, giving that $y_i=0$. Furthermore, the existence of such $x$ allows us to remove the $i$th inequality from \raf{coverII}; given an optimal solution $X$ to the reduced covering system, we obtain a feasible solution with the same objective value (and hence optimal) to the original system by setting $X=X'+\frac{xx^T}{A_i\bullet xx^T}$. 
Note that (by Farkas' Lemma \cite[Chapter 7]{S86}) we can check if (C-II) holds by solving $m$ linear systems of equations  $C\Gamma=A_i$, for $i=1,\ldots,m$. This can be done in $O(n^3+n^\omega m)$ time, where $\omega$ is the exponent of matrix multiplication, by  computing the LDL-decomposition of $C$.  

We may assume next that
\begin{itemize}
	\item[{\bf (D-II)}] $\supp(C)=\RR^n\setminus\{0\}$ and hence $C=I$.  
\end{itemize}
Suppose that (D-II) does not hold. Let $C=L^TD L$ be the LDL-decomposition of $C$ and write $U=[U'~|~U'']:=L^{-1}$, where $U'$ is the submatrix of $U$ whose columns correspond to the columns of the submatrix $D'$ containing the {\it positive} diagonal entries of the diagonal matrix $D$. Then $U^TC U=D$ implies that $(U'')^TCU''=0$, which in turn implies that $CU''=0$ (since $C\succeq0$). The latter condition gives by (C-II) that $A_iU''=0$ for all $i$, and consequently, 
\begin{align}\label{e2-1}
U^TA_iU=\left[ 
\begin{array}{l}
(U')^T\\ \hline 
(U'')^T
\end{array}
\right]
A_i
\left[ 
\begin{array}{l|l}
U' & U''
\end{array}
\right]=\left[\begin{array}{l|l}
(U')^TA_iU' & (U')^TA_iU''\\ \hline
(U'')^TA_iU'&(U'')^TA_iU''
\end{array}
\right]=\left[\begin{array}{l|l}
(U')^TA_iU' & 0 \\ \hline
0&0
\end{array}
\right].
\end{align}
It follows that
\begin{align}\label{e2-2}
\sum_iy_iA_i\preceq C&\Longleftrightarrow\sum_iy_iU^TA_iU\preceq U^TCU=D=\left[
\begin{array}{l|l}
D'&0\\ \hline
0& 0
\end{array}
\right]\nonumber\\
&\Longleftrightarrow \sum_iy_i(U')^TA_iU'\preceq D'\nonumber\\
&\Longleftrightarrow \sum_iy_i(D')^{-\frac12}(U')^TA_iU'(D')^{-\frac12}\preceq I.
\end{align}
Thus, replacing $A_i$ by $A_i':=(D')^{-\frac12}(U')^TA_iU'(D')^{-\frac12}$ and $C$ by $I$, we obtain an equivalent dual problem in normalized form whose optimal solution $y$ is optimal for~\raf{packII}. Also, a feasible primal solution $X'$ to the corresponding normalized primal problem can be transformed to a feasible solution  $X=U'(D')^{-\frac12}X'(D')^{-\frac12}(U')^T$ to~\raf{coverII} with the same objective value, as  $C\bullet X=C\bullet   U'(D')^{-\frac12}X'(D')^{-\frac12}(U')^T=(D')^{-\frac12}(U')^TCU'(D')^{-\frac12}\bullet X'=I\bullet X'$, and 
$
A_i\bullet X=A_i\bullet   U'(D')^{-\frac12}X'(D')^{-\frac12}(U')^T=(D')^{-\frac12}(U')^TA_iU'(D')^{-\frac12}\bullet X'=A_i'\bullet X'$. Conversely, write $L^T=[(L')^T|(L'')^T]$, where $L'$ is the submatrix of $L$ whose rows correspond to the rows of the submatrix $D'$, and note by definition that $U'L'+U''L''=I$. Then, given any feasible solution $X$ to ~\raf{coverII}, a feasible solution to the normalized primal problem with the same objective value is given by $X':=(D')^{\frac12} L'X(L')^T(D')^{\frac12}$ since
\begin{equation*}\label{e2-3}
A_i'\bullet X'=(L')^T(U')^TA_i U'L' \bullet X=(I-U''L'')^TA_i(I-U''L'')\bullet X=A_i\bullet X,
\end{equation*}
and similarly $I\bullet X'=I\bullet(D')^{\frac12}L' X(L')^T(D')^{\frac12}=(L')^TD'L'\bullet X=C\bullet X$.
This step takes $O(n^3+n^\omega m)$ time. Moreover, given a minimization oracle Min$(\cdot)$ for~\raf{packII}-\raf{coverII}, we obtain a minimization oracle for the normalized problem as follows: given $Y\in\SS_+^n$, we return Max$(Y')$ with $Y':= U'(D')^{-\frac12}Y(D')^{-\frac12}(U')^T$.

\medskip

We may next make the following further assumption on ~\raf{npackII}-\raf{ncoverII}:

\begin{itemize}
	\item[{\bf (B-II$'$)}] $\frac{\epsilon\beta }{2n}\le\lambda_{\min}(A_i)\le\lambda_{\max}(A_i)\le\frac{3n\beta}{\epsilon}$, for all $i\in[m]$, where $\beta:=\min_i\lambda_{\max}(A_i)$.
\end{itemize}
(The same argument shows (B-II).)
Indeed, let $J:=\{i\in[m]:~\lambda_{\max}'(A_i)\le\frac{n\beta'}{\epsilon}\}$, and for $i\in J$, let $\widetilde A_i:=A_i+\frac{\epsilon\beta'}{n}I$, where $\lambda_{\max}'(A_i)$ is a $\frac12$-approximation $\lambda'_{\max}(A_i)$ of $\lambda_{\max}(A_i)$ and $\beta':=\min_{i}\lambda_{\max}'(A_i)$.  Consider the following pair of packing-covering SDP's:

{\centering \hspace*{-18pt}
	\begin{minipage}[t]{.47\textwidth}
		\begin{alignat}{3}
		\label{ncoverII-}
		\tag{\nC-$\widetilde{II}$} \quad& \displaystyle \tilde z_{II} = \min\quad I\bullet X\\
		\text{s.t.}\quad & \displaystyle \widetilde A_i\bullet X\geq 1, \forall i\in J\nonumber\\
		\qquad &X\in\RR^{n\times n},~X\succeq  0\nonumber
		\end{alignat}
		
	\end{minipage}
	\,\,\, \rule[-16ex]{1pt}{16ex}
	\begin{minipage}[t]{0.47\textwidth}
		\begin{alignat}{3}
		\label{npackII-}
		\tag{\nP-$\widetilde{II}$} \quad& \displaystyle \tilde z_{II} = \max\quad \sum_{i\in J}y_i\\
		\text{s.t.}\quad & \displaystyle \sum_{i\in J}y_i\widetilde A_i\preceq I\nonumber\\
		\qquad &y\in\RR^m,~y\geq 0.\nonumber
		\end{alignat}
\end{minipage}}

\noindent Note that $\lambda_{\min}(\widetilde A_i)\ge\frac{\epsilon\beta'}{n}\ge\frac{\epsilon\beta}{2n}$, while for $i\in J$, $\lambda_{\max}(\widetilde A_i)=\lambda_{\max}(A_i)+\frac{\epsilon\beta'}{n}\le\frac{2n\beta}{\epsilon}+\frac{\epsilon\beta}{n}\le \frac{3n\beta}{\epsilon}$.

Let us also note that $\frac1\beta\le z^*_{II}\le\frac n\beta$  (see.e.g., \cite{JY11}), as $\frac1\beta I$ is feasible for~\raf{ncoverII}, and if $X^*$ is optimal for~\raf{ncoverII}, then for $i\in\argmin_{i'}\lambda_{\max}(A_{i'})$, we have $I\bullet X^*\ge \frac{A_i\bullet X^*}{\lambda_{\max}(A_{i})}=\frac{A_i\bullet X^*}\beta\ge\frac1\beta$.

 Let us note next that if $X$ is feasible for~\raf{ncoverII}, then it is also feasible for~\raf{ncoverII-}. Hence,  $\tilde z_{II}\le z_{II}^*$. 
On the other hand, suppose $\widetilde X$ is optimal for~\raf{ncoverII-}. Then, $X:=\frac1{1-\epsilon}\left(\widetilde X+\frac{\epsilon}{\beta' n}I\right)$ is feasible for  ~\raf{ncoverII}, since for $i\in J$,
\begin{align*}
A_i\bullet X&=\frac1{1-\epsilon}\left(\widetilde A_i-\frac{\epsilon\beta'}{n}I\right)\bullet\left(\widetilde X+\frac{\epsilon}{\beta' n}I\right)\\
&=\frac1{1-\epsilon}\left(\widetilde A_i\bullet  \widetilde X-\frac{\epsilon\beta'}{n}I\bullet\widetilde X+\frac{\epsilon}{\beta' n}I\bullet\widetilde A_i-\frac{\epsilon^2}{n^2}I\bullet I\right)\\
&\ge \frac1{1-\epsilon}\left(1-\epsilon+\frac{\epsilon}{n}-\frac{\epsilon^2}{n}\right)\tag{$\because I\bullet\widetilde X=\tilde z_{II}\le z_{II}^*\le\frac n\beta\le \frac{n}{\beta'}$ and $I\bullet\widetilde A_i\ge I\bullet A_i\ge\beta\ge\beta'$}> 1,
\end{align*}
while for $i\not\in J$, we have  $A_i\bullet  X\ge A_i\bullet \frac{\epsilon}{\beta' n}I\ge\frac{\epsilon}{\beta' n}\lambda_{\max}(A_i)\ge\frac{\epsilon}{\beta' n}\lambda'_{\max}(A_i)>1.$ 
Moreover, $I\bullet X=\frac1{1-\epsilon}\left(I\bullet \widetilde X+\frac{\epsilon}{\beta'}\right)\le\frac1{1-\epsilon}\left(I\bullet \widetilde X+\frac{2\epsilon}{\beta}\right)\le\frac1{1-\epsilon}(\tilde z_{II}+2\epsilon z^*_{II})\le\frac{1+2\epsilon}{1-\epsilon}z^*_{II}$. 
Obviously, given a feasible solution $\tilde y$ to \raf{npackII-}, it can be extended to a feasible solution $y$ to \raf{npackII}, with the same objective value, by setting $y_i=\tilde y_i$ for $i\in J$ and $y_i=0$ for $i\not\in J$.   
To implement this step of the reduction, we can use Lanczos' algorithm to compute $\lambda_{\max}'(A_i)$, for $i\in[m]$. The running time is $O(mn^2)$.

\bigskip

\noindent{\bf Acknowledgment.}~We thank Waleed Najy for his help in the proofs in Section~\ref{pack-analysis}, 
and for many helpful discussions.


\end{document}